\DeclareMathOperator*{\argmin}{arg\,min}
\newtheorem{theorem}{Theorem}[section]
\newtheorem{lemma}[theorem]{Lemma}
\newtheorem{definition}[theorem]{Definition}
\title{Parameterized Neural Networks for Finance}
\author[1,*]{Daniel Oeltz}
\author[1]{Jan Hamaekers}
\author[2]{Kay F. Pilz}
\affil[1]{Fraunhofer Institute for Algorithms and Scientific Computing SCAI, Schloss Birlinghoven, 53757 Sankt Augustin, Germany.}
\affil[2]{kinetic mind GmbH}
\affil[*]{Corresponding author: Daniel Oeltz, daniel.oeltz@scai.fraunhofer.de}
\begin{document}
\maketitle
\begin{abstract}
We discuss and analyze a neural network architecture, that enables learning a model class for a set of different data samples rather than just learning a single model for a specific data sample. In this sense, it may help to reduce the overfitting problem, since, after learning the model class over a larger data sample consisting of such different data sets, just a few parameters need to be adjusted for modeling a new, specific problem. After analyzing the method theoretically and by regression examples for different one-dimensional problems, we finally apply the approach to one of the standard problems asset managers and banks are facing: the calibration of spread curves. The presented results clearly show the potential that lies within this method. Furthermore, this application is of particular interest to financial practitioners, since nearly all asset managers and banks which are having solutions in place may need to adapt or even change their current methodologies when ESG ratings additionally affect the bond spreads.

%Here, this paper  presents a theorem that gives a theoretical justification for this argument. We believe that this may help in many financial applications where one may have different data sets forming a certain problem class but each data set may not contain enough data to calibrate a neural network. Another advantage of this approach is that it can be fast and robustly calibrated to new data, which is also relevant for many problems in Finance. The paper analyses  the proposed method for some one-dimensional examples that are quite simple but give at least in our opinion good insights into the working and capabilities of the approach. Finally, we also show the possible potential that lies in the method for a very practical example:  The calibration of bond spread curves. This example may be of special interest since nearly all asset managers and banks having solutions currently in place may need to adapt or even change their current solutions as ESG ratings will start to influence the bond spreads.
\end{abstract}
\section{Introduction}
Deep learning has shown impressive results over the past two decades in various fields, such as image recognition and classification, as well as natural language processing. A lot of training data is usually needed to calibrate neural networks for these tasks, but, unfortunately, financial data is often quite limited. For example, consider calibrating a neural network for forecasting the distribution of returns corresponding to a certain stock conditioned on past returns. Although there may be a long history of daily closing stock prices, maybe even 40 years for certain companies, we end up with only approximately 10.000 data points, which is not that much for calibrating a neural network, in particular, because such problems often have a low signal-to-noise ratio. Moreover, the time series may not be stationary either, which means that past data may not be a good input for training a network that predicts future data. Hence, the total amount of data that is suitable for fitting, has to be further reduced by selecting sub-periods of the available data. On the other hand, there often exists a global structure in the data, that can be found in all periods, maybe even over different stocks, and the question is, how can we make use of such structures when training the network. %\marginpar{Anwendungen aus der Chemie? Polymers \cite{kuenneth2021polymer}}

Here, multi-task learning (MTL) comes into play. Multi-task learning has been successfully applied in different fields such as time series forecasting, in general \cite{Nikentari2022}, weather forecasting and power generation modeling \cite{Lencione2021, DoradoMoreno2020, Schreiber2021, Schreiber2022}, computer vision \cite{Girshick2015, Liu_2019_CVPR}, natural language processing \cite{Collobert2008, Lu_2020_CVPR} and many other applications, see \cite{Caruana1996AlgorithmsAA}. MTL can improve the generalization capabilities resulting in a lower risk of overfitting, see \cite{Vafaeikia2020, Vandenhende2020, Baxter2019, Baxter1997}. Additionally, learning new tasks may be faster and more robust using MTL. A drawback of these methods is the so-called \emph{negative transfer} which describes the effect of getting larger errors in less difficult tasks, compared to single neural network models, due to the larger errors involved by the difficult tasks \cite{Nikentari2022, DoradoMoreno2020, Lee2017, Liu2019a}.

In this paper we discuss and analyze the application of a very simple MTL architecture as introduced in \cite{Schreiber2021} under the name task embedding network, which we believe is a promising approach to solving several problems in finance. Note, that we prefer the name parameterized neural network (PNN) in our context, for reasons that are discussed later.

The paper is organized as follows. In section \ref{MTL-Overview} we give a brief overview of different MTL architectures and describe the PNN design. We also discuss the generalization capability of the PNN following the approach of \cite{Baxter2019}. Section \ref{Numerical-Experiments} presents several simple experiments that give insight into the functioning of the PNN and its performance on certain problem classes. The section concludes with a more complex example inspired by the problem of calibrating spread curves to bond market quotes. In the final section, the results are summarized, and an outlook on potential applications in the area of Finance as well as future research are given.

\section{Multi-Task Learning}\label{MTL-Overview}
\subsection{Architectures}
In general, MTL can be subdivided into two categories \cite{Vandenhende2020},
\emph{soft parameter sharing} and \emph{hard parameter sharing}. In soft parameter sharing, we calibrate a network to each task separately in a way, that the network parameters are somehow related, e.g. by penalizing deviations between parameters of different networks \cite{Duong2015, Yang2016}.
Hard parameter sharing goes back to \cite{Caruana1993} and PNNs belong to this category. In hard parameter sharing, the neural networks for every single task share a certain subset of their parameters. Here, a lot of different architectures exist \cite{Vafaeikia2020, Ruder2017}, and as discussed in \cite{Vandenhende2020} may be further categorized into \emph{encoder-based} and \emph{decoder-based} architectures. Encoder-based architectures share the input and first layers (bottom layers) of the networks.  This encoder-based approach is also known as internal representation learning and \cite{Baxter2019} was able to show that this lowers the risk of overfitting compared to calibrating a single network to each task separately.
Decoder-based architectures, which PNNs belong to, apply task-specific networks using their output as input for a single network that is task-independent. 

The selection of the architecture may depend on different considerations, and given a problem (to our knowledge) there are no general criteria available that determine which architecture is more suitable. Encoder-based approaches seem to dominate the field of multi-task learning, especially in the area of computer vision. Recent work compares both architectures on different kinds of problems \cite{Vandenhende2020, Zhang2021}.
However, as we will discuss in the next section, we think that many problems in finance can benefit from the very simple multi-task architecture that is discussed in the following.

\subsection{Parameterized Neural Networks}\label{sec-task-embedding}

\begin{figure}
\centering
\includegraphics[scale=0.5]{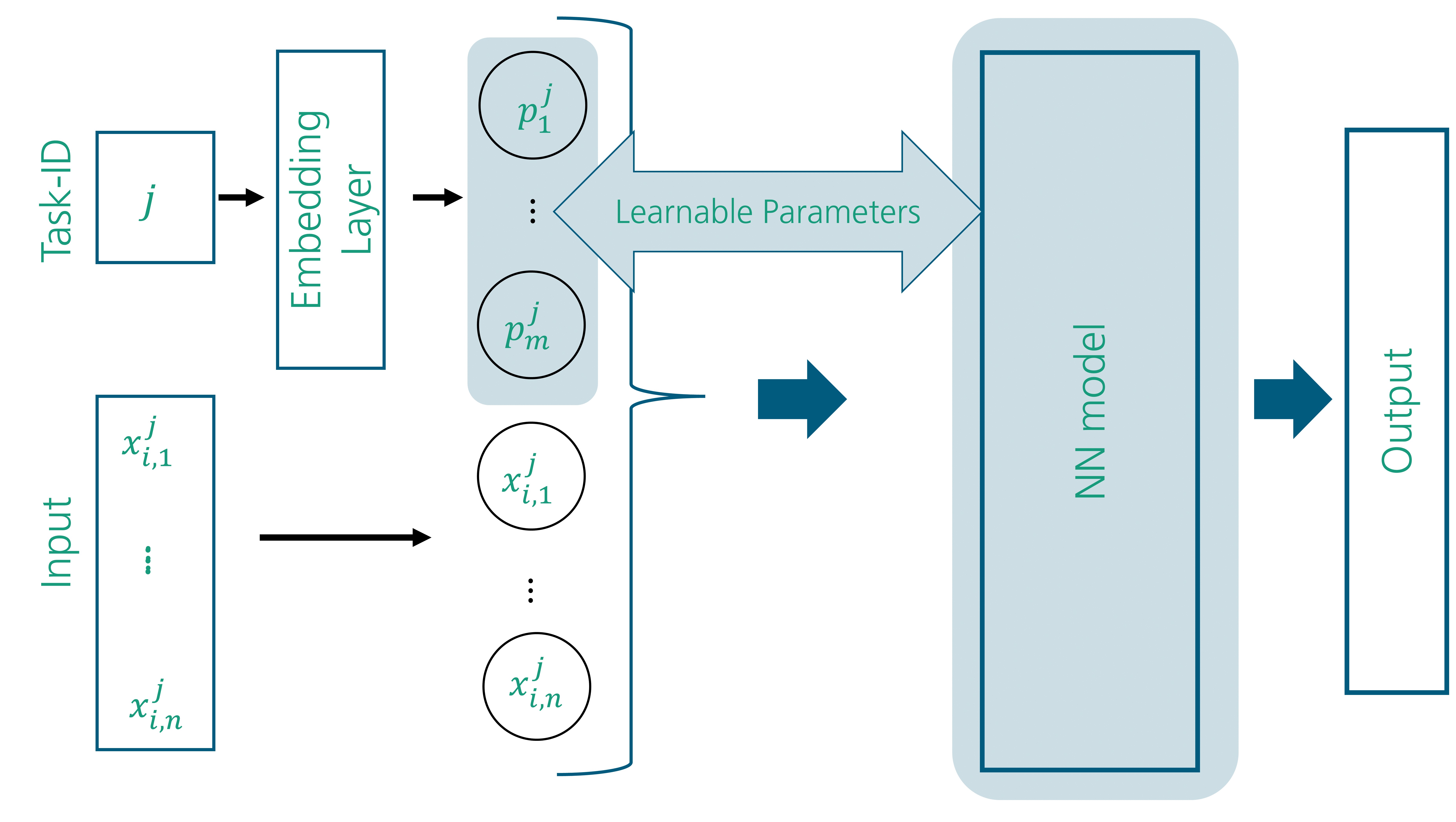}
\caption{Parameterized network architecture. The input is concatenated with a task-specific vector that is individually trained while the remaining network architecture is unchanged (except for a higher dimension in input space).} \label{fig-tast_embedding-architecture}
\end{figure}

We discuss the PNN architecture and motivate why it is a good candidate for MTL in the financial field. To our knowledge, PNNs were proposed in \cite{Schreiber2021} the first time, under the name task embedded networks. The tasks are represented by integers, very similar to word embedding \cite{Mikolov2013}, and an embedding layer is used to map these integers to a corresponding parameter vector consisting of real values. The concatenation of this parameter vector with the original input data defines the input to the main neural network, which is the same for all tasks (in terms of the network weights), see figure \ref{fig-tast_embedding-architecture}. Each of these parameter vectors is optimized during training by optimizing the embedding layer weights.

To illustrate this approach, let us assume we have $n$ tasks, and a task $1\leq i\leq n$ consists of $m$ samples $\{z_{ij}\}_{j=1}^{m}$, where $z_{ij} = (x_{ij},y_{ij}) \in Z\subset \mathbb{R}^{k_1}\times \mathbb{R}^{k_2}$ for task independent $k_1$ and $k_2$. For ease of notation, each task has the same number of samples. The $x_{ij}$ are the (regression or approximation) function inputs, and the $y_{ij}$ are the outputs.

Further, assume the embedding layer has dimension $l$, which means that we have a task-specific parameter vector $p_i\in\mathbb{R}^l$ for each task $1\leq i\leq n$. Given a task $i$, the neural network is a function $g:\mathbb{R}^{k_1} \rightarrow \mathbb{R}^{k_2}$, that also can be regarded as $g(x;p_i)$, the approximating function of task $i$. More generally, for any parameter vector $p$ (not necessarily corresponding to a task from the training set) the neural network $g(x;p)$ represents a family of functions depending on the parameter $p$. 

The reason why we prefer the term parameterized neural network in contrast to the term task-embedded neural network is, that the problems we have in mind are not related to the solution of different tasks (as in classical multi-task learning), but rather related to finding an optimal solution within a family of parameterized functions.
% Additionally, we believe that this approach may be not as efficient as other approaches for real multi-task problems, such as e.g. representation learning as described in \cite{Baxter2019a}.

It is quite obvious that this approach is very generic and independent of the basic network architecture that uses task embedding as input.
Therefore, without many changes, one can directly apply this approach not only to simple feed-forward multi-layer networks but also to more sophisticated architectures such as generative methods (VAE and GAN), mixture-density networks (MDN), and recurrent networks, as well as to reinforcement learning. We will give an outlook of promising applications in the area of finance at the end of this work.
Although being very generic on the one hand, the approach is restrictive in terms of the structure of the different tasks. Obviously, the architecture makes only sense for multiple tasks with the same (or similar) input and output spaces.  
This might be a substantial limitation for some use cases of multi-task learning (even in finance), but we believe that there are a lot of applications to which this approach can be successfully applied, and does significantly help to get stable models based on a limited amount of available training data.

We discuss certain properties of this approach in the following sub-sections.

\subsubsection{Fast Calibration to New Data}
After model calibration, we end up with a family of functions parameterized by the parameter vector $p$. Whenever we have to calibrate the model to new data represented by a new task integer, we just have to find a new parameter $p\in \mathbb{R}^l$ and get a model for this task. 
The start value for the parameter is usually important for the speed of convergence when using stochastic gradient methods. Here, one may use either the parameter of the last calibration (for instance, when the tasks are generated by a time series and expected to be auto-correlated) or simply the average over all parameters that have been calibrated so far. In our numerical experiments, we use the latter approach to calibrate to the test data, where the average over parameters is taken from the training data. This approach gave good results just after a few steps of gradient descent for our experiments.

\subsubsection{Interpretability and Validation}
Since we have a parameterized family of functions, we can write the neural network model as $g(x,p)$, the function of the inputs $x \in \mathbb R^{k_1}$ and the parameter $p \in \mathbb R ^l$. If $g$ is continuous on $\mathbb{R}^{k_1 + l}$, it is Lipschitz continuous on each compact subset $\mathcal{C}$ and therefore it easily follows that for any two compact subsets $\mathcal{C}_1\subset \mathbb{R}^{k_1}$ and  $\mathcal{C}_2\subset \mathbb{R}^{l}$ there is a constant $L(\mathcal{C}_1,\mathcal{C}_2)$ such that 
\[
\|g(x;p_1)-g(x;p_2)\|\leq L(\mathcal{C}_1,\mathcal{C}_2)\|p_1-p_2\| \mbox{ for all } x \in \mathcal{C}_1, p_1,p_2\in \mathcal{C}_2.
\]
Hence, given two parameters $p_1$ and $p_2$ we directly get an upper bound on the distance between the two models corresponding to $p_1$ and $p_2$. We could use techniques to compute bounds on the Lipschitz constant, see \cite{Scaman2018} and the references therein, for a given network $g$ or use methods to build Lipschitz-constrained networks \cite{Gouk2020} to explicitly bound the Lipschitz constant.  But even if we do not explicitly know the Lipschitz constant, this property may help to understand and validate new model parameters derived from training on new data by comparing them to previous results and considering tasks that were similar in the past. As we will see in the numerical experiments the parameters may also be used to identify certain regimes in the tasks and allow to cluster them.

It is another advantage of the PNN that a set of extensively validated and tested models with parameters in a certain range, $p \in [p_{\rm low}, p_{\rm up}]$, transfer their validity to new models with calibrated parameters in the same range. This may save computational costs and time for fully re-testing these models.

\subsubsection{Separation of Regularization}
The PNN allows separating the regularization regarding the tasks and the parameters. For instance, we may apply Gaussian noise to the $x$ inputs but leave the parametrization $p$ unchanged. If we penalize the first derivatives with respect to the inputs $x$ by incorporating them into the network output and cost function, as in \cite{Huge2020}, we can enforce different restrictions to the $x$ features than for the parameters. This gives additional control and flexibility which may be useful in certain situations.

\subsection{Theoretical Considerations} \label{Theory}

\begin{figure}\centering 
\includegraphics[width=0.5\textwidth]{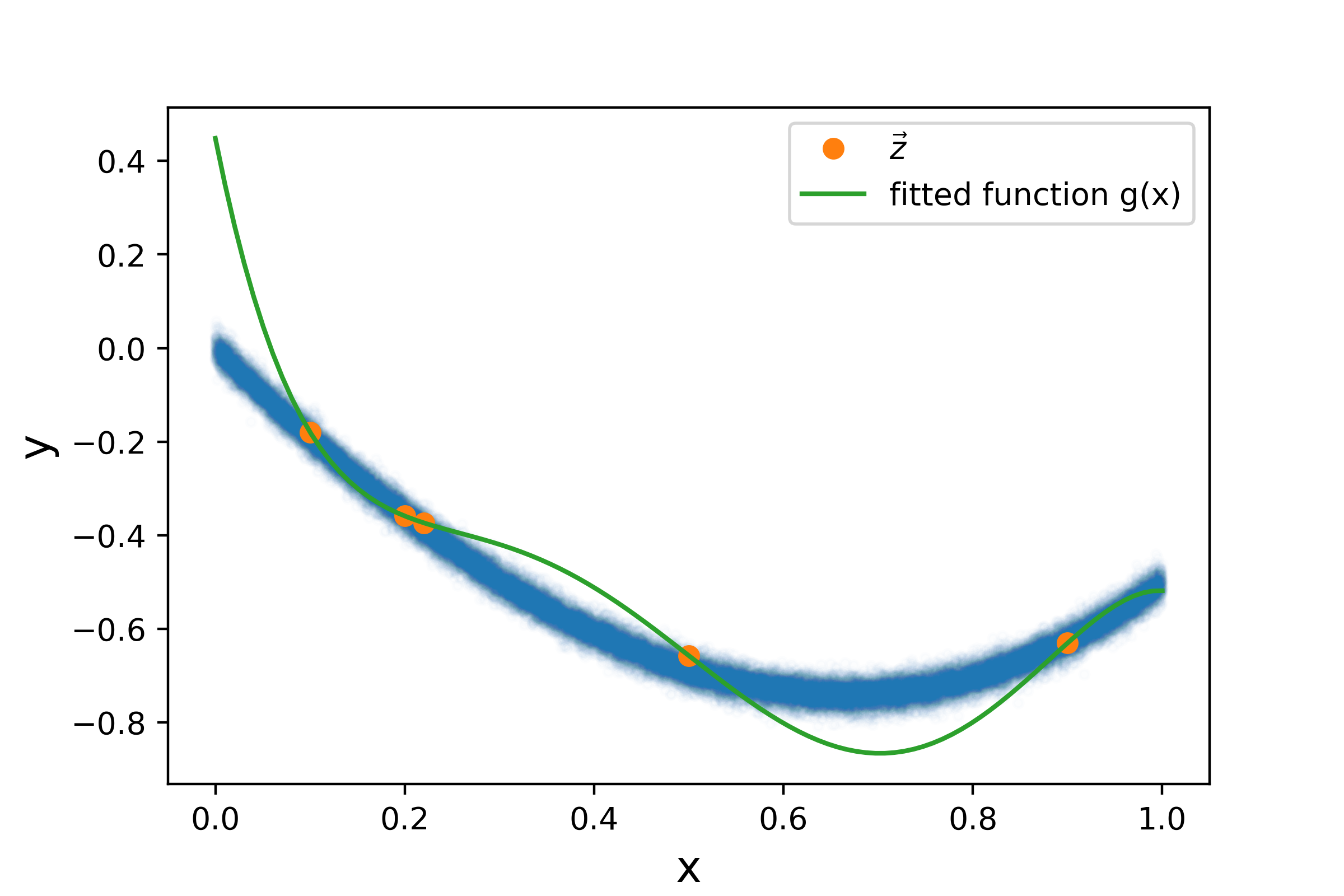}
\caption{Simple example for a case with overfitting, where the empirical loss ($\langle l_g\rangle_{\vec{z}}=0$) and the true loss ($\langle l_g\rangle_P\approx 0.01$) differ substantially. The blue dots represent samples from $P$ while the orange circles mark the training points. The loss function $l(y_1,y_2):=|y_1-y_2|^2$ is the mean squared error.} \label{Fig-True-vs-Empirical-Loss}
\end{figure}
In this section, we discuss the generalization property of the PNN and introduce some notations. The main result of this section is theorem \ref{Generalization-Property} that gives insight into the impact of using MTL compared to calibrating a model to a single task. Here, we use mainly the results from \cite{Baxter2019a} on representation learning and adapt the approach to the PNN.

To introduce the notation and the basic principle, let us first consider the case of learning one task, i.e. data consisting of vectors $\vec{z}_j = (x_j,y_j) \in \mathbb{R}^{k_1}\times\mathbb{R}^{k_2}$. We assume that the training set $\vec{z}$ is created by a probability distribution $P$ on $Z\subset \mathbb{R}^{k_1}\times\mathbb{R}^{k_2}$ and define $Z^m$ as the set of all samples of length $m$ according to $P$, such that $\vec{z} \in Z^m$.
Let $l:\mathbb{R}^{k_2}\times \mathbb{R}^{k_2}\rightarrow \mathbb [0,M]$ be a \emph{loss function} with fixed $M>0$. For a function $g: \mathbb{R}^{k_1} \rightarrow \mathbb{R}^{k_2}$ we define the empirical loss by
\begin{equation}\label{Empirical-loss}
    \langle l_g\rangle_{\vec{z}}:=\frac{1}{m}\sum_{j=1}^m l(y_j,g(x_j))
\end{equation}
Consider $g(x;\theta)$ a neural network with parameters $\theta\in \Theta$. The typical learning task is to determine $\theta^\star$ s.t.
\begin{equation}
    \theta^\star = \argmin_{\theta\in \Theta} \langle l_g \rangle_{\vec{z}}
\end{equation}
which defines a learning algorithm 
\begin{equation}\label{Learning-Algo}
    \mathcal{A}:\bigcup_{m\geq 1} Z^m \rightarrow \{g(x,\theta)\mid \theta\in \Theta\}=:\mathcal{H}.
\end{equation}
By $l_{\mathcal H}$ we denote the family of loss functions that are defined by all $g(\cdot ) \in \mathcal H$.
In equation (\ref{Empirical-loss}) the empirical loss is computed on the training set only and may overestimate the model performance due to overfitting, which means that although the empirical loss is small, the true loss defined by
\begin{equation}
    \langle l_g\rangle_P:=\int_Z l(y,g(x))dP
\end{equation}
can be quite large. As an example see figure \ref{Fig-True-vs-Empirical-Loss}, which shows a fit with zero empirical loss but a large true loss, hence is overfitting. For having a model that generalizes to new data, it is quite essential that the empirical loss used in the learning algorithm defined by (\ref{Learning-Algo}) is close to the true loss. Statistical learning theory provides bounds for the difference between these two losses, depending on the complexity of the learning model and the number of training points. To measure the distance between both losses, we use a family of metrics $d_\nu:\mathbb{R}_+\times\mathbb{R}_+$ introduced in \cite{Haussler1992},
\[
d_\nu (x,y):=\frac{|x-y|}{\nu+x+y}.
\]
One gets the following upper bound under suitable conditions for $\nu>0$ and $0<\alpha<1$,
\begin{equation}
    {\rm Pr}\{\vec{z}\in Z^m: \exists l_g \in l_{\mathcal{H}}: d_\nu\left( \langle l\rangle_P, \langle l\rangle_{\vec{z}}\right)>\alpha\} \leq C(\alpha,\nu, \mathcal{H}) e^{-\frac{\alpha^2\nu m}{8M}},
\end{equation}
where $C(\alpha,\nu, \mathcal{H})$ is a constant depending on $\alpha$, $\nu$ and the so-called $\varepsilon$-capacity of $\mathcal{H}$, the set of neural networks defined in (\ref{Learning-Algo}), see \cite{Baxter2019a} for further details. 
Therefore, to guarantee with probability $\delta$ that the difference between empirical and true loss does not differ more than $\alpha$ with respect to $d_\nu$, it suffices to have $m$ training points, with
\[
m>\frac{8M}{\alpha^2\nu}\ln \left( \frac{C(\alpha,\nu, \mathcal{H})}{1-\delta}\right).
\]
For a proof of this bound and further details, see \cite{Baxter2019a} and the references therein. 
We will now consider the case of a PNN with fixed parameter dimension $l$ and multiple tasks. Recall that the PNN calibrated to $n$ tasks, where each task has $m$ data points, produces a sequence of $n$ different functions $\vec{g}:=\left(g(x;p_i,\theta)\right)_{i=1,\ldots ,n}$ that share the same network parameters $\theta$ and do only differ in their input parameter (concatenated to the original input) $p_i\in\mathbb{R}^l$. For ease of notation, we simply write $g_{p_i}$ instead of $g(x;p_i,\theta)$. If we denote the training points by $\vec{z}$ where $\vec{z}_i$ denotes the training data of the $i$-th task sampled from a distribution $P_i$, we define the empirical loss analogously to the previously discussed case with only one task,
\begin{equation}\label{empirical-loss_mt}
    \langle l_{\vec{g}}\rangle_{\vec{z}}:=\frac{1}{n}\sum_{i=1}^n  \langle l_{g_{p_i}}\rangle_{\vec{z}_i},
\end{equation}
and the true loss
\begin{equation}\label{true-loss_mt}
    \langle l_{\vec{g}}\rangle_{\vec{P}}:=\frac{1}{n}\sum_{i=1}^n  \langle l_{g_{p_i}}\rangle_{P_i}.
\end{equation}
In order to apply a similar approach as in \cite{Baxter2019a}, we define $\mathcal{F}:=\left\{ f(x)=(x,p)\mid p \in \mathbb{R}^l\right\}$, $\mathcal{G}:=\left\{g(x,p;\theta)\mid \theta \right\}$ and furthermore 
\begin{equation}
\mathcal{F}^n :=\left\{f(x_1,\hdots,x_n):=(f_1(x_1),\hdots , f_n(x_n))\mid f_i \in \mathcal{F}, x_i\in\mathbb{R}^{k_1} \right\}
\end{equation}
and
\begin{equation}
    \bar{\mathcal{G}}:=\left\{g((x_1, p_1,\hdots , x_n,p_n);\theta):=\left(g(x_1,p_1;\theta),\hdots , g(x_n,p_n;\theta)\right)\mid \theta \right\}.
\end{equation}

Using this notation we obtain the following theorem.

\begin{theorem}\label{Generalization-Property} Let $\nu>0$, $0<\alpha<1$, be fixed and $\varepsilon_1, \varepsilon_2 > 0$ such that $\varepsilon_1+\varepsilon_2=\frac{\alpha\nu}{8}$. 
For $0<\delta<1$ and the structure 
\[ 
X^n\xmapsto{\mathcal{F}^n}V^n\xmapsto{\bar{\mathcal{G}}}\mathbb{R}^{k_2\cdot n}
\] and $\vec{z} \in Z^{(m,n)}$ be generated by 
%$m>\frac{2M}{\alpha^2\nu}$ 
$m>\frac{8M}{\alpha^2\nu}\left[ \ln(\mathcal{C}(\varepsilon_1,\mathcal{F})) + \frac{1}{n}\ln \frac{4\mathcal{C}(\varepsilon_2,l_{\mathcal{G}})}{\delta} \right]$ independent samples
we have
\begin{equation}
    {\rm Pr} \left\{ z \in Z^{(m,n)}:\exists \bar{g}\circ\vec{f}\in \bar{\mathcal{G}}\circ\mathcal{F}^n: d_{\nu}(\langle l_{\bar{g}\circ \vec{f}}\rangle_{\vec{z}}, \langle l_{\bar{g}\circ \vec{f}}\rangle_{\vec{P}}) > \alpha\right\}\leq \delta
\end{equation}
\end{theorem}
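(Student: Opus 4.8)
The plan is to recognize the PNN as an instance of the representation–learning setup analysed in \cite{Baxter2019a} and then transport the corresponding uniform–convergence bound to the composed class $\bar{\mathcal G}\circ\mathcal F^n$. The diagram $X^n\xmapsto{\mathcal F^n}V^n\xmapsto{\bar{\mathcal G}}\mathbb R^{k_2 n}$ has exactly the two–stage shape required there, the only difference being a reversal of roles: here the \emph{task–specific} stage is $\mathcal F^n$, a product of $n$ maps $f_i\in\mathcal F$ each of which merely appends the embedding vector $p_i$ to the raw input, while the \emph{shared} stage $\bar{\mathcal G}$ applies one network $g(\cdot,\cdot;\theta)$ diagonally to all $n$ coordinates. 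Note also that the theorem is purely a uniform–convergence statement over this fixed family of $n$ tasks, so only the "empirical process'' part of \cite{Baxter2019a} is needed, not the learning–to–learn part. The generic ingredient is: for any class $\mathcal H$ of loss functions into $[0,M]$, the quantity $\langle l_h\rangle_{\vec z}$ is, for a fixed $h$, an average of $nm$ independent $[0,M]$–valued random variables with mean $\langle l_h\rangle_{\vec P}$, so Haussler's $d_\nu$–refinement of the Chernoff bound together with the usual symmetrization step yields
\[
{\rm Pr}\{\exists\, l_h\in l_{\mathcal H}:\ d_\nu(\langle l_h\rangle_{\vec z},\langle l_h\rangle_{\vec P})>\alpha\}\ \le\ 4\,\mathcal C\!\left(\tfrac{\alpha\nu}{8},l_{\mathcal H}\right)e^{-\alpha^2\nu n m/(8M)},
\]
which for $n=1$ is precisely the single–task bound recalled above.

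The heart of the proof is then an estimate of the $\varepsilon$–capacity of $l_{\bar{\mathcal G}\circ\mathcal F^n}$ in terms of the capacities of the two building blocks. Fixing a sample, I would replace $\mathcal F$ by an $\varepsilon_1$–cover of size $\mathcal C(\varepsilon_1,\mathcal F)$ and $l_{\mathcal G}$ by an $\varepsilon_2$–cover of size $\mathcal C(\varepsilon_2,l_{\mathcal G})$ (both in the appropriate sample–dependent, $d_\nu$–compatible pseudometric). Choosing $\theta$ from the second cover changes $l\circ g$ by at most $\varepsilon_2$ on the image points; independently choosing each $p_i$ from the first cover changes the $i$–th coordinate's contribution by at most $\varepsilon_1$, using the Lipschitz dependence of $g(x;\cdot)$ — hence of $l\circ g$ — on the parameter established in the interpretability subsection, restricted to the relevant compact parameter range. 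Averaging over the $n$ coordinates keeps the total perturbation at most $\varepsilon_1+\varepsilon_2$, and the number of joint choices is $\mathcal C(\varepsilon_1,\mathcal F)^n\cdot\mathcal C(\varepsilon_2,l_{\mathcal G})$, so $\mathcal C(\varepsilon_1+\varepsilon_2,\,l_{\bar{\mathcal G}\circ\mathcal F^n})\le \mathcal C(\varepsilon_1,\mathcal F)^n\,\mathcal C(\varepsilon_2,l_{\mathcal G})$.

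Plugging this capacity estimate into the generic bound with $\mathcal H=\bar{\mathcal G}\circ\mathcal F^n$ and using the hypothesis $\varepsilon_1+\varepsilon_2=\frac{\alpha\nu}{8}$ gives
\[
{\rm Pr}\{\exists\,\bar g\circ\vec f:\ d_\nu(\langle l_{\bar g\circ\vec f}\rangle_{\vec z},\langle l_{\bar g\circ\vec f}\rangle_{\vec P})>\alpha\}\ \le\ 4\,\mathcal C(\varepsilon_1,\mathcal F)^n\,\mathcal C(\varepsilon_2,l_{\mathcal G})\,e^{-\alpha^2\nu n m/(8M)}.
\]
Requiring the right–hand side to be at most $\delta$, taking logarithms and solving for $m$ produces $m>\frac{8M}{\alpha^2\nu}\big[\ln\mathcal C(\varepsilon_1,\mathcal F)+\frac1n\ln\frac{4\,\mathcal C(\varepsilon_2,l_{\mathcal G})}{\delta}\big]$, which is exactly the stated sample–size condition, and the theorem follows.

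I expect the capacity estimate for the composed class to be the main obstacle. One has to be careful that the cover is taken in the right sample–dependent pseudometric so that an $\varepsilon$–cover of the loss class translates into a $d_\nu$–control after the Chernoff step; that the Lipschitz and compactness hypotheses needed to push an $\varepsilon_1$–cover of $\mathcal F$ through $l\circ g$ are genuinely available, which is where the continuity of $g$ and the boundedness of the embedding range enter (the Lipschitz constant rescales the admissible $\varepsilon_1$ and must be bookkept); and — most importantly — that the product–over–tasks structure contributes the factor $n$ in the exponent of $\mathcal C(\varepsilon_1,\mathcal F)$ but only a factor $1$ in front of $\mathcal C(\varepsilon_2,l_{\mathcal G})$. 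This asymmetry is precisely what yields the $1/n$ discount on the shared–network complexity and is the entire point of the result.
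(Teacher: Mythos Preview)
Your proposal is correct and follows essentially the same route as the paper's proof: apply the generic $d_\nu$--uniform-convergence bound of \cite{Baxter2019a} to the composed class $\bar{\mathcal G}\circ\mathcal F^n$, factor its $\tfrac{\alpha\nu}{8}$--capacity as $\mathcal C(\varepsilon_1,\mathcal F)^n\,\mathcal C(\varepsilon_2,l_{\mathcal G})$, and solve the resulting inequality for $m$. The only cosmetic difference is that the paper derives the capacity factorization by chaining Baxter's abstract lemmas (composition, product, and the diagonal inequality $\mathcal C(\varepsilon,l_{\bar{\mathcal G}})\le\mathcal C(\varepsilon,l_{\mathcal G})$), whereas you sketch the same bound via an explicit cover/Lipschitz argument --- which is exactly what those lemmas encode, so no separate Lipschitz bookkeeping is actually needed once the cover of $\mathcal F$ is taken in the pseudometric induced by $l_{\mathcal G}$.
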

A sketch for a proof is given in Appendix \ref{Appendix-proof}.

This theorem shows that the parameterized network approach may reduce overfitting compared to the single-task case. We see that the number of tasks reduces the term involving the complexity of the overall set of network functions.
\section{Numerical Experiments}\label{Numerical-Experiments}
In this section, we investigate the behavior and performance of the proposed method using simulation-based experiments. If not stated otherwise, we use the following settings:
\begin{itemize}
\item The base network of the PNN has 3 inner layers with 32 neurons and SELU activation functions.
\item Adam optimizer with 8000 epochs and exponentially decaying learning rate.
\item 100 tasks used to train the network.
\item 250 tasks to test the method.
\end{itemize}
We measure the error by the standard mean squared error for a single task $i$ on the test data,
\[
e_i :=\frac{\sqrt{ \sum_{j}\|y_{ij}-\hat g(x_{ij}, \hat p_i)\|^2}}{m},
\]
where $g$ is the resulting parameterized network fitted on the training data and $\hat p_i$ denotes the parameter that is calibrated to the test data. 
The calibration of the parameter $\hat p_i$ is different from the common approach of measuring error on a test data set without recalibrating anything on the data. We also measured the error w.r.t the data generation process of each task in the training data to create test data without recalibrating the respective parameter for this task to the new data. We use this methodology since one of our main interests in this approach is the capability of the model to calibrate to new unseen data. For the calibration of $\hat p_i$ we use the mean over all  parameters from the training as the start value on the training data and Adam optimizer with 100 epochs, a batch size of 10, and a learning rate of 0.01.
We generate 250 tasks for the training data and define the training error as the square root of the mean overall errors of the single tasks
\[
e := \frac{\sqrt{\sum_{i=1}^{250} e_i^2}}{250}.
\]
For all one-dimensional experiments below we use a uniform grid with 100 gridpoints on the function domain to construct the test data.

\subsection{Family of quadratic functions}

\begin{figure}
\includegraphics[width=0.5\textwidth]{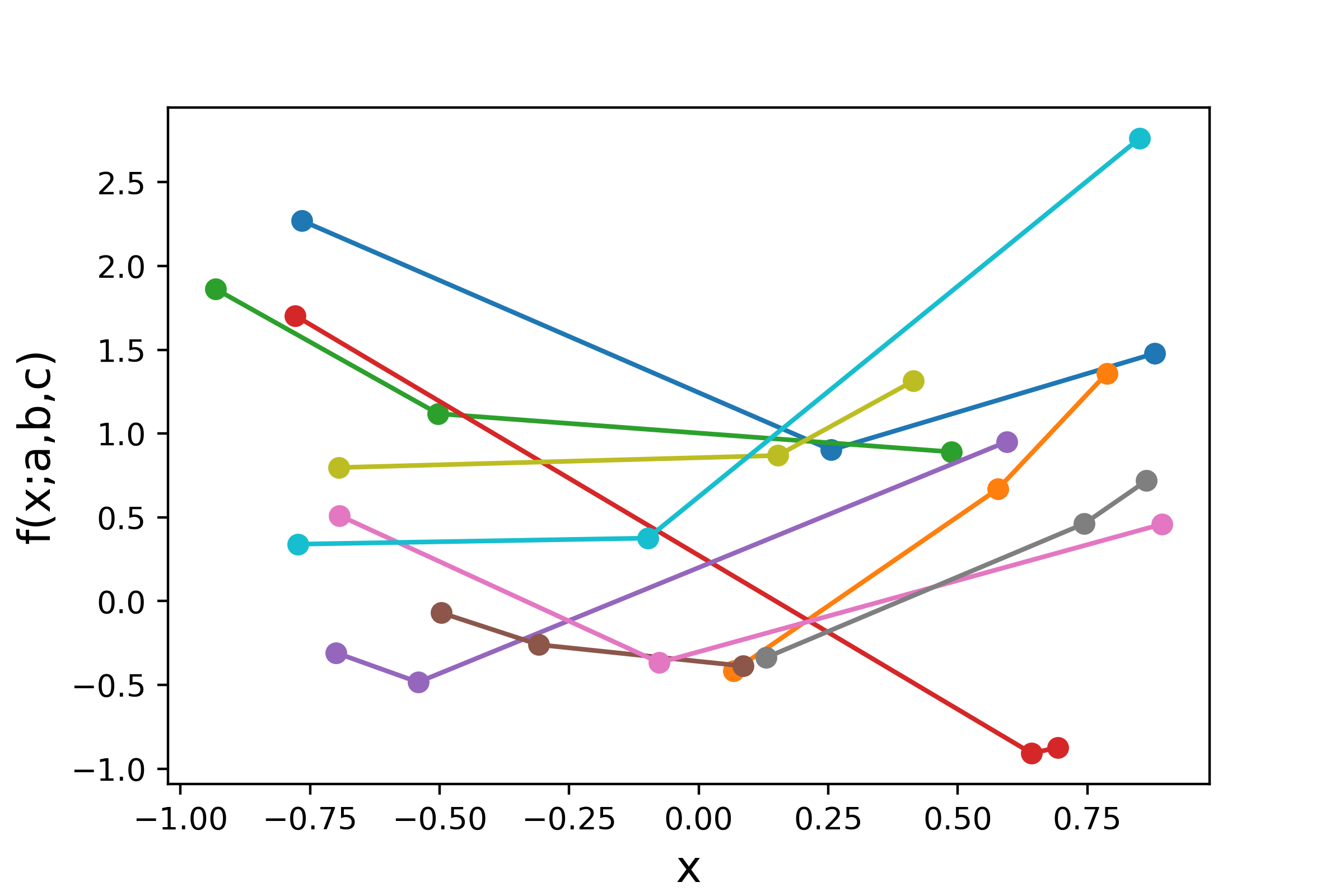}
\includegraphics[width=0.5\textwidth]{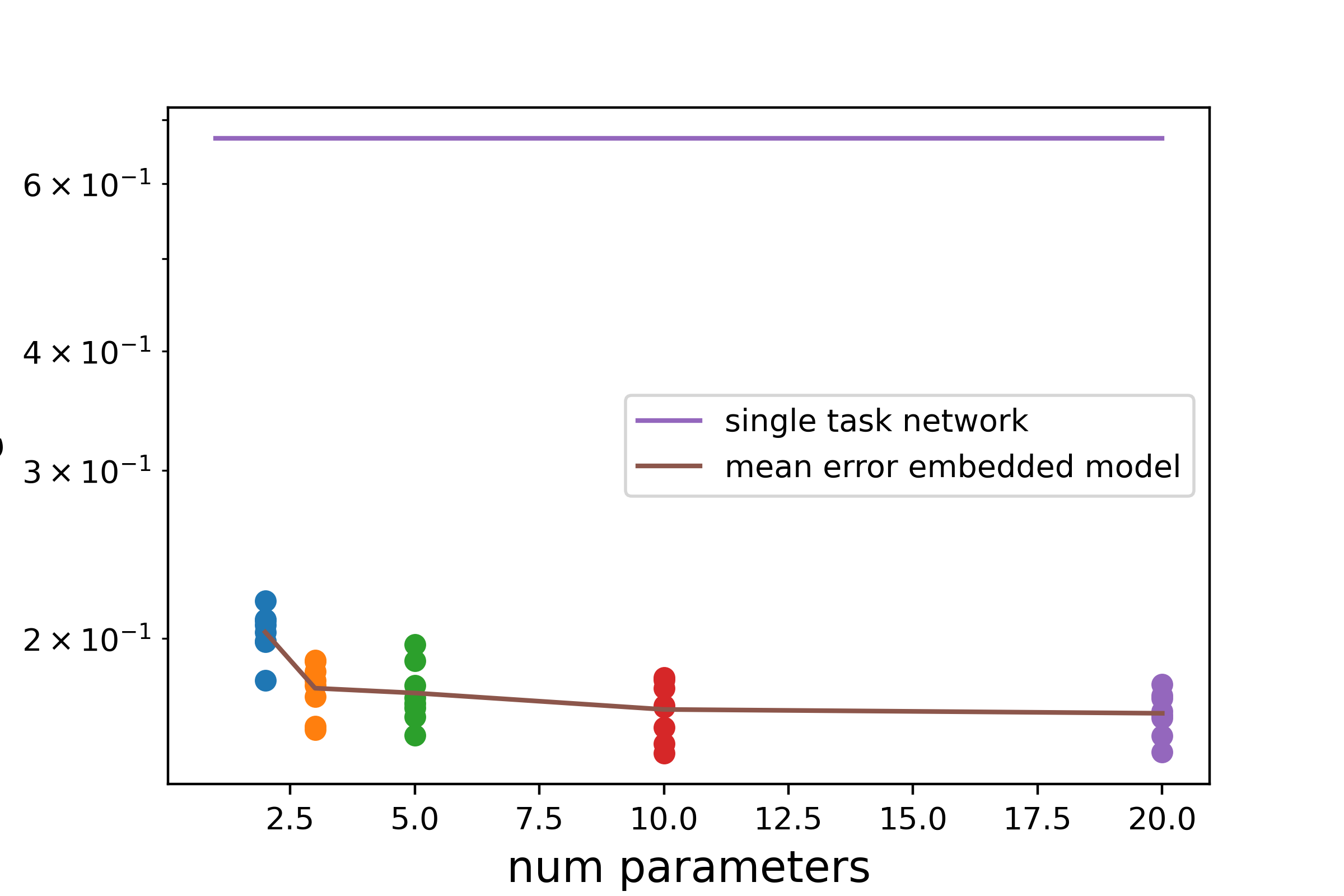}
\caption{Left: PNN tasks sampled from the family of quadratic function defined in (\ref{eq-quadratic}). Right: Approximation errors depending on the number of parameters used in the PNN, for 100 different tasks used in training.} \label{fig-quadratic}
\end{figure}

\begin{figure}
\includegraphics[width=0.5\textwidth]{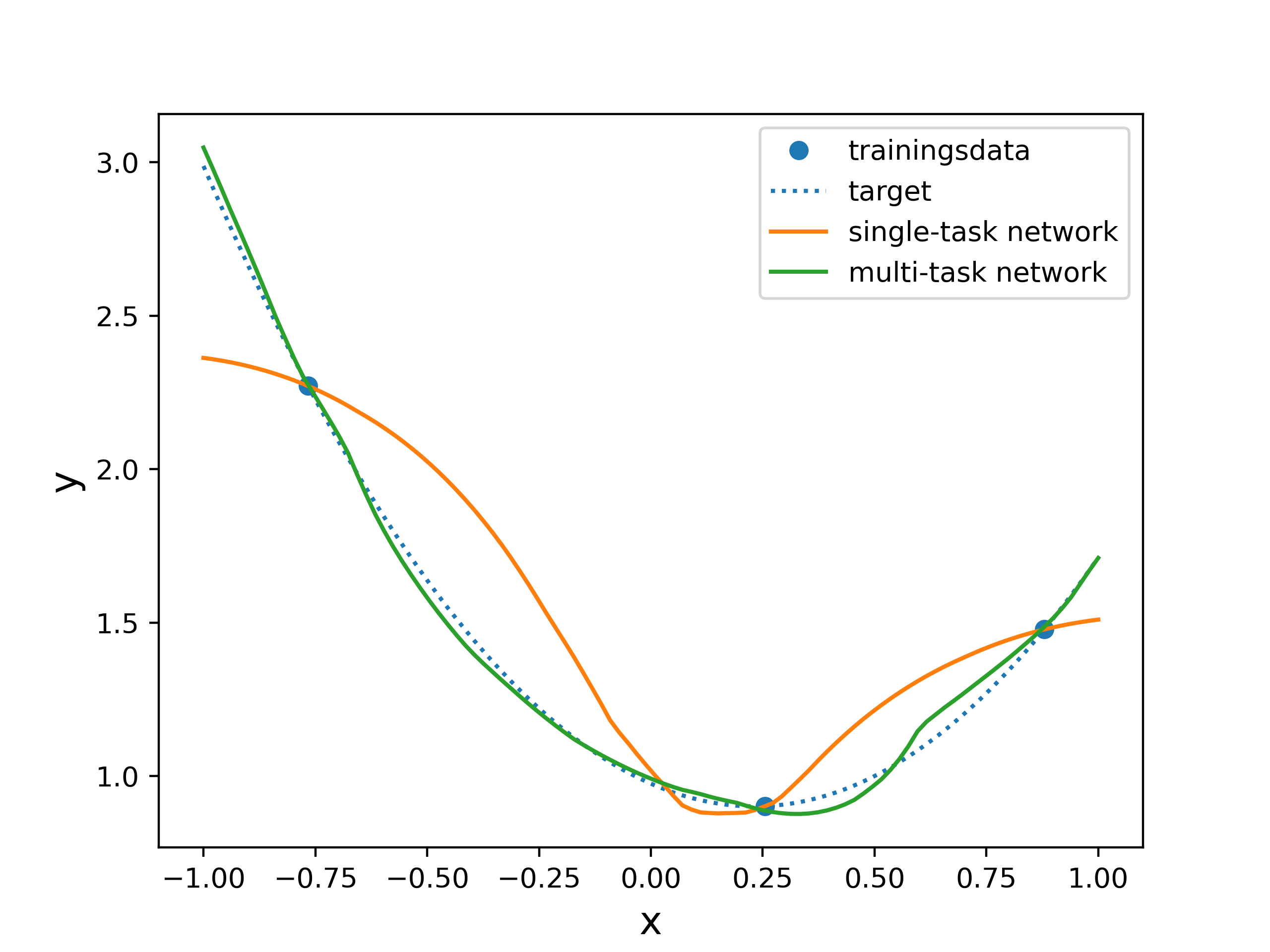}
\includegraphics[width=0.5\textwidth]{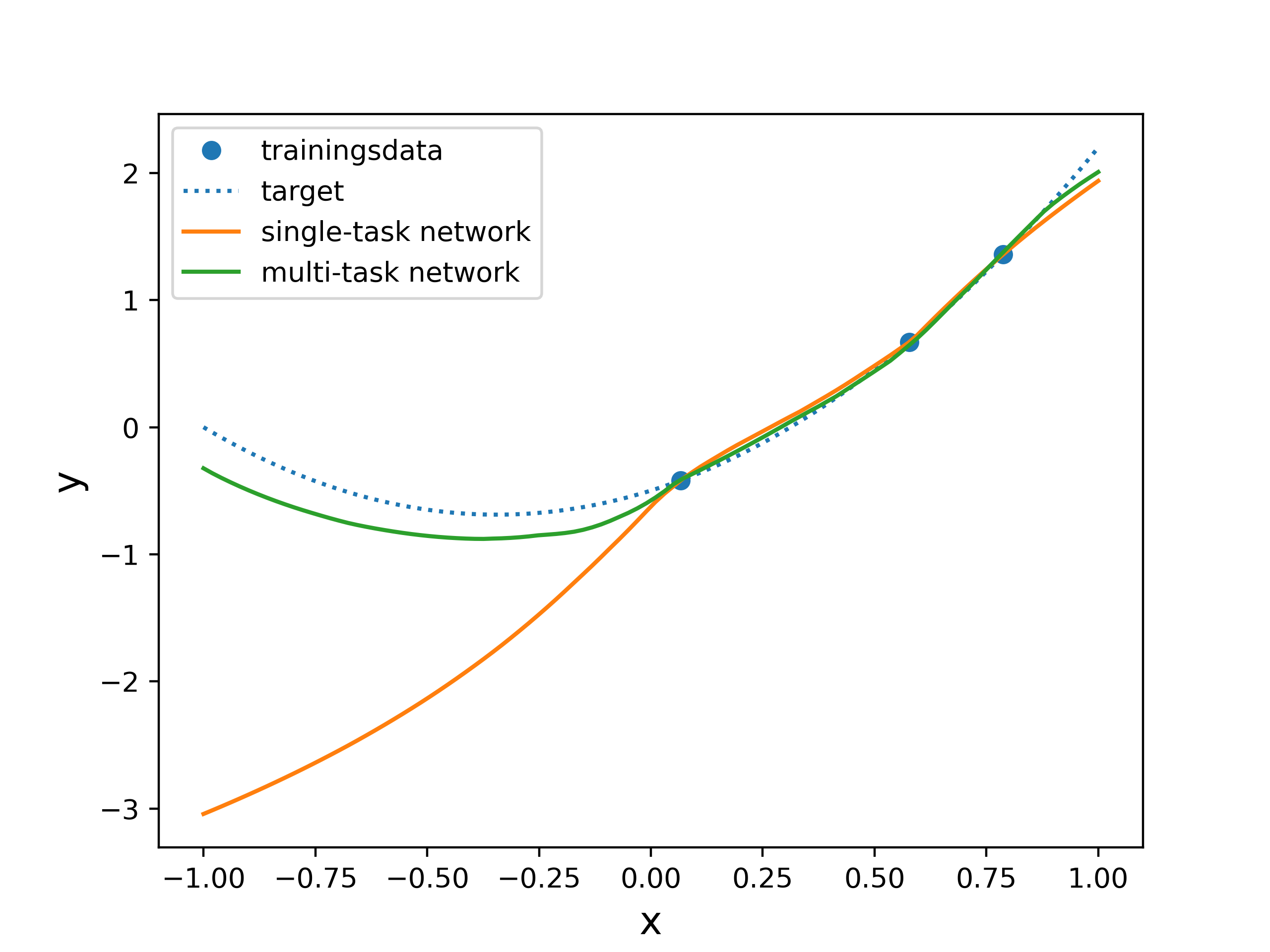}
\includegraphics[width=0.5\textwidth]{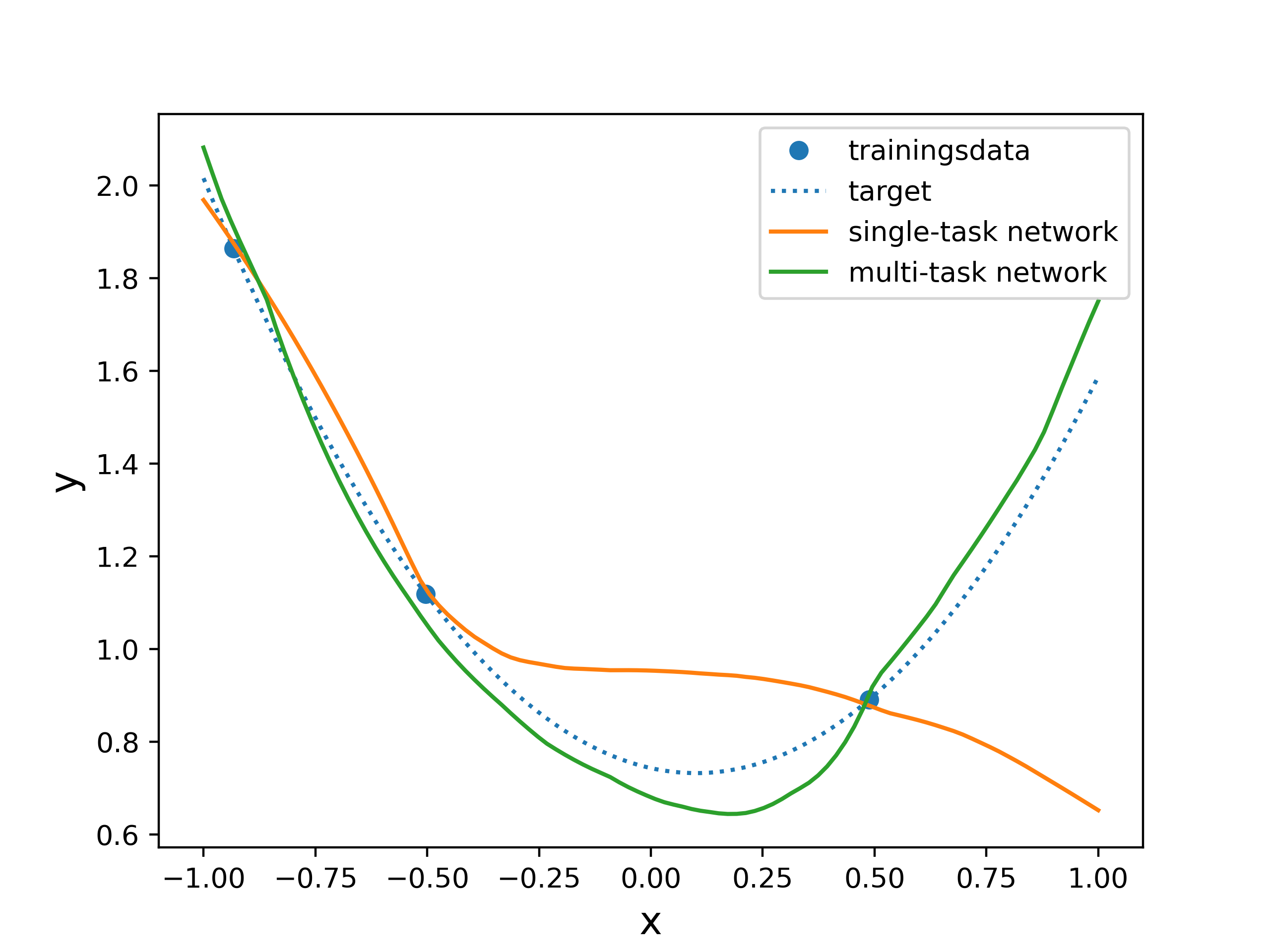}
\includegraphics[width=0.5\textwidth]{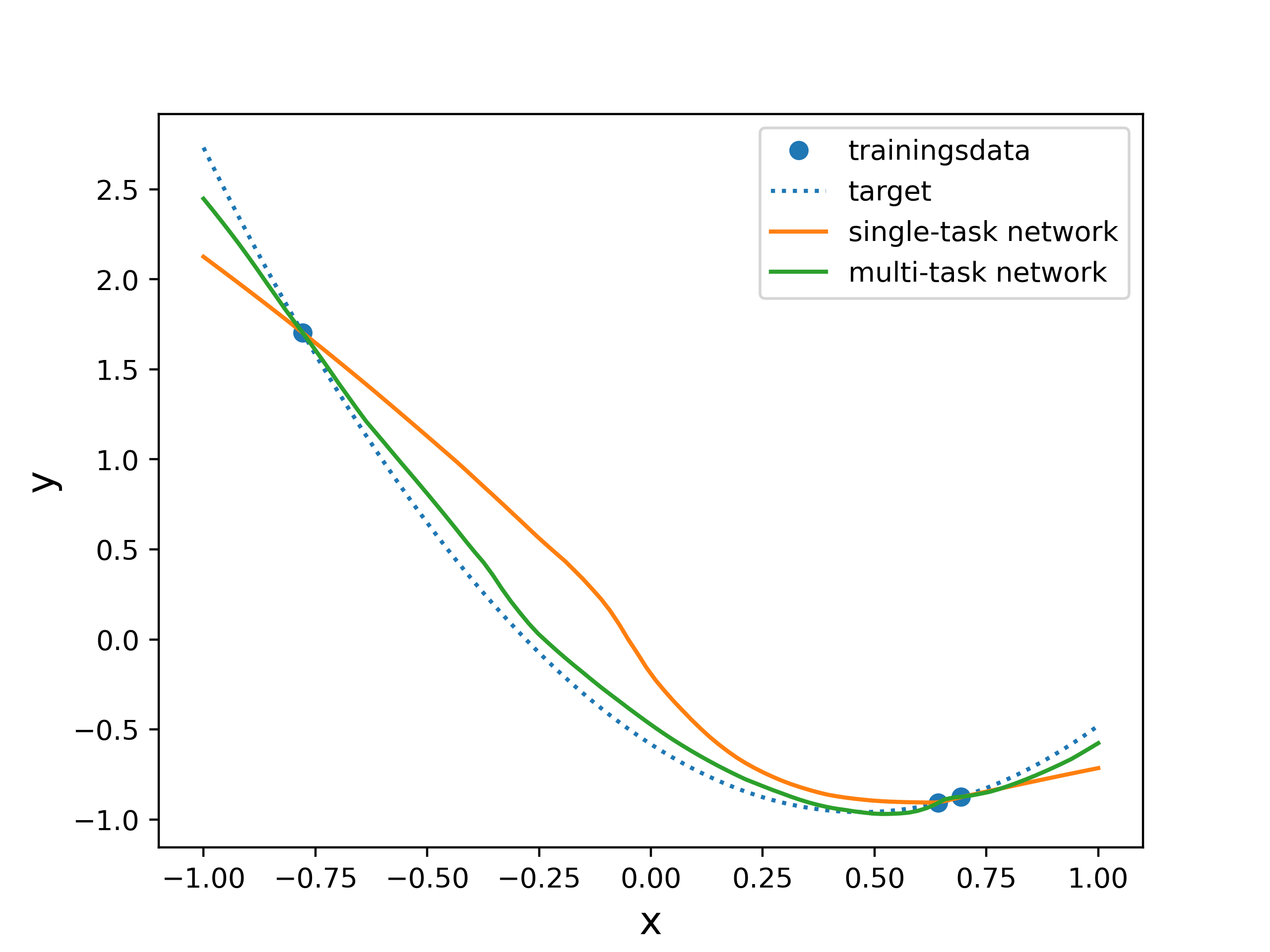}
\caption{Approximation functions from the first four tasks for a PNN with three parameters, compared to a neural network trained on the respective, single task only. The training data is shown by the dots.} \label{fig-quadratic-samples}
\end{figure}

\begin{figure}\centering
\includegraphics[width=0.45\textwidth]{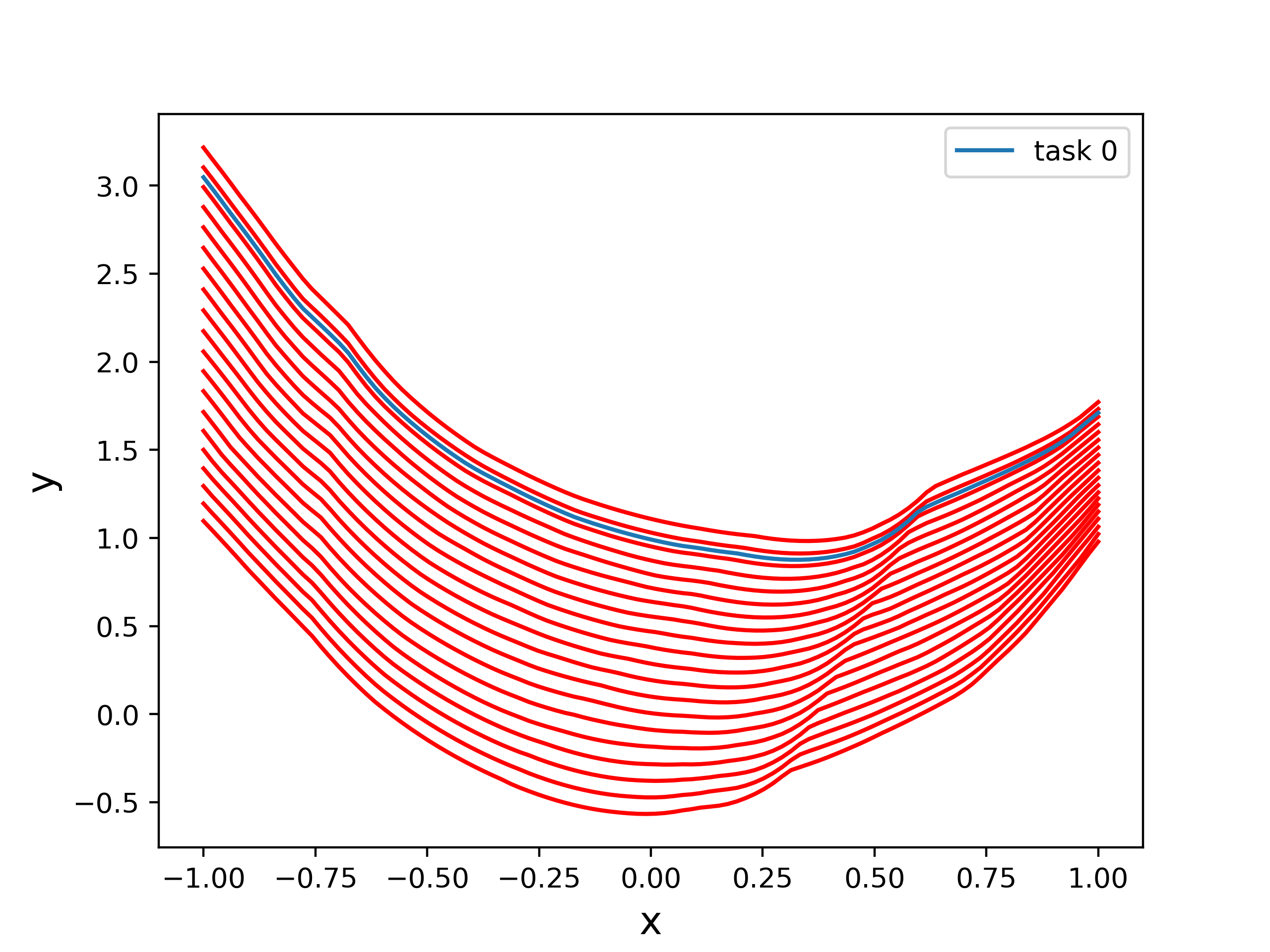}
\includegraphics[width=0.45\textwidth]{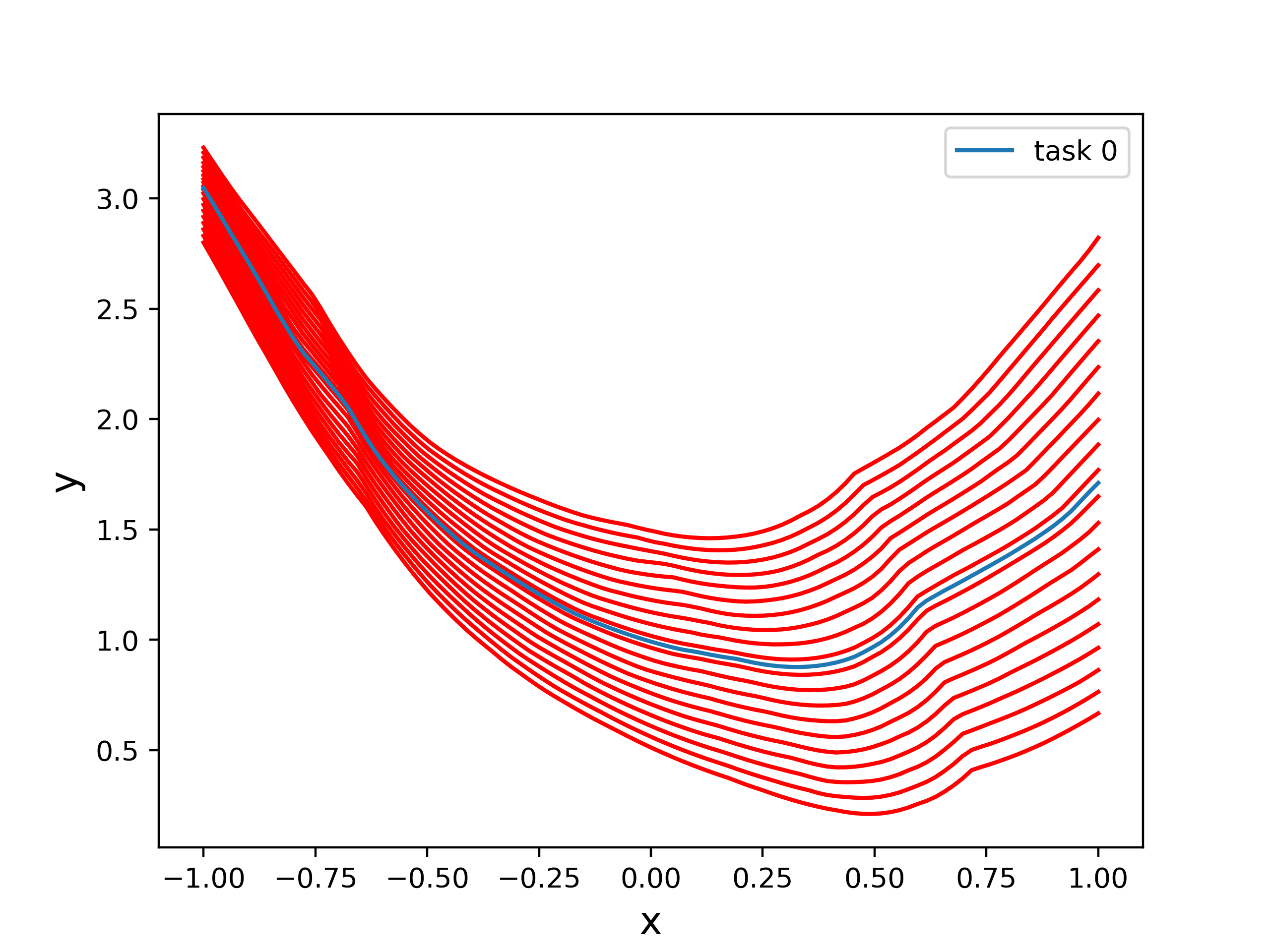}
\includegraphics[width=0.45\textwidth]{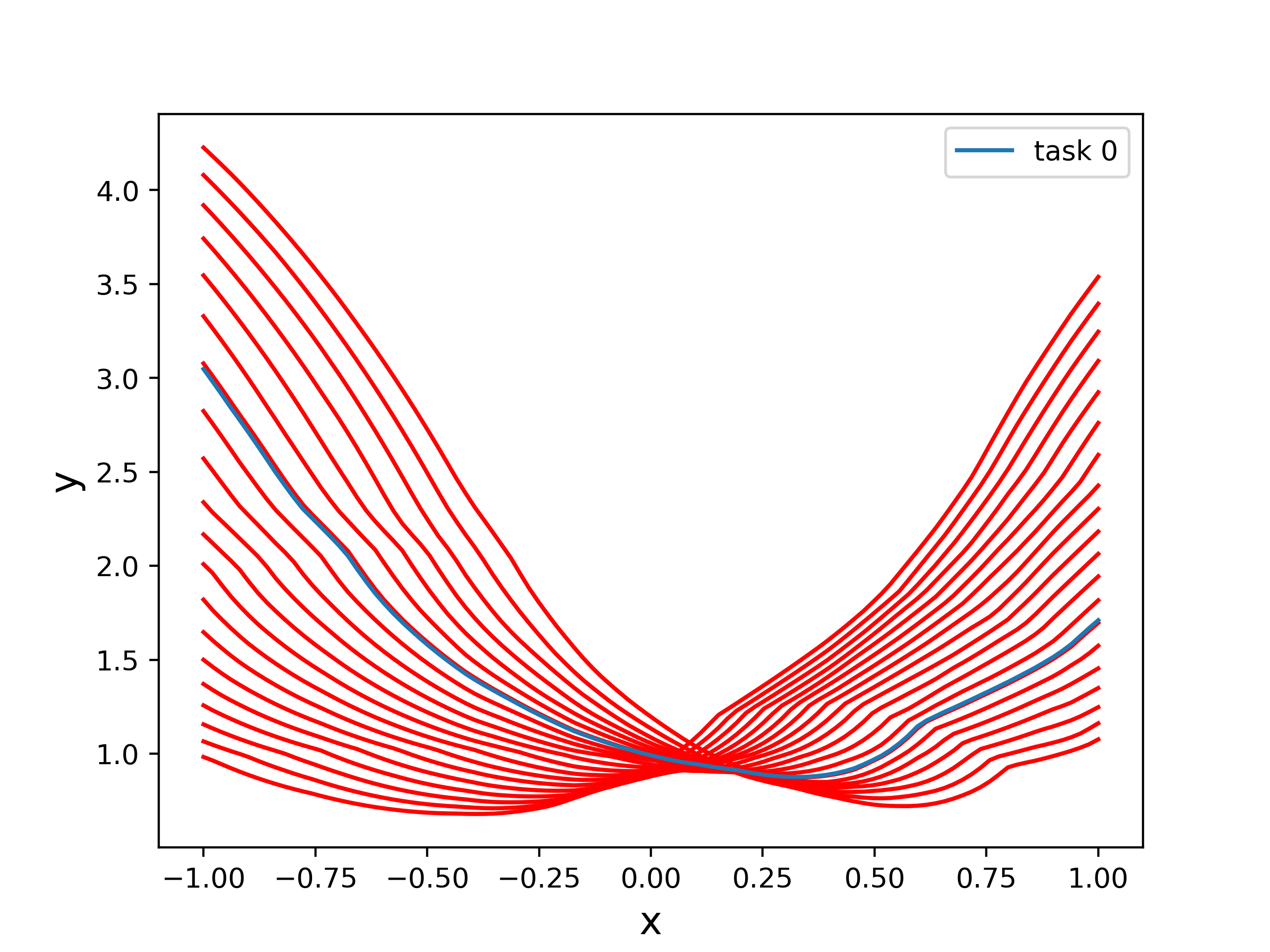}
\caption{Projection along a parameter direction (equidistant between the minimum and maximum of the respective parameter over all tasks)  where all other parameters are fixed.  } \label{fig-quadratic-projection}
\end{figure}
\begin{figure}\centering
\includegraphics[width=0.45\textwidth]{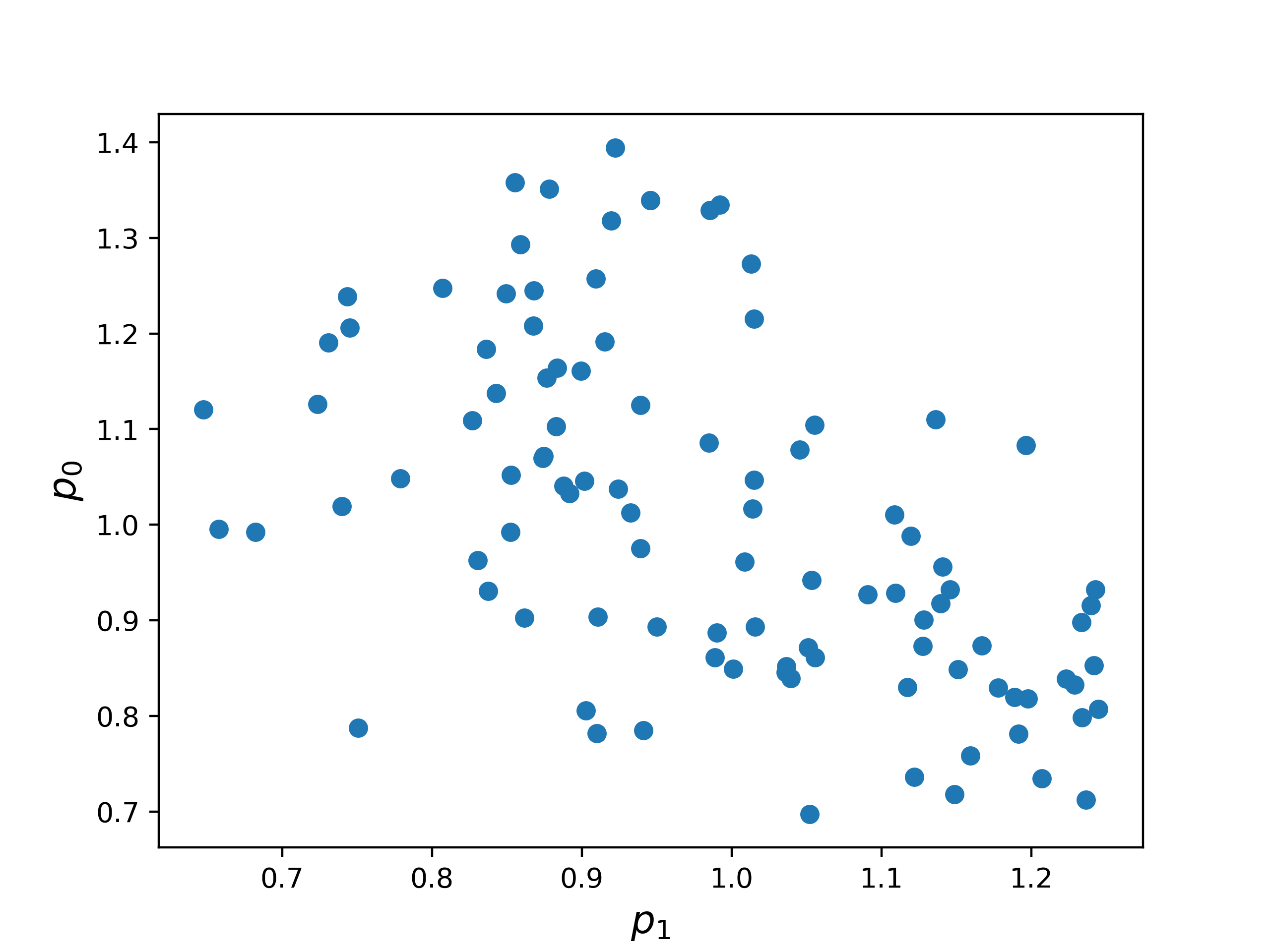}
\includegraphics[width=0.45\textwidth]{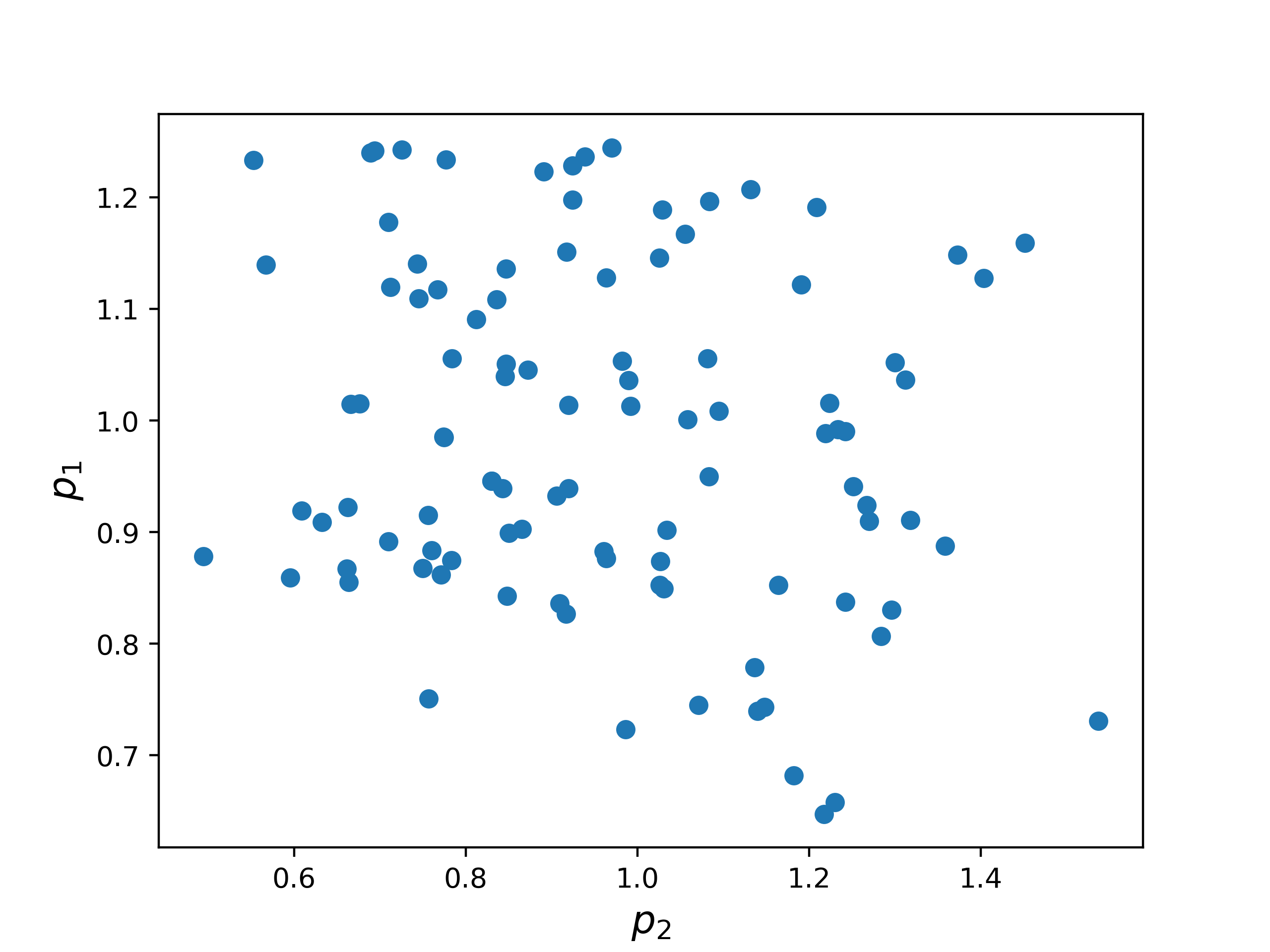}
\includegraphics[width=0.45\textwidth]{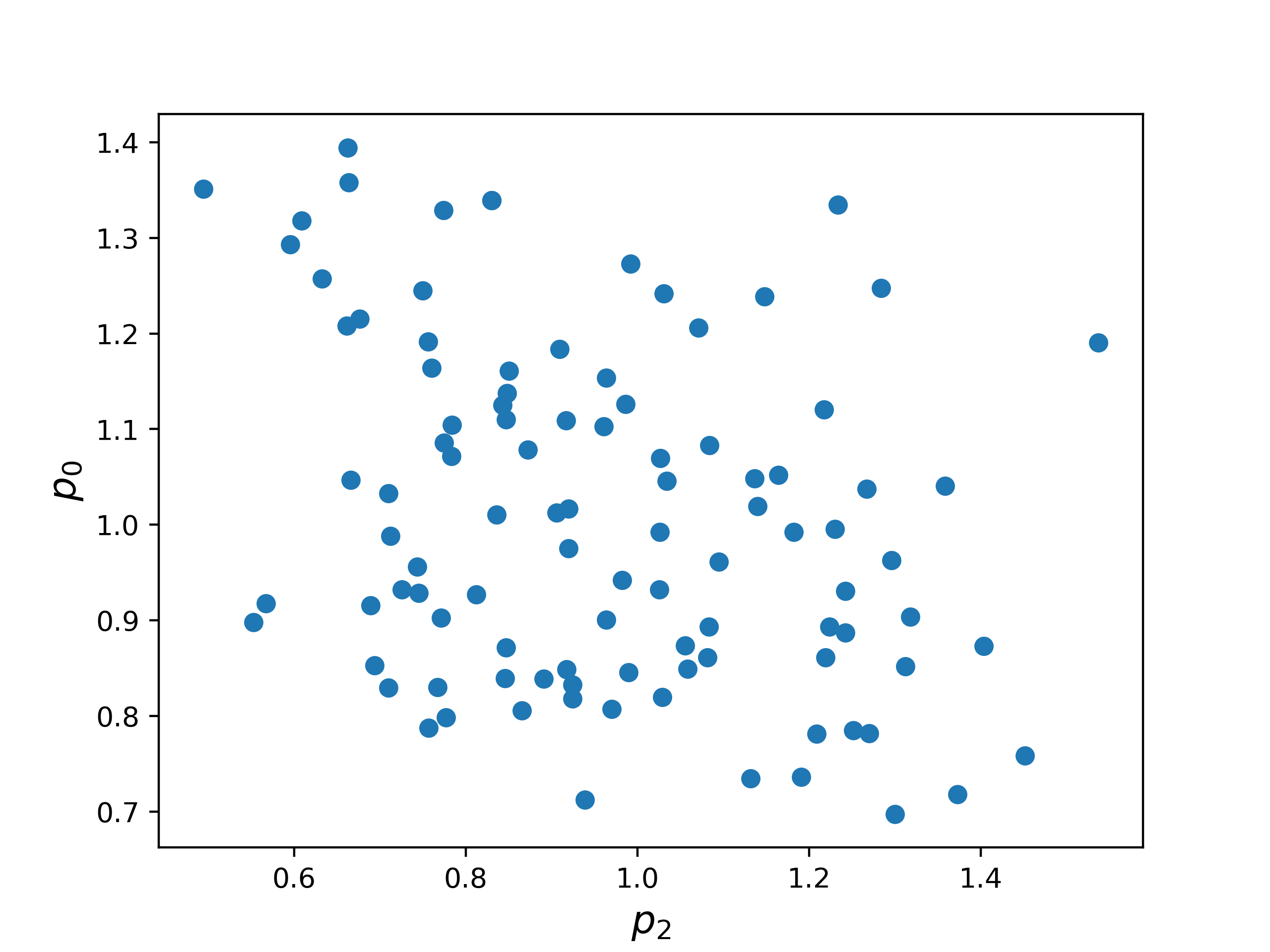}
\caption{Scatter plots for all parameters calibrated on the training set for the function family defined by (\ref{eq-quadratic}). } \label{fig-quadratic-scatter}
\end{figure}

In our first example, we consider a problem where each task is a simple quadratic function of the form
\begin{equation}\label{eq-quadratic}
	f(x;a,b,c) := a(x-c)^2+ b, \; x \in [-1,1],
\end{equation}
for parameters $a$, $b$, $c$.
For each task, we first sample the parameters from uniform distributions, such that $a\in [1.0,2.0]$, $b\in [-1,1]$, and $c\in [-0.5,0.5]$ to determine the function for this task. For each function we then sample $n$ points, uniformly distributed on $[-1,1]$, and evaluate the function  on these  points.  
Figure \ref{fig-quadratic} shows samples and errors for the case with three points per task. In this experiment, we analyze the performance of the method in relation to the number of parameters used in the PNN.  
Since the family of functions has three parameters, we expect that this should also be the optimal number for the parameter dimension our PNN. 

The right graph in figure \ref{fig-quadratic} shows the error corresponding to the number of parameters as well as the mean of the error for eight different calibrations using different random seeds for network initialization. As a baseline, we also plot the mean error calibrating a neural network on each task separately. We clearly see that PNN outperforms the calibration of networks for each task separately. Furthermore, for one and two parameters the error is slightly larger than for a higher number of parameters. This is not surprising, considering that the generating family of functions depends on three parameters. We also see that an increasing number of parameters does not affect the performance of the resulting network significantly.
Figure \ref{fig-quadratic-samples} shows some examples of the resulting PNN approximation functions compared to a simple feed-forward network for four selected tasks. We see that the PNN has learned the parabolic shape of the target problem much better than the single-task network. The upper right graph shows that even extrapolation for points between -1 and 0  gives quite satisfactory results.

Table \ref{tab-quadratic} shows the error results of a PNN depending on the number of data points per task, as well as on the number of tasks, compared to the error results of neural networks fitted to each task separately. We clearly see that the number of points from the training data as well as the number of tasks influences the overall approximation error.  The error decreases by a factor of two with an increasing number of points per task, for the PNN as well as for the single networks. As indicated by theorem \ref{Generalization-Property}, we observe a similar effect on the error with an increasing number of tasks. 
Note that performance seems to slightly deteriorate from 50 to 100 tasks.
One reason for this effect might be that we used the same training parameters (number of epochs, learning rate schedule) across all configurations without tuning these parameters individually for each number of tasks leading and the networks have not fully converged to the desired accuracy.
\begin{table}
\centering
\begin{tabular}{ccrrrr}
\hline
number               &  simple & \multicolumn{4}{c}{number tasks} \\
points               &  network &    10 &    20 &    50 &   100 \\
\hline
3                              &   0.669 & 0.273 & 0.236 & 0.176 & 0.163 \\
4                              &   0.538 & 0.223 & 0.258 & 0.137 & 0.141 \\
5                              &   0.481 & 0.243 & 0.117 & 0.137 & 0.115 \\
6                              &   0.292 & 0.134 & 0.113 & 0.108 & 0.092 \\
%7                              &   0.248 & 0.127 & 0.086 & 0.086 & 0.086 \\
\hline
\end{tabular}
\caption{Test error for a family of 250 quadratic functions as defined in (\ref{eq-quadratic})  for a network fitted to each task separately (column \emph{simple metwork}) and for a PNN with three parameters, trained on  different numbers of tasks and points per sample.} \label{tab-quadratic}
\end{table}
The role of the parameters is shown in figure \ref{fig-quadratic-projection}, where projections along each parameter dimension are shown. We use the parameter from task zero as a basis and vary the parameter coordinate between the minimum and maximum of the training data set in each figure. The behavior of the PNN with respect to each parameter coordinate is quite different and also mutually independent between the coordinates. This observation is confirmed by the scatter plots of the parameters in figure \ref{fig-quadratic-scatter}, where the distribution of the parameters seems to be rather uncorrelated and uniformly distributed. Furthermore, the change of a single parameter coordinate leads to a parabolic-shaped curve.

\subsection{Family of quadratic functions with noise}

\begin{figure}
\includegraphics[width=0.5\textwidth]{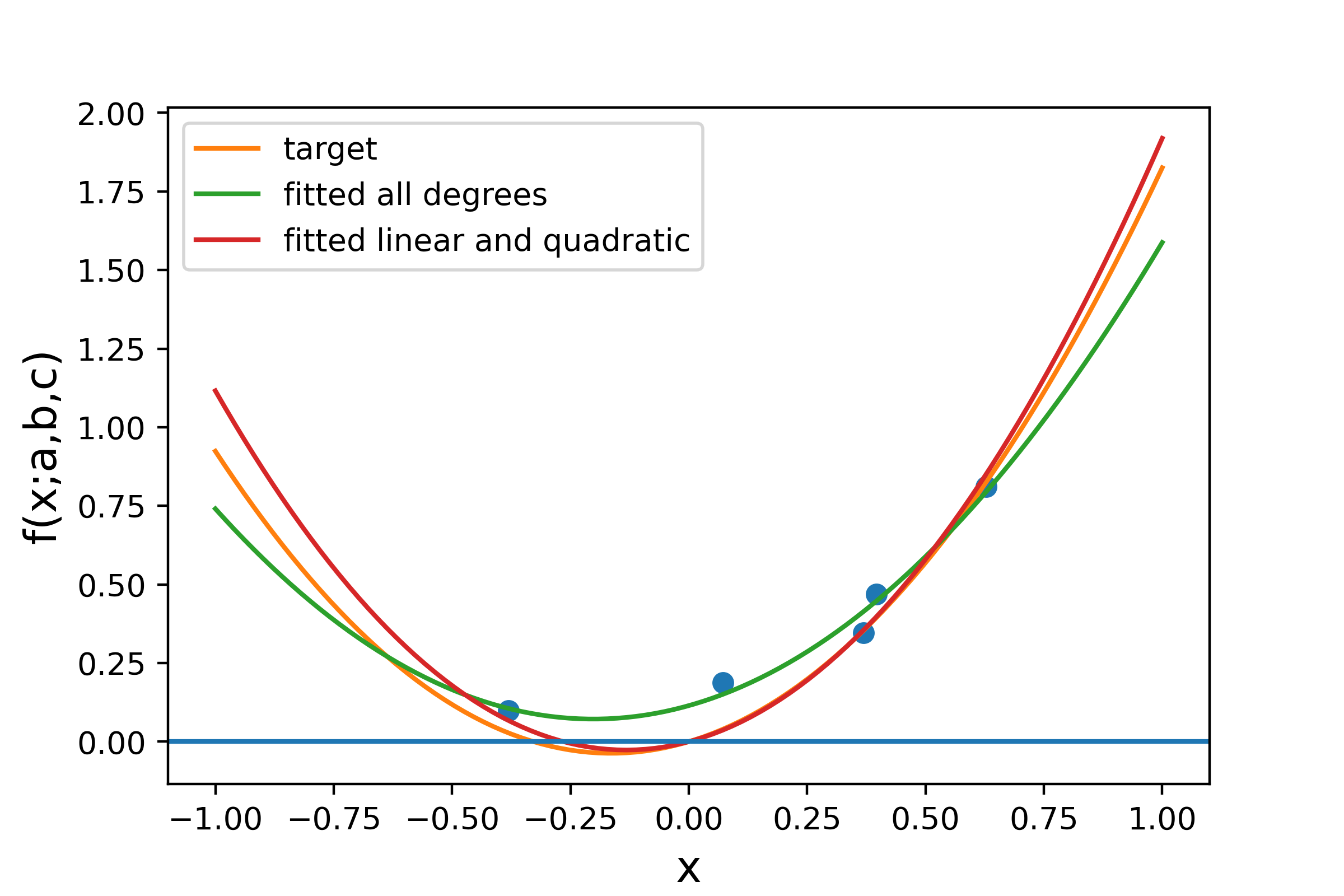}
\includegraphics[width=0.5\textwidth]{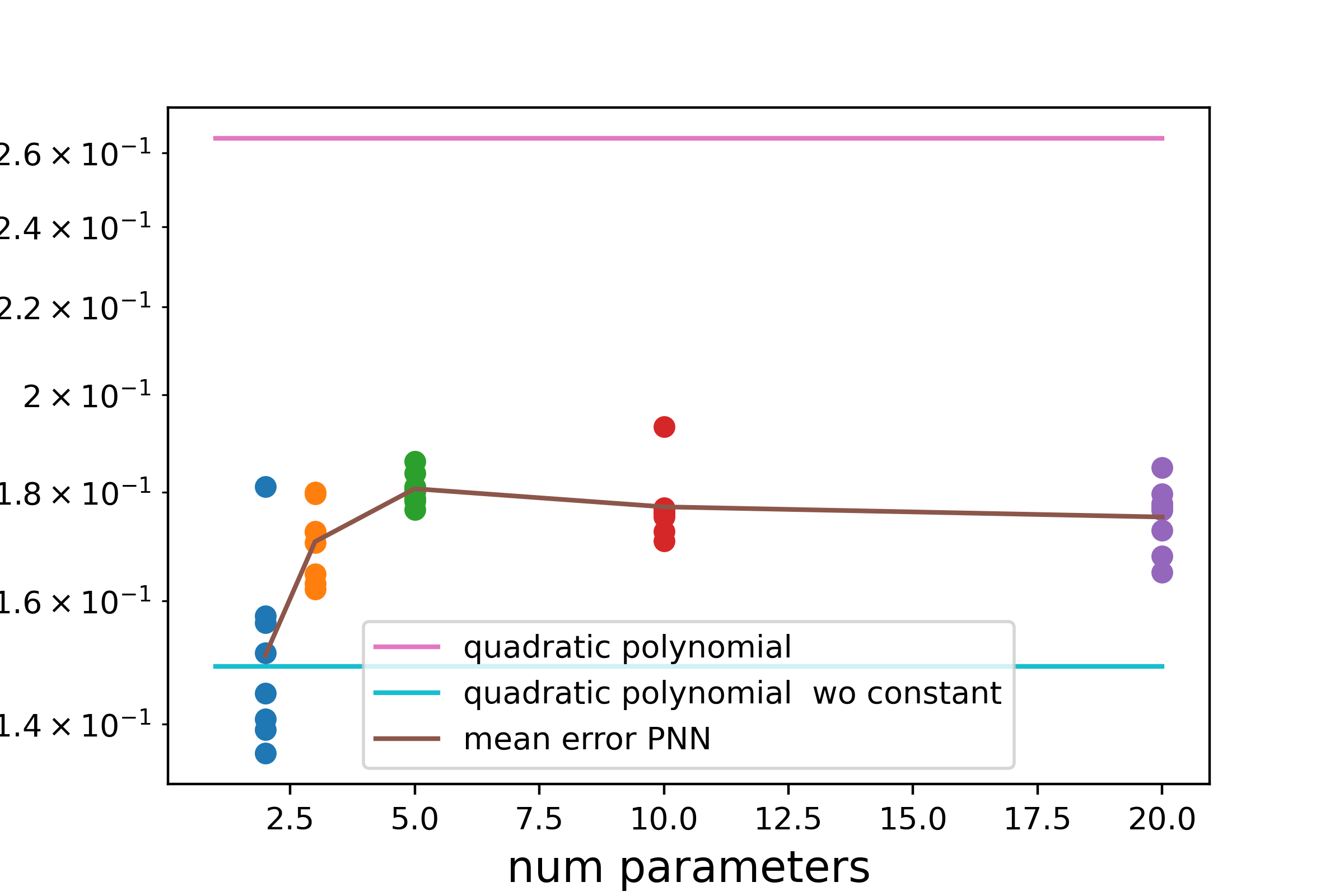}
\caption{Left: A task sampled from the family defined by (\ref{eq-quadratic-noise}) together with the generating function (before noise is added) and the regression using the quadratic function and the quadratic function with zero constant.  Right: The error of the PNN for different numbers of parameters. As baselines, the error of the simple regressions (quadratic and quadratic with constant term fixed to zero) are plotted as straight lines. } \label{fig-quadratic-noise}
\end{figure}
\begin{figure}\centering
\includegraphics[width=0.7\textwidth]{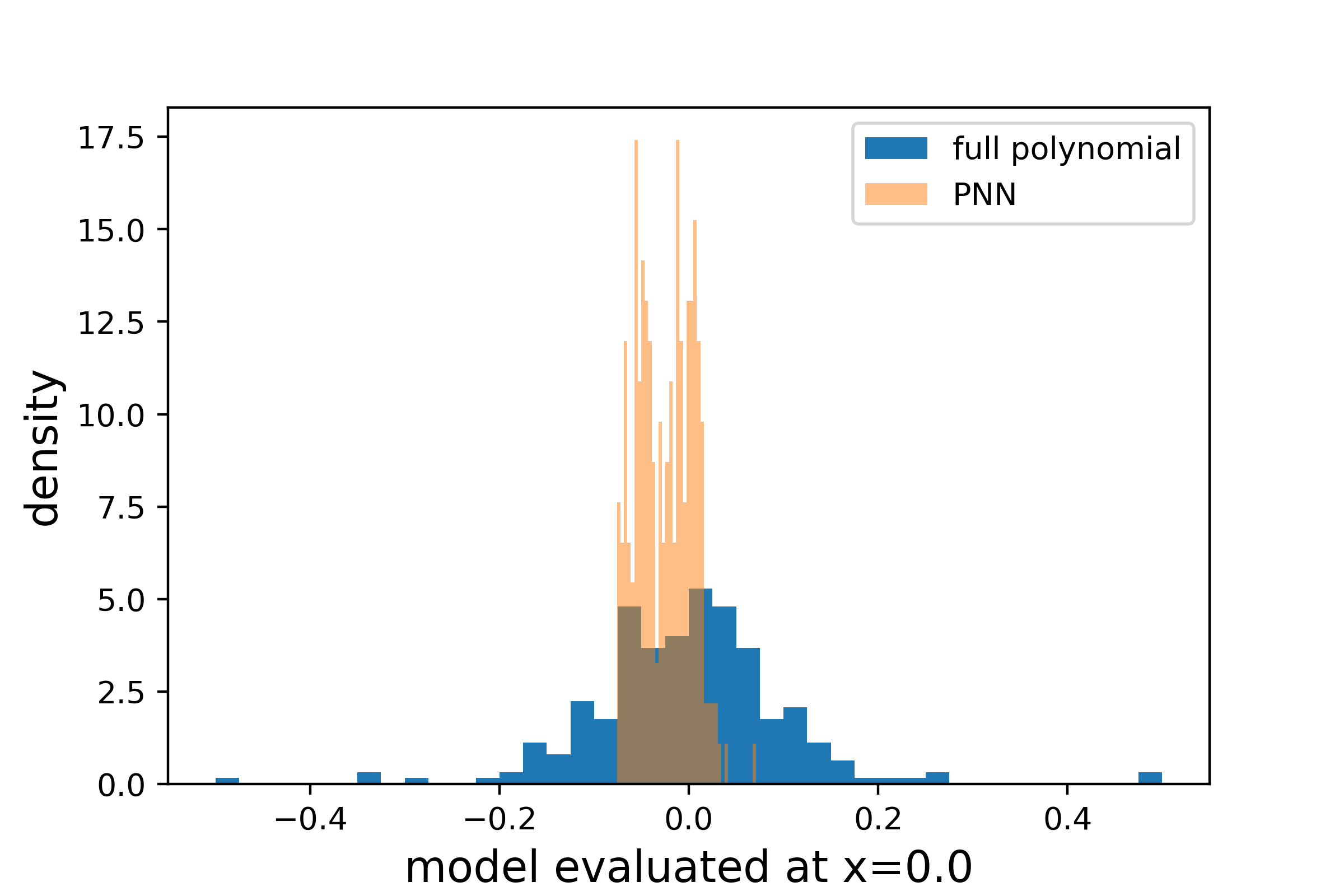}
\caption{ Distribution of the predicted values at $x=0$ from a PNN fitted to 250 tasks sampled from the family of functions defined by (\ref{eq-quadratic-noise}), as well as from the respective quadratic polynomial regressions.} \label{fig-quadratic-zero}
\end{figure}
\begin{figure}\centering
\includegraphics[width=0.49\textwidth]{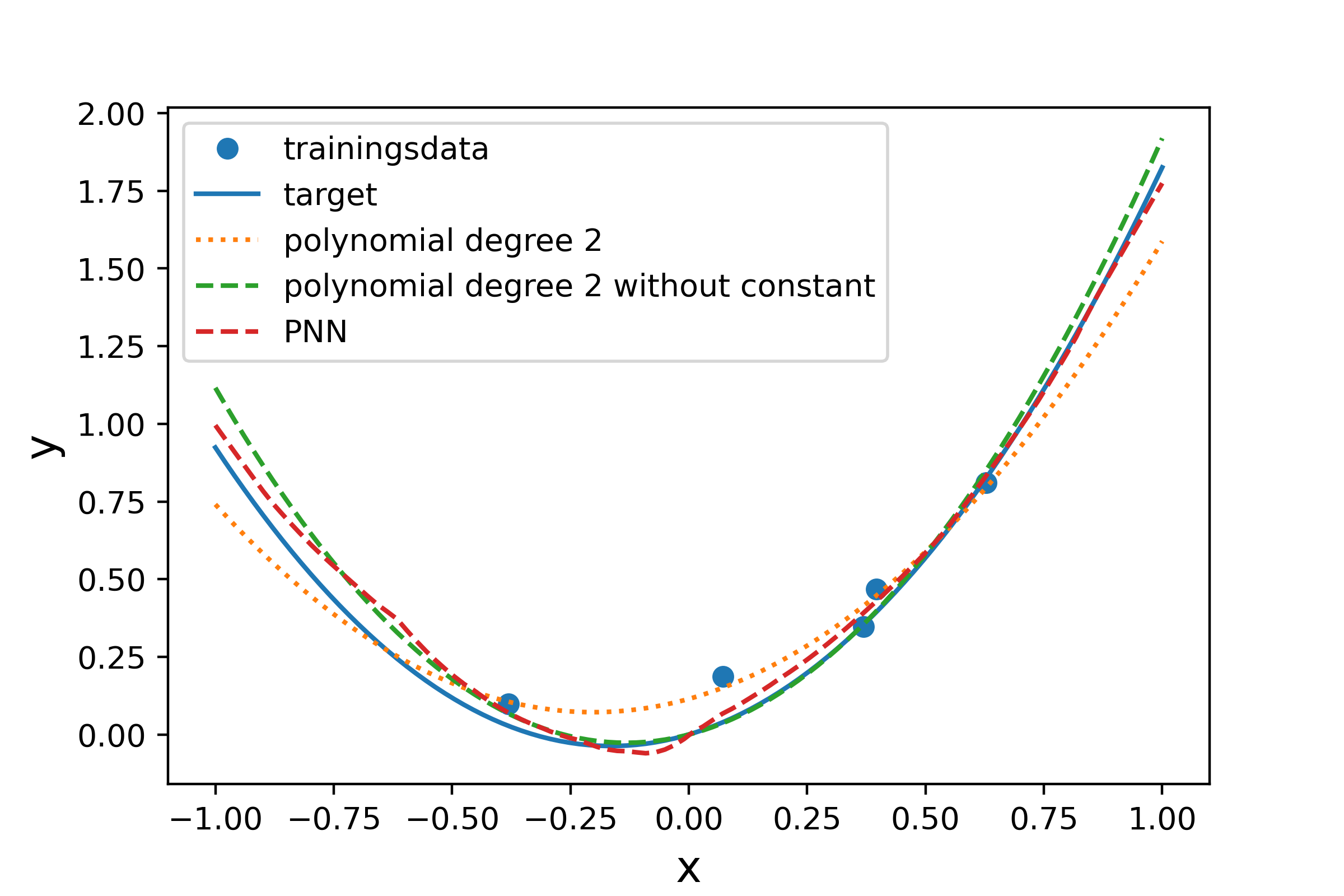}
\includegraphics[width=0.49\textwidth]{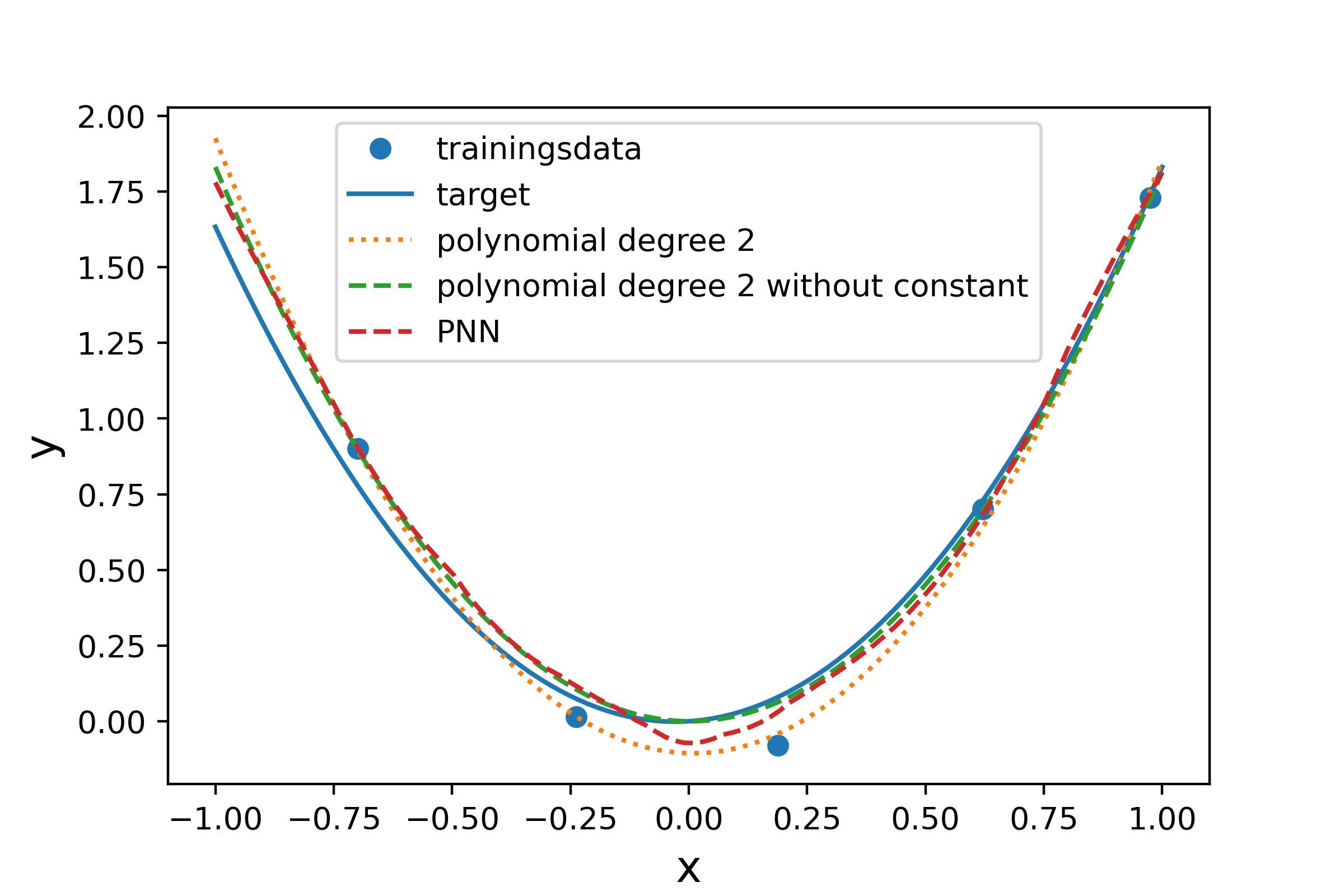}
\caption{Function values from PNN, polynomial regression, and target function for two tasks sampled by (\ref{eq-quadratic-noise}).} \label{fig-quadratic-zero-functions}
\end{figure}

We now consider the case of noisy data, again, generated by a family of quadratic functions, but with a bit more structure than in the previous example. Let
\begin{equation}\label{eq-quadratic-noise}
	f(x;a,b) := ax^2+bx +\varepsilon, \; x \in [-1,1],
\end{equation}
where $\varepsilon$ is normally distributed with standard deviation $0.1$. The parameters $a\in [1.0,2.0]$ and $b\in [-0.5,0.5]$ are uniformly sampled. Note that for any function of this family $f(0; a,b)=0$ is true. Due to the noise term, we sample five points per task for building the training data inputs.
In the following we compare the PNN results with two benchmarks based on quadratic polynomial regression: the first one including an estimation of the constant, and the second one setting the constant term to zero and estimating only the linear and quadratic coefficients. Ignoring the bounds for the parameters $a$ and $b$  the quadratic regression with zero constant seems to be the best possible model class for this kind of data. The right graph in figure \ref{fig-quadratic-noise} shows the error on the training data for an increasing number of parameters and eight different networks (with different initial weights), as well as the results for the two regression models. Note that the error is measured between the PNN and the target function \emph{without} noise. Independently of the number of parameters, the PNN provides smaller errors than the regression model using quadratic polynomials. Moreover, the mean error for two parameters is nearly equal to the error using the quadratic regression with  zero constant. In contrast to the previous example, we see that the errors for PNNs with parameter dimension greater than two are slightly higher than for PNNs with two parameters. The reason might be that due to the noise term a bit of overfitting is introduced.  Using some kind of regularization might further improve the results. However, even for larger number of parameters the results are quite good, having in mind that the performance is still better than applying a quadratic polynomial regression. From these results we see, that the model is able to learn the functional structure on noisy data too. Moreover, since the results for the PNN are better than for the quadratic model (with fitted constant), we can assume that the PNN is able to learn the property of the real function being equal to zero for $x=0$. This is visually confirmed by figure \ref{fig-quadratic-zero}, where the distribution of the function values at $x=0$ are plotted for the PNN and for the polynomial regression. Figure \ref{fig-quadratic-zero-functions} shows the target function, the PNN regression, and the polynomial regression functions for three different tasks.

\subsection{Family of quadratic functions with Interdependencies}

\begin{figure}
\includegraphics[width=0.5\textwidth]{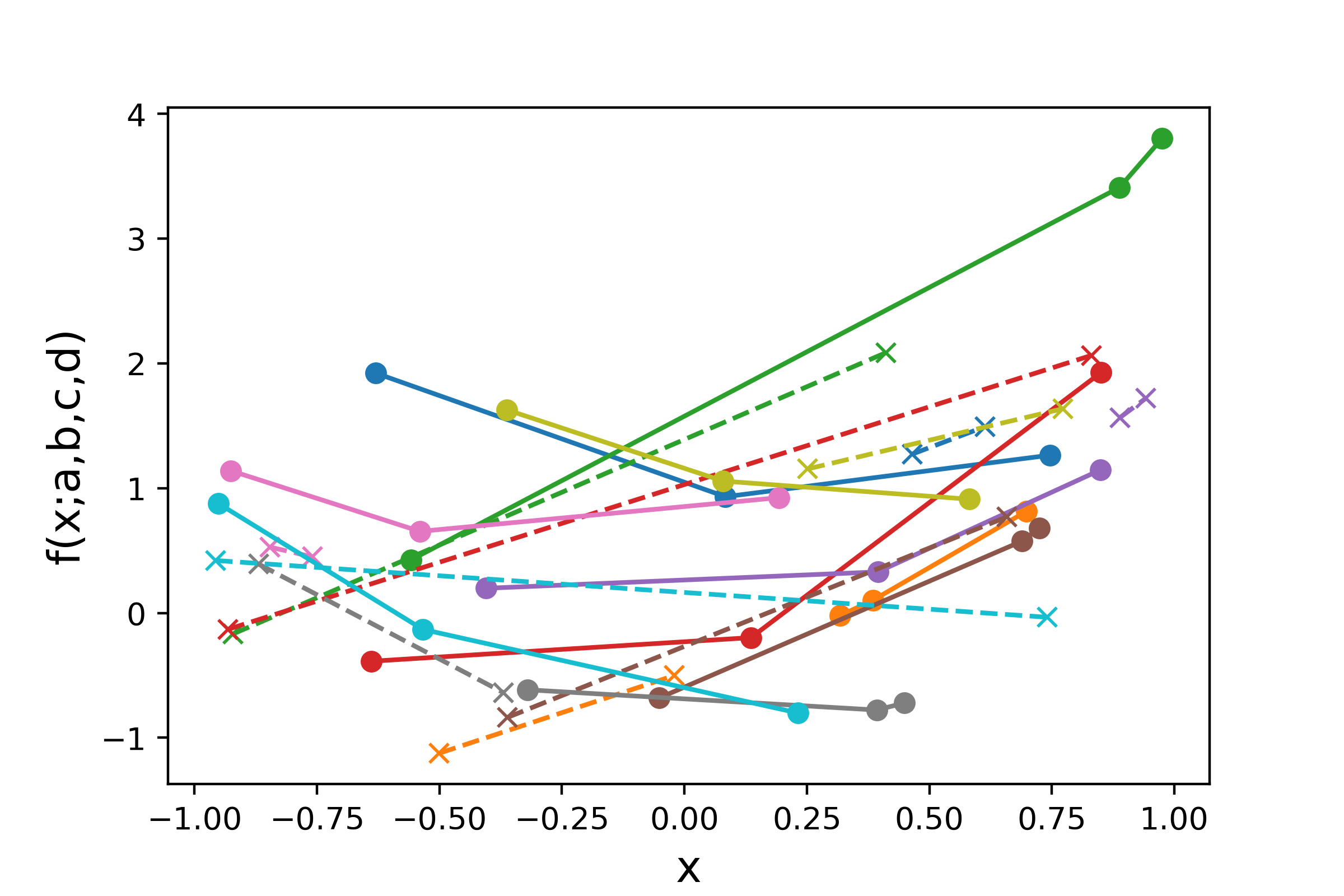}
\includegraphics[width=0.5\textwidth]{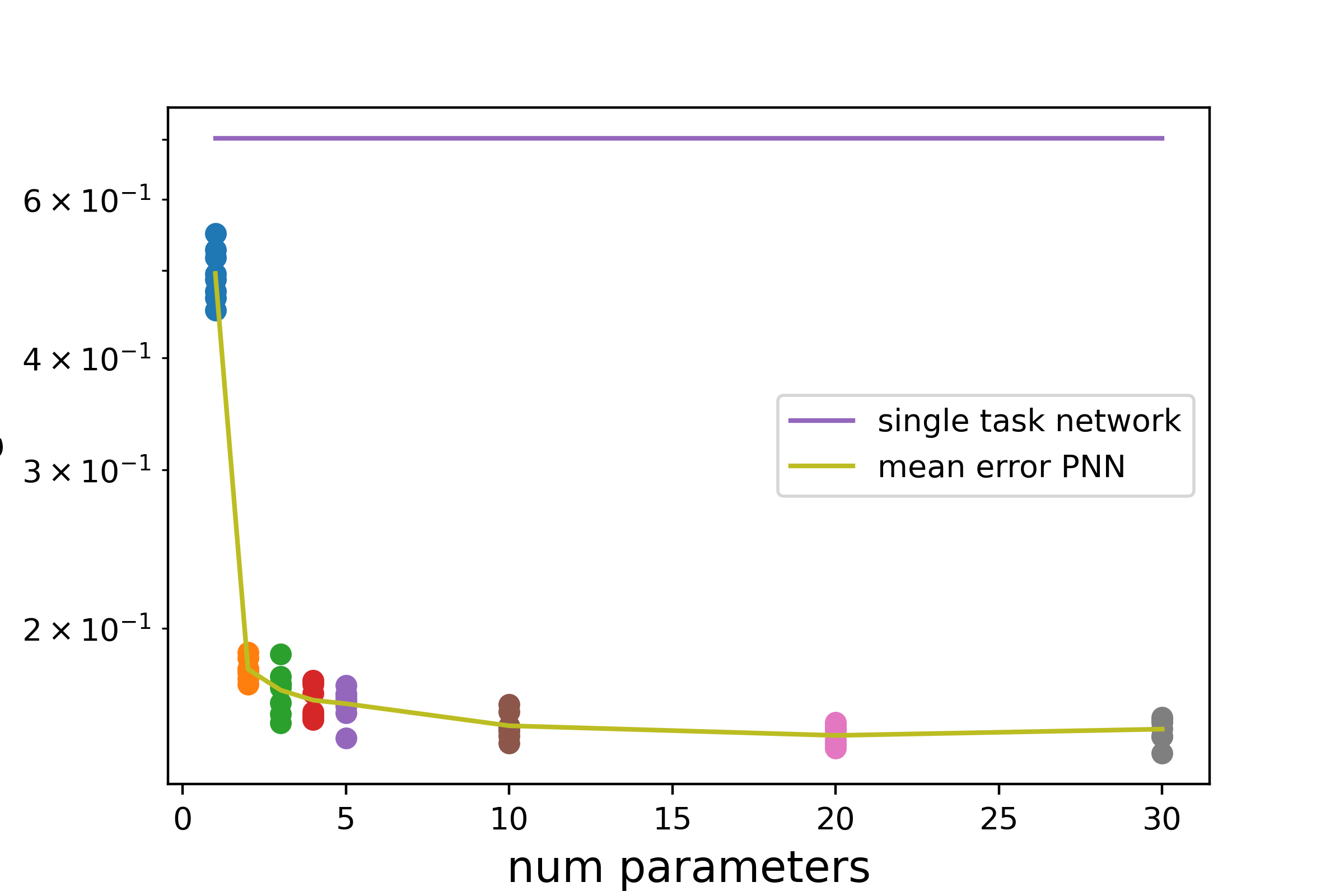}
\caption{Left: Several tasks sampled from (\ref{eq-quadratic2}). The straight lines correspond to the case $x_2=0$ and the dashed lines to $x_2=1$. Right: Approximation errors for different numbers of parameters of the PNN, using 100 different tasks for training.} \label{fig-quadratic2}
\end{figure}
\begin{figure}
\includegraphics[width=0.5\textwidth]{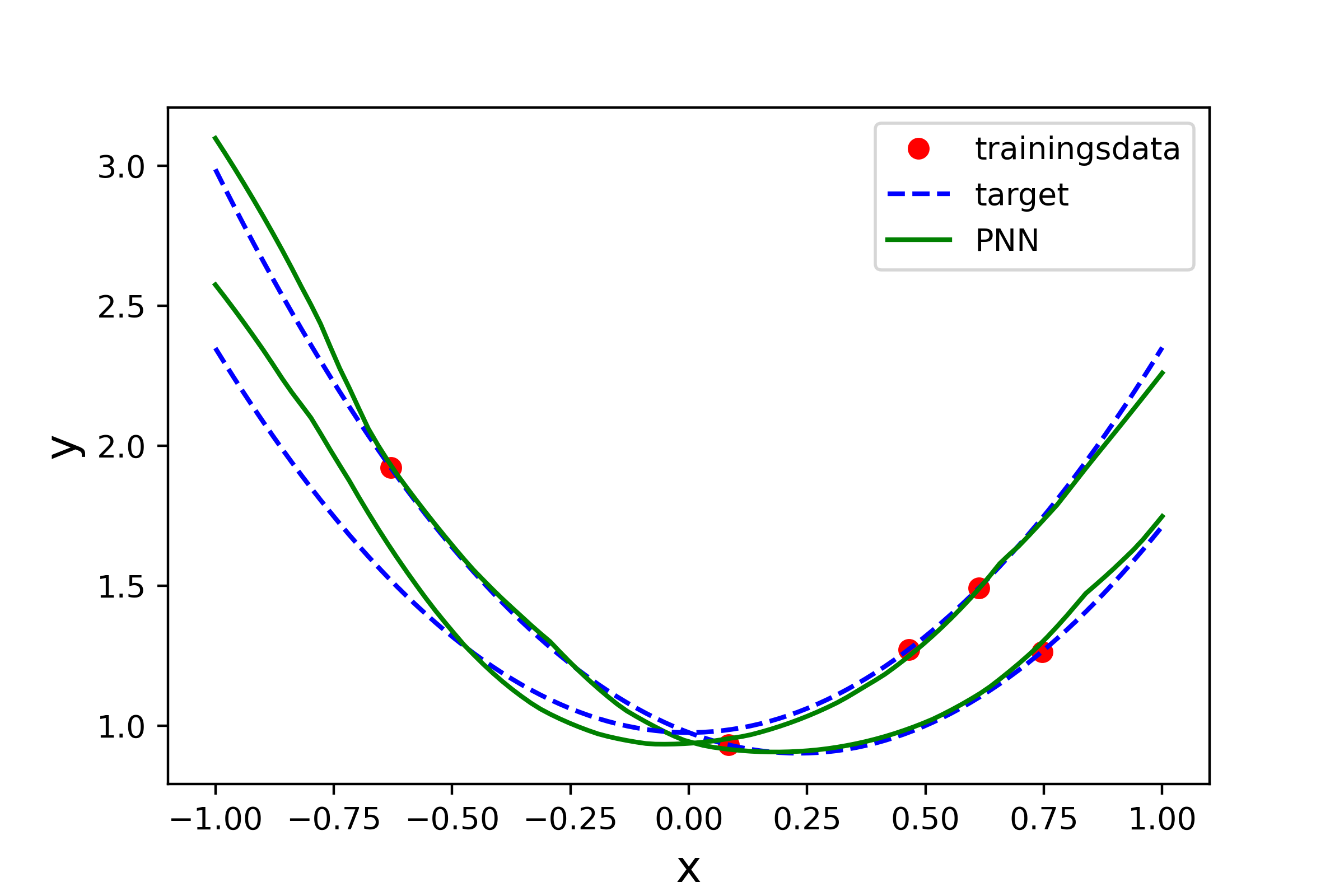}
\includegraphics[width=0.5\textwidth]{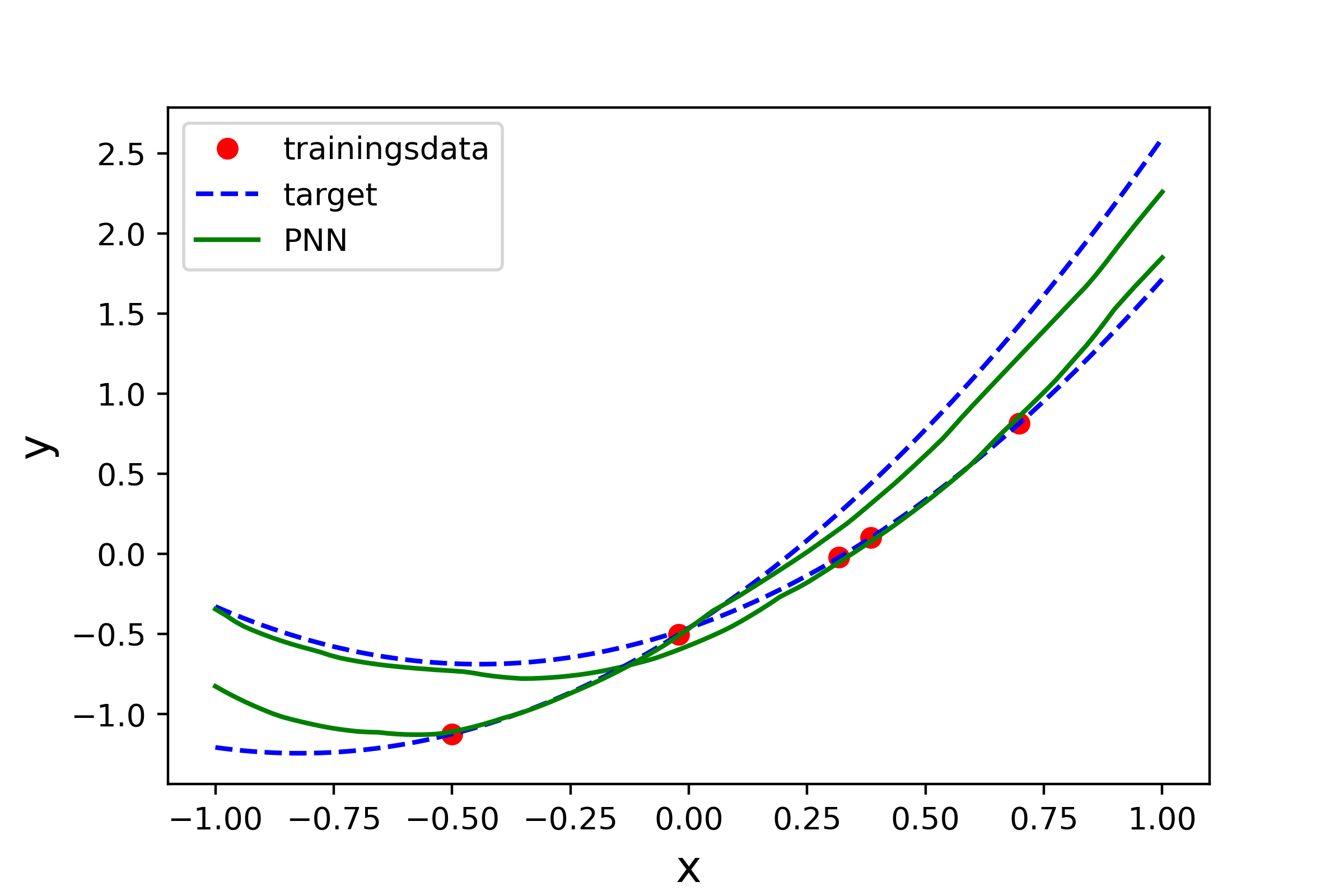}
\includegraphics[width=0.5\textwidth]{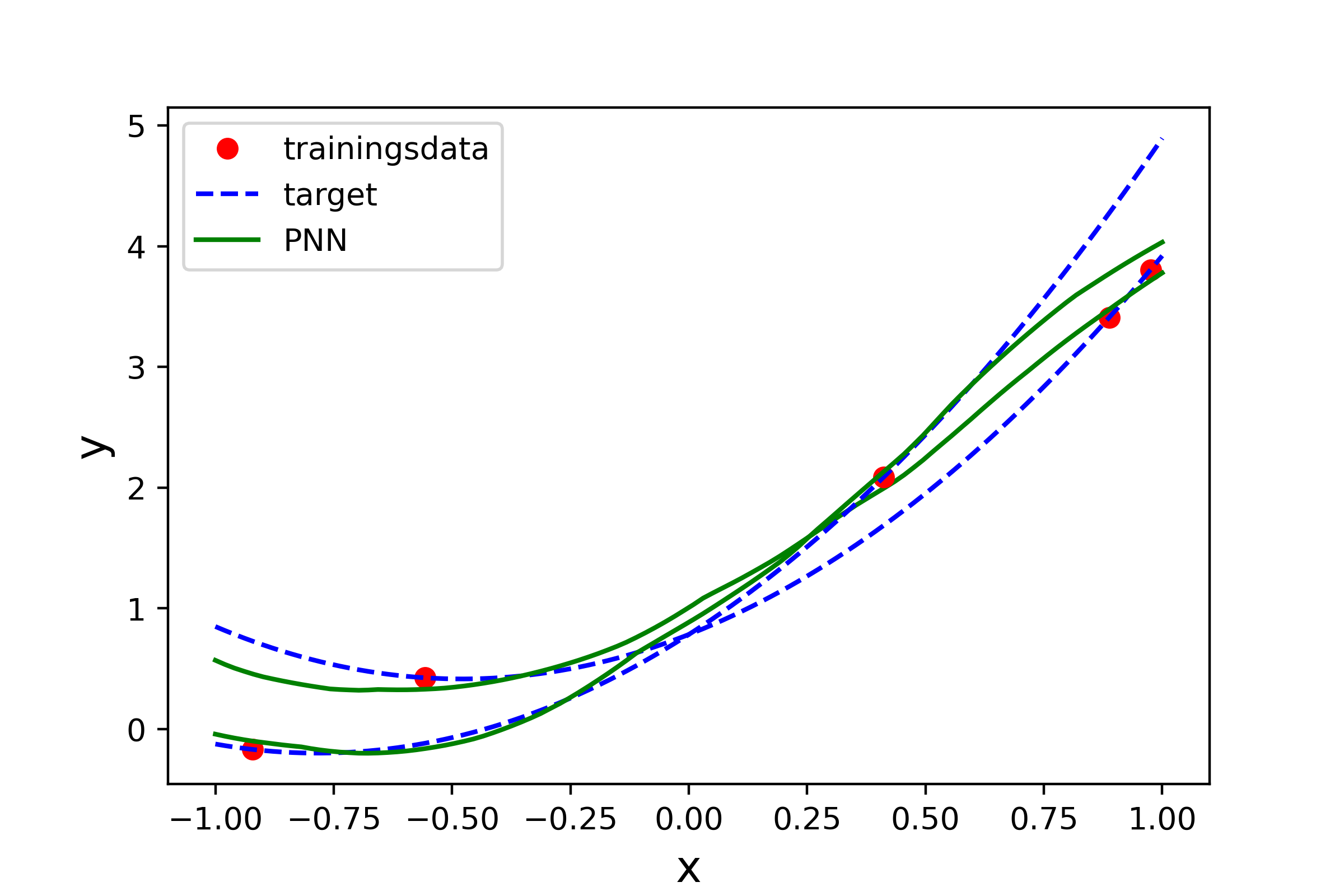}
\includegraphics[width=0.5\textwidth]{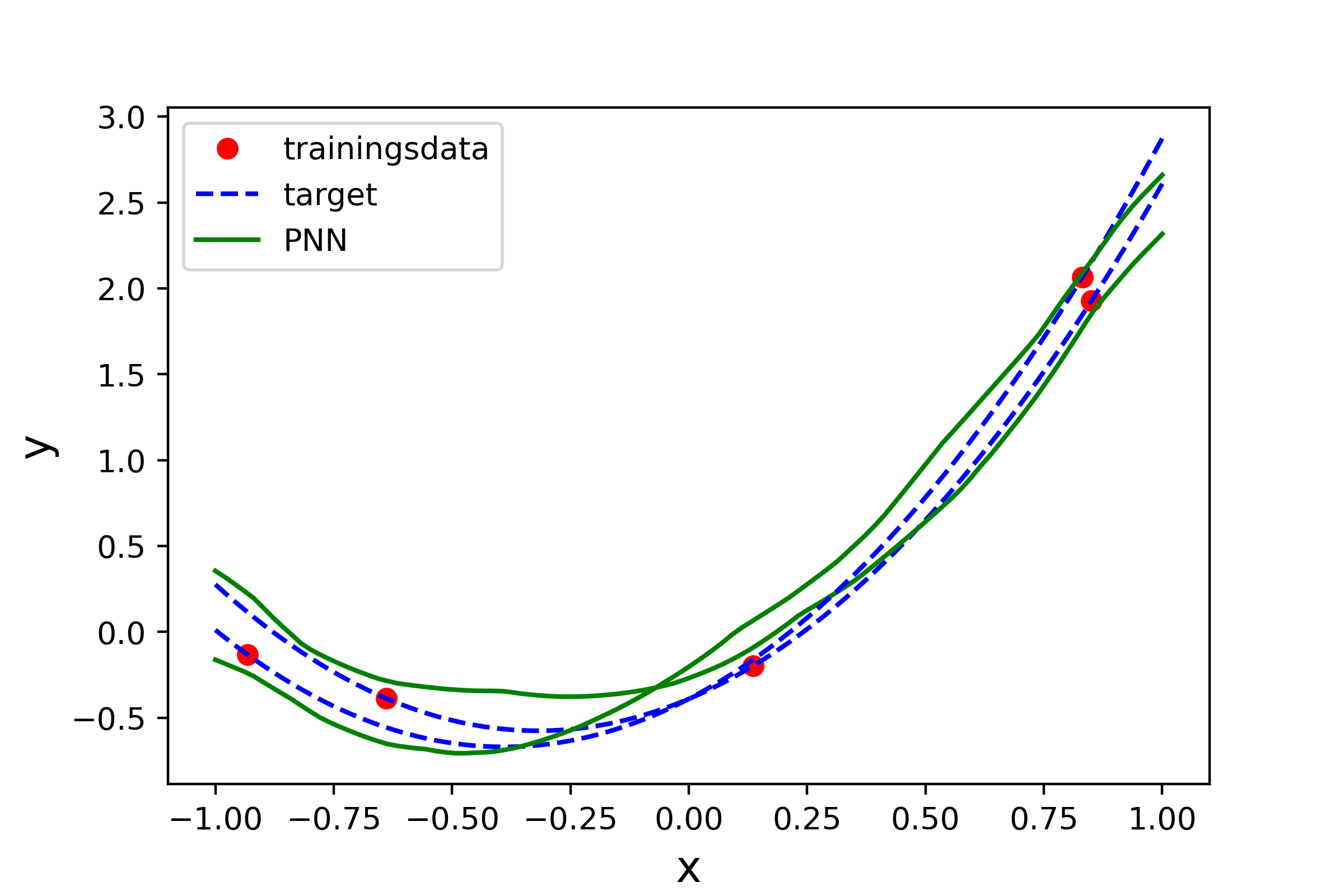}
\caption{Four tasks sampled from (\ref{eq-quadratic2}) with true functions and PNN estimations.} \label{fig-quadratic2-functions}
\end{figure}
% Function SimpleQuadratic2
Many financial applications involve binary or categorical features such as ratings or countries. As an example, consider the interest rate spread curves mentioned before.
With real data it happens often that input data is unbalanced in the sense that some categories occur much lesser than others. For instance, for a developed interest rate market one will find a large amount of bond prices for all rating categories, whereas smaller country may have only liquid prices for some of the ratings.

In such cases, PNNs may help to learn relationships between categories to improve results for underrepresented data. To analyze such behavior we perform the following simple experiment. 

For $x$ in $[-1,1]\times \{0,1\}$ we define the function family
\begin{equation}\label{eq-quadratic2}
	f(x;a,b,c,d) := \left\{ \begin{array}{ll}  a(x_1-c)^2+b + dx_1 & \mbox{ if } x_2=1, \\
                                                a(x_1-c)^2+b & \mbox{ otherwise.}
                                                \end{array}\right.
\end{equation}
As before, each task is constructed by uniformly sampling $a\in [1.0,2.0]$, $b\in [-1,1]$, $c\in [-0.5,0.5]$ and $d\in [0.1,1.0]$. We generate five $x$ values per task, where $x_1$ is uniformly sampled from $[-1,1]$, and for three of these five samples $x_2$ is set to $0$, and $x_2=1$ for the other two. Note, that splitting each task into two separate estimation problems (according to the binary value $x_2$) does not work, since in the case $x_2 = 1$ only two data points per task are given, which is not enough to recover the underlying quadratic structure of the function from the data. Several sample points of different tasks are plotted in the left graph of figure \ref{fig-quadratic2}. The right graph in figure \ref{fig-quadratic2} shows the error for different parameter dimensions together with the errors of a simple neural network fitted to each task separately. As in the previous examples, we see a significant improvement for parameter dimensions greater than 1 compared to the simple neural network case. The PNN learned from the tasks that the true function for $x_2=1$ is of parabolic shaped too. In figure \ref{fig-quadratic2-functions} the true and the estimated functions are plotted for four selected tasks.
% \begin{figure}
% \includegraphics[width=0.5\textwidth]{figures/SimpleQuadratic2/tasks.png}
% \includegraphics[width=0.5\textwidth]{figures/SimpleQuadratic2/error_vs_params.png}
% \caption{PNN tasks sampled from  (\ref{eq-quadratic2}) (left) where the straight lines show $x_2=0$ and the dotted lines $x_2=1$ as well as approximation errors using a different number of free parameters in the PNN (right).} \label{fig-quadratic2}
% \end{figure}

\subsection{Regimes}
% SimpleRegime
\begin{figure}
\includegraphics[width=0.5\textwidth]{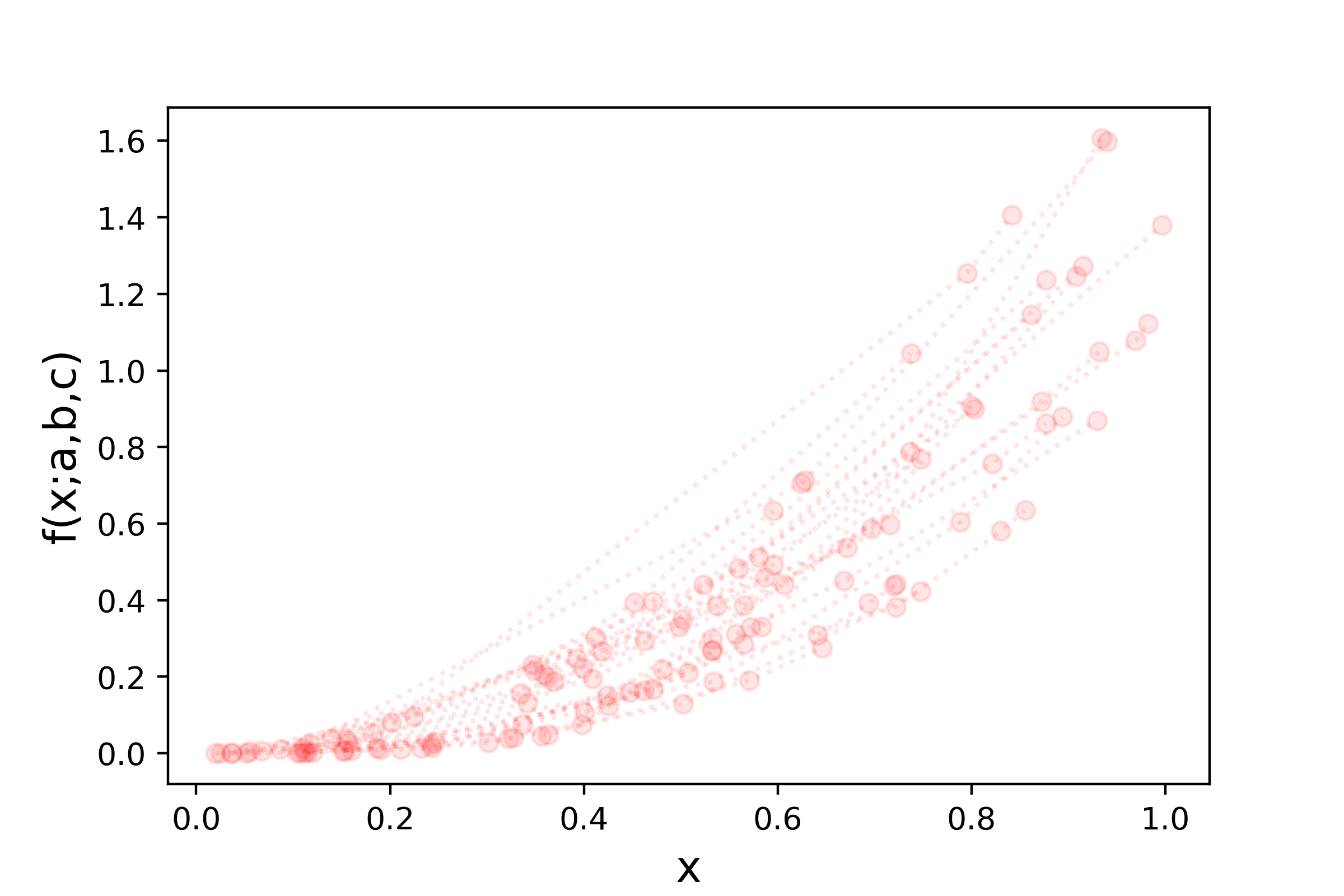}
\includegraphics[width=0.5\textwidth]{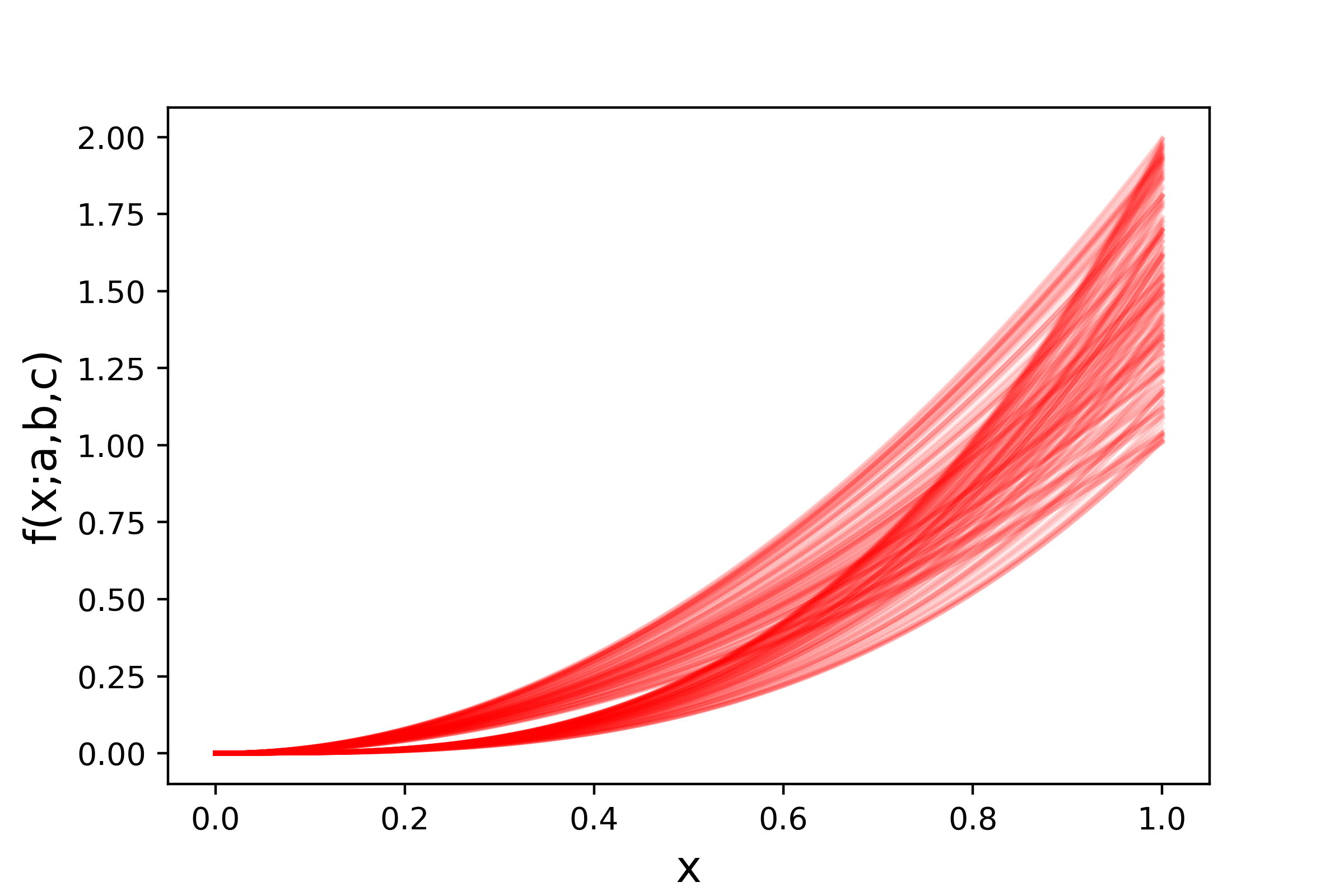}
\caption{Left: Sample data for 30 tasks for the function family in (\ref{fig-tasks-regime}), interpolated by straight lines for each sampled function. Right: 250 sampled functions (right).} \label{fig-regime}
\end{figure}
\begin{figure}\centering
\includegraphics[width=0.49\textwidth]{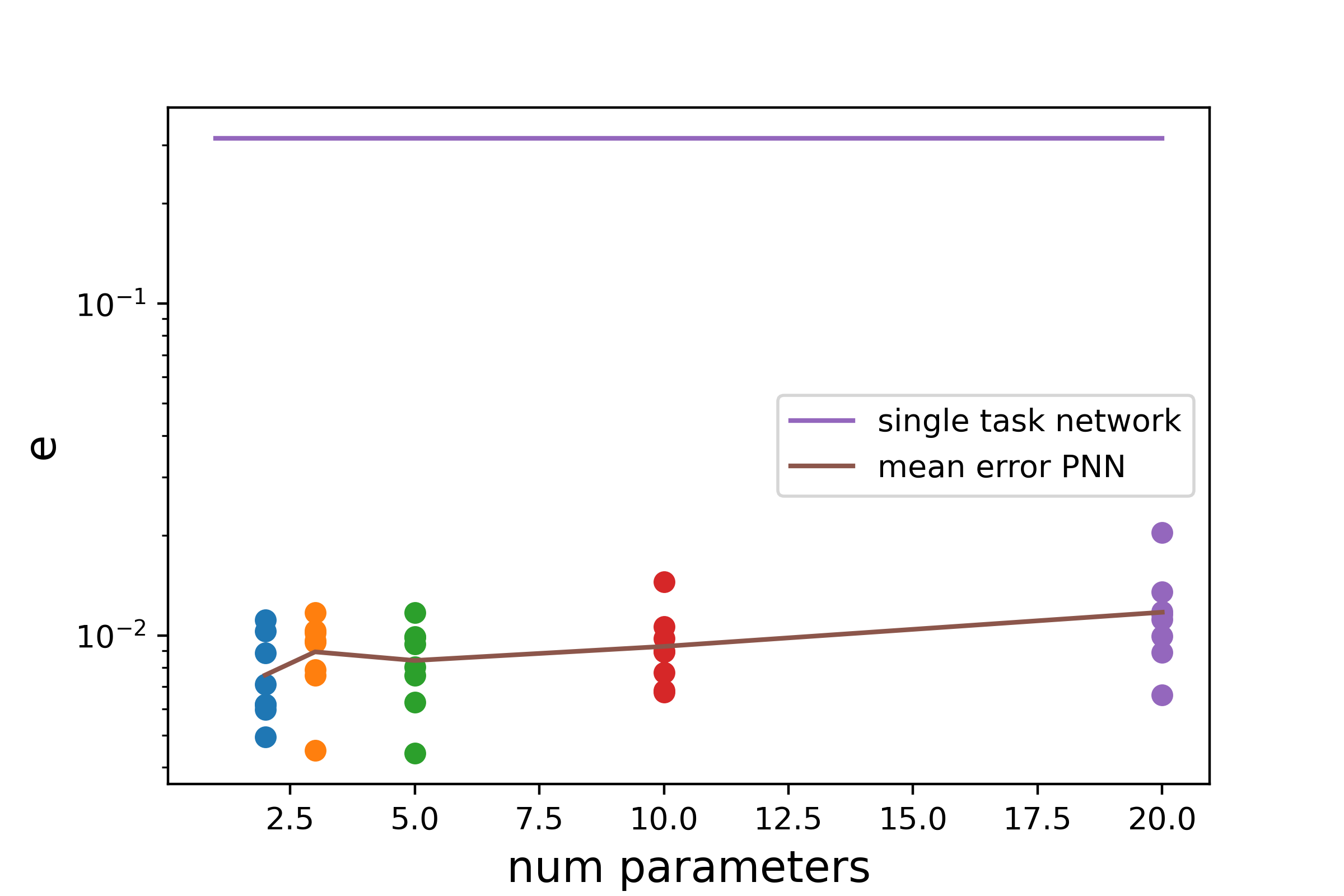}
\includegraphics[width=0.49\textwidth]{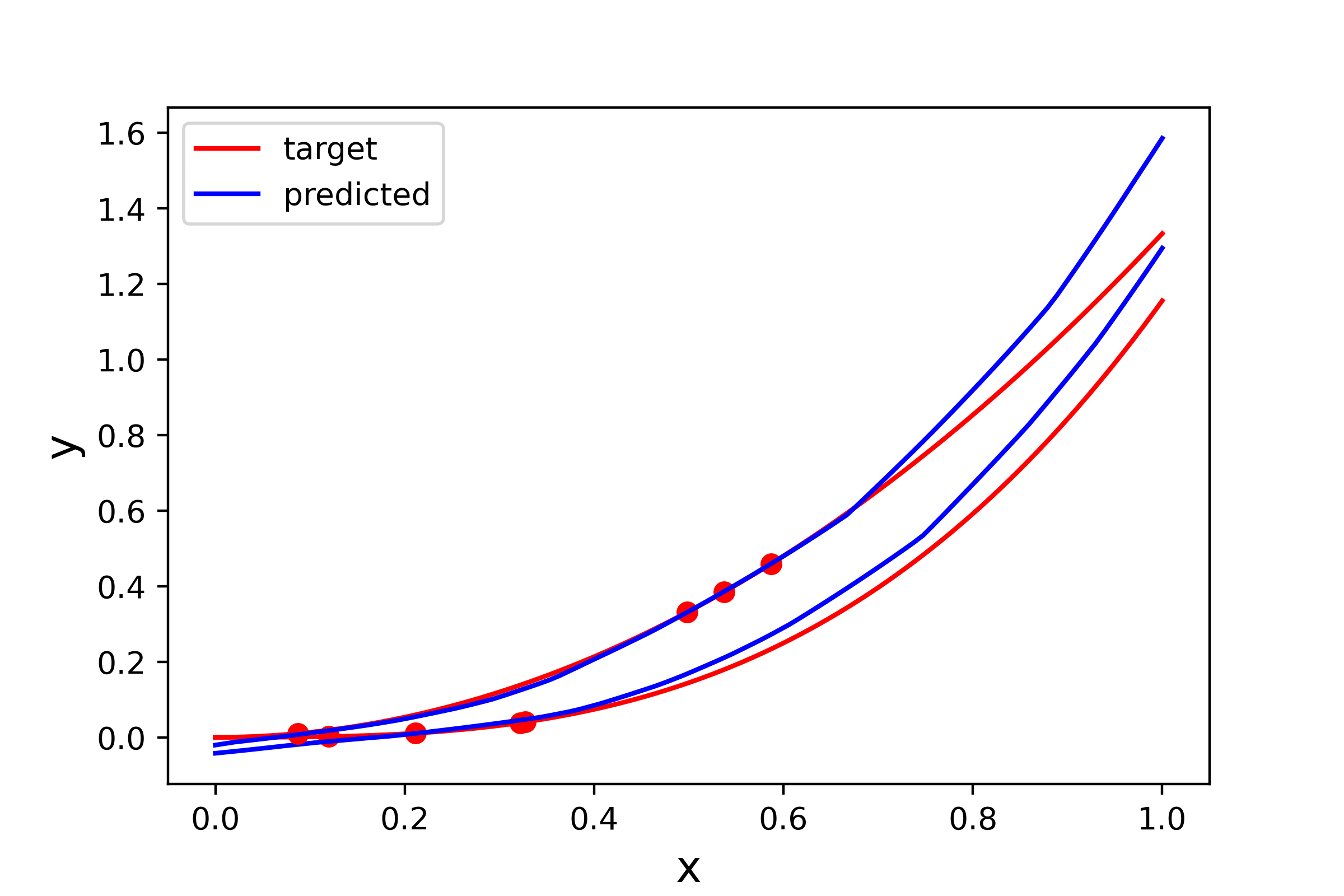}
\caption{Left: Error of PNN estimates for different parameter dimensions. Right: Two sampled tasks, one quadratic and one cubic, together with resulting PNN approximations and true functions.} \label{fig-regime-errors}
\end{figure}
\begin{figure}\centering
\includegraphics[width=0.5\textwidth]{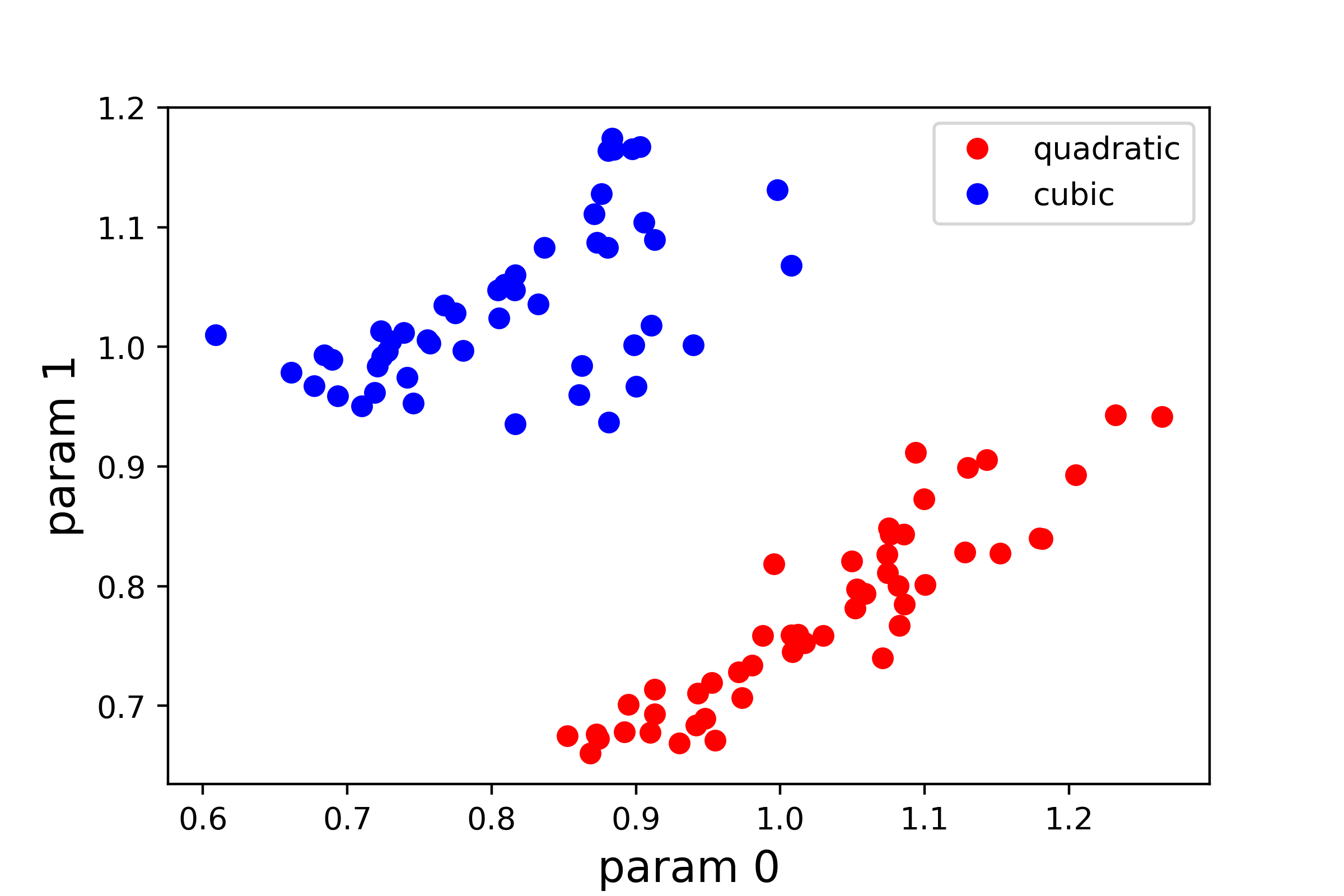}
\caption{Calibrated parameters for all training data points for a PNN with parameter dimension equal to two. } \label{fig-regime-points}
\end{figure}
In this example, the tasks are generated by two different functions and the information on the function used is not encoded in the feature data. The functions are simple quadratic and cubic monomials,
\begin{equation}\label{fig-tasks-regime}
\begin{array}{lcl}
	f_1(x)&=&ax^2 \mbox{ for } x\in[0,1],\\
	f_2(x)&=&ax^3 \mbox{ for } x\in[0,1].
\end{array}
\end{equation}
We construct each task by randomly choosing $f_1$ or $f_2$, sample $a$ uniformly from $[1,2]$ and four $x$ values uniformly from $[0,1]$.
Figure \ref{fig-regime} shows 30 sampled tasks in the left plot, and a sample of 250 functions in the right one. The error relating to different parameter dimensions is given in the left graph of figure \ref{fig-regime-errors}. In this example the error increases for increasing parameter dimension. One may guess that this behavior is due to overfitting, but the training error shows the same pattern. A possible explanation might be, that, since we are not applying any hyperparameter tuning, the optimization did not fully converge to sufficient accuracy. The right graph of figure \ref{fig-regime-errors} shows for a quadratic and a cubic task the predicted values for a PNN (with two parameters), the training data as well as the target functions. For PNNs with parameter dimension two, Figure \ref{fig-regime-points} shows for several tasks the corresponding parameter vector as a scatter plot.
We clearly see that the parameters can be separated into two sets, one representing the quadratic function regime, and the other the cubic function regime. 
\subsection{Bond Spread Curve Calibration}
\begin{figure}
    \includegraphics[width=0.5\textwidth]{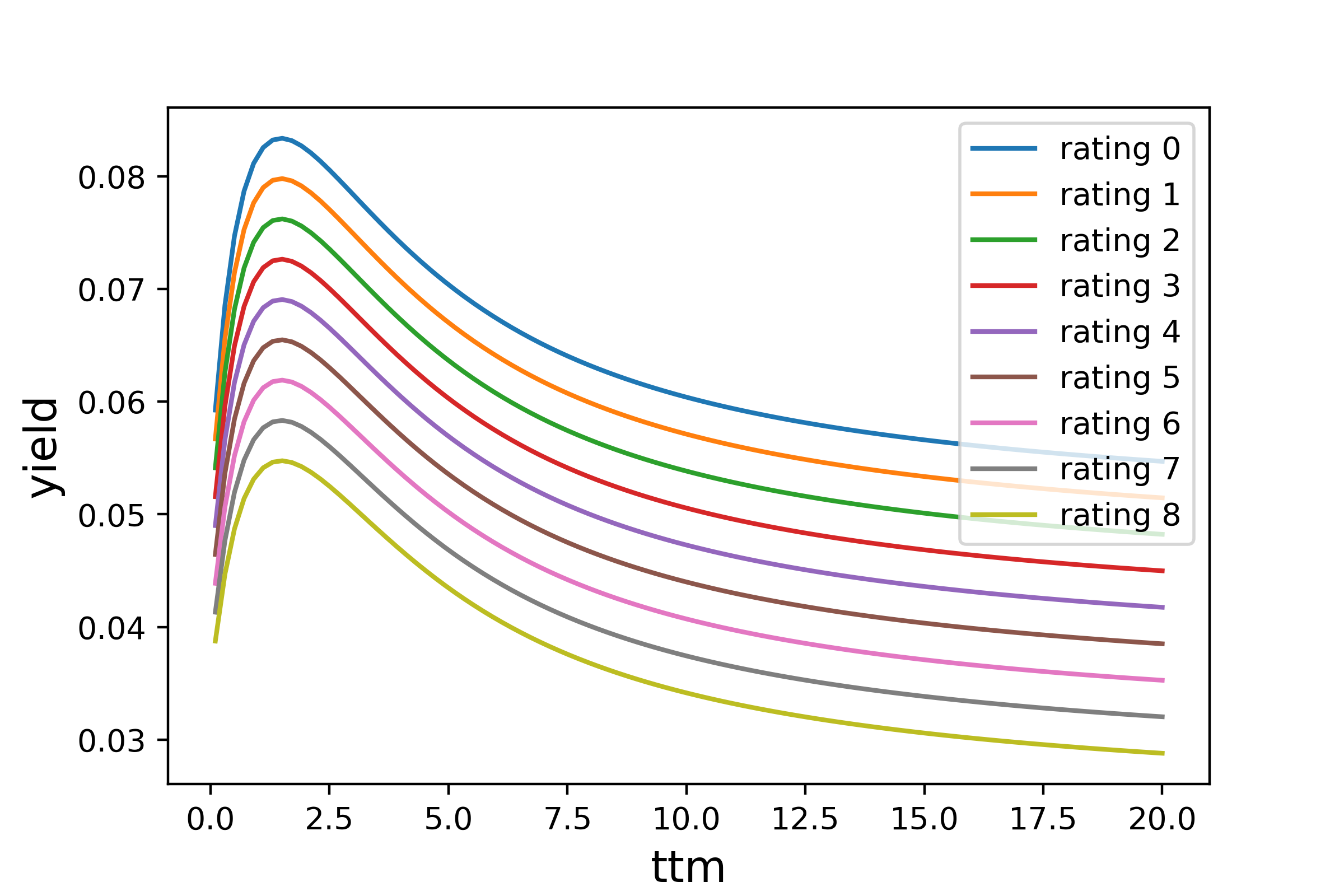}
    \includegraphics[width=0.5\textwidth]{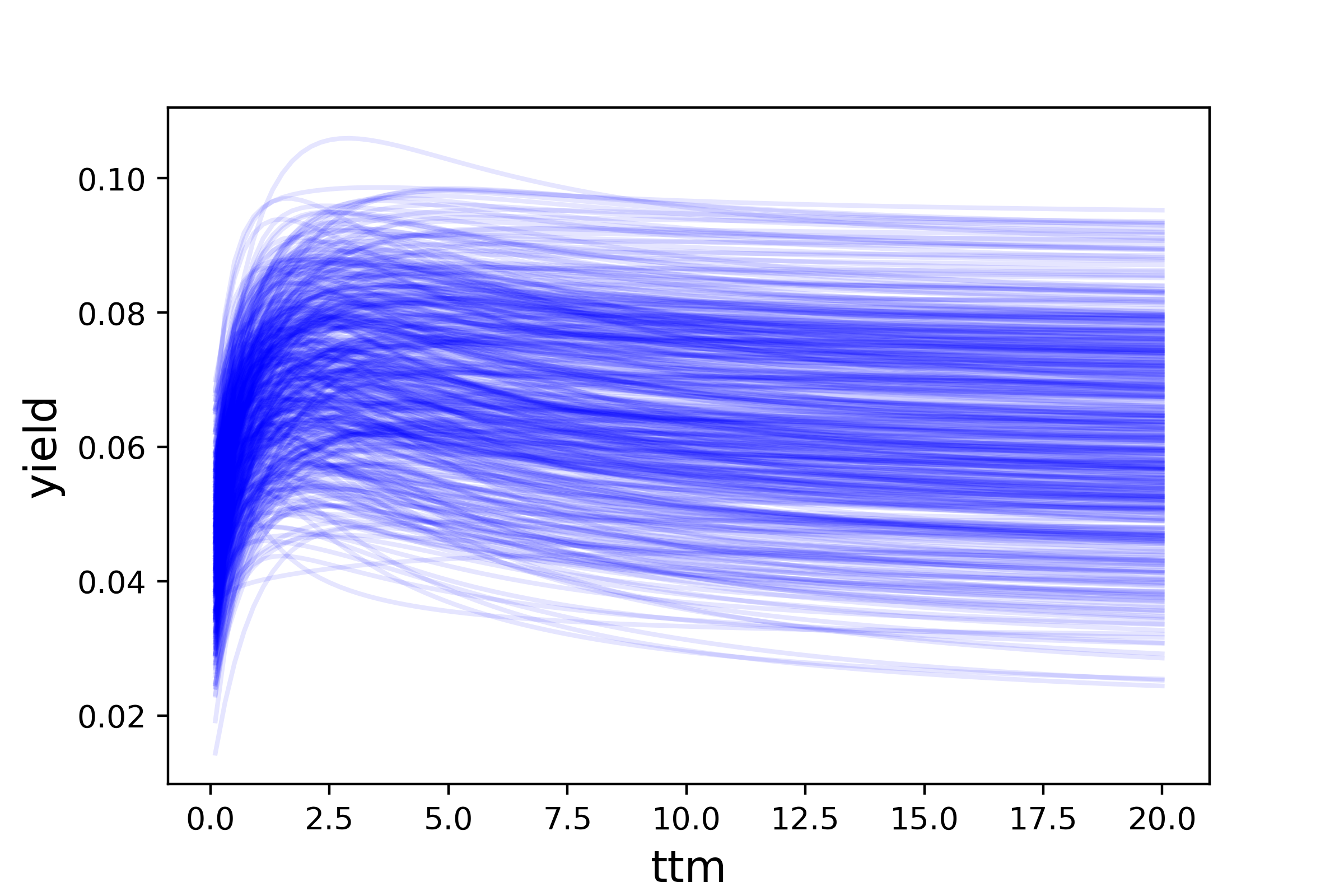}
    \caption{Left: Sampled yield curves for different ratings and all other categories fixed. Right: 500 randomly sampled yield curves with all categories fixed.} \label{fig-yields}
\end{figure}
\begin{figure}
    \includegraphics[width=0.49\textwidth]{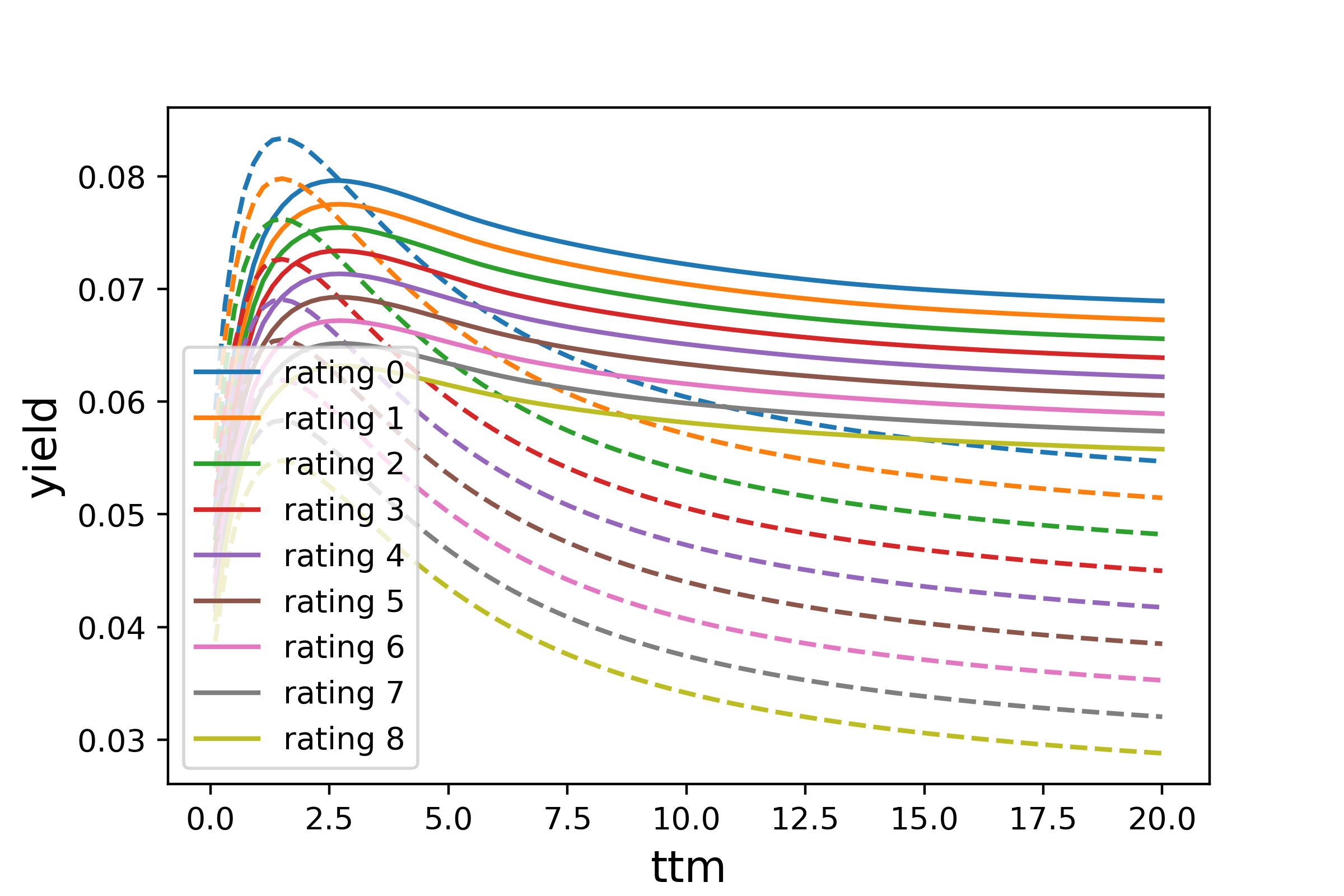}
    \includegraphics[width=0.49\textwidth]{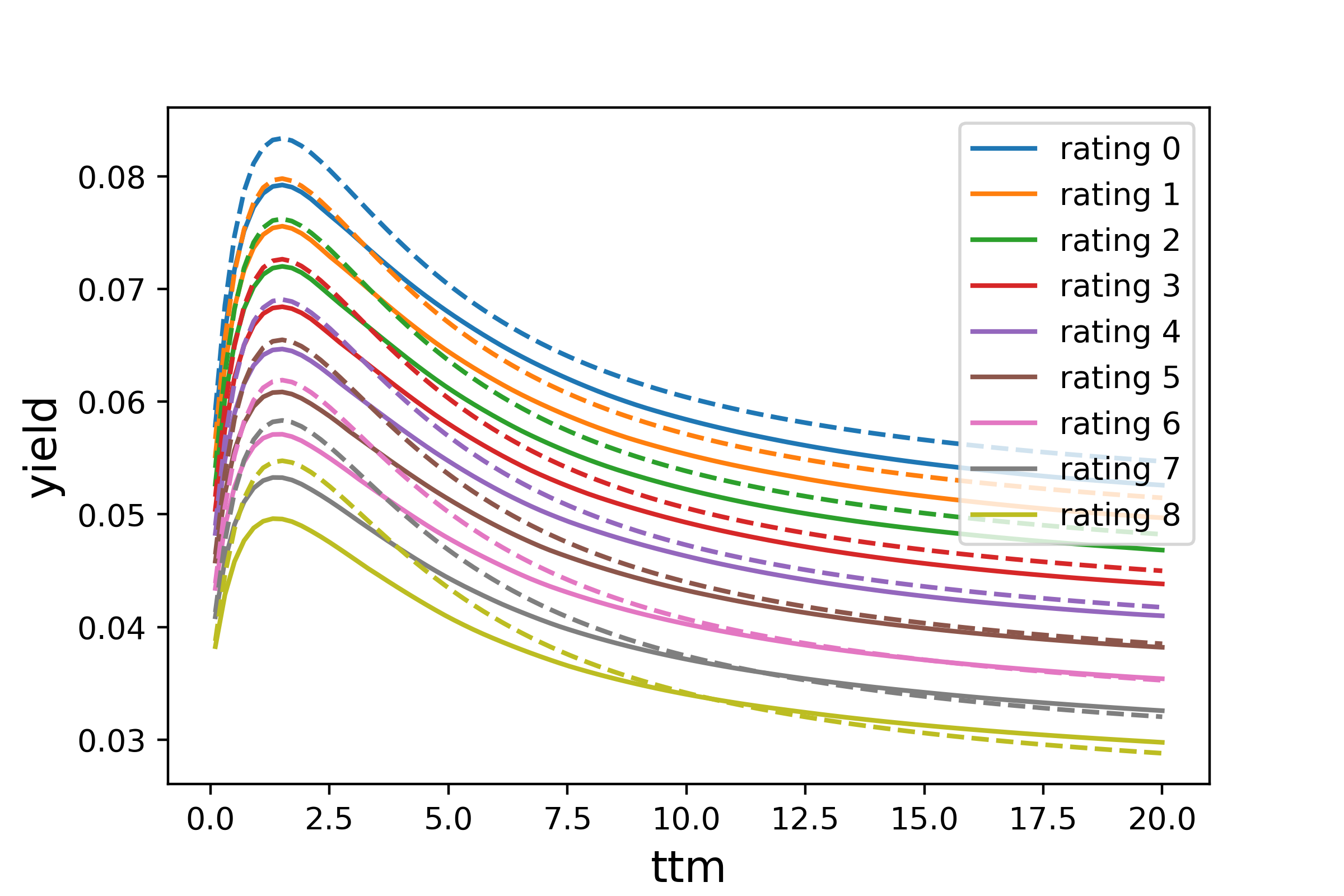}
    \includegraphics[width=0.49\textwidth]{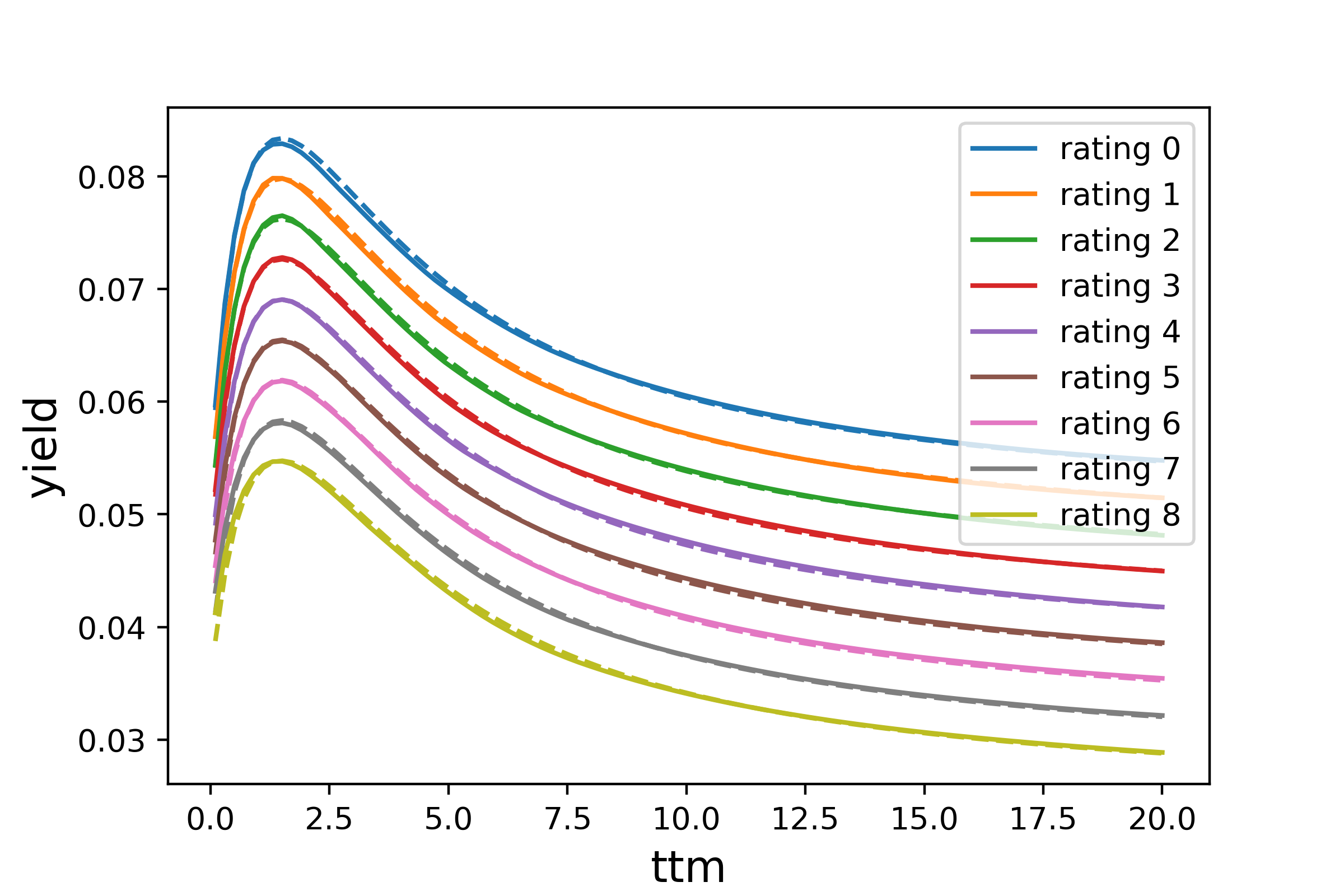}
    \caption{Target (dashed line) and PNN calibrated yield curves for different ratings and a fixed task, initial value (top left), after 1 epoch (top right, only PNN parameters are calibrated) and 15 epochs (bottom middle).} \label{fig-yields-error-evolution}
\end{figure}
\begin{figure} \centering
\includegraphics[width=1\textwidth]{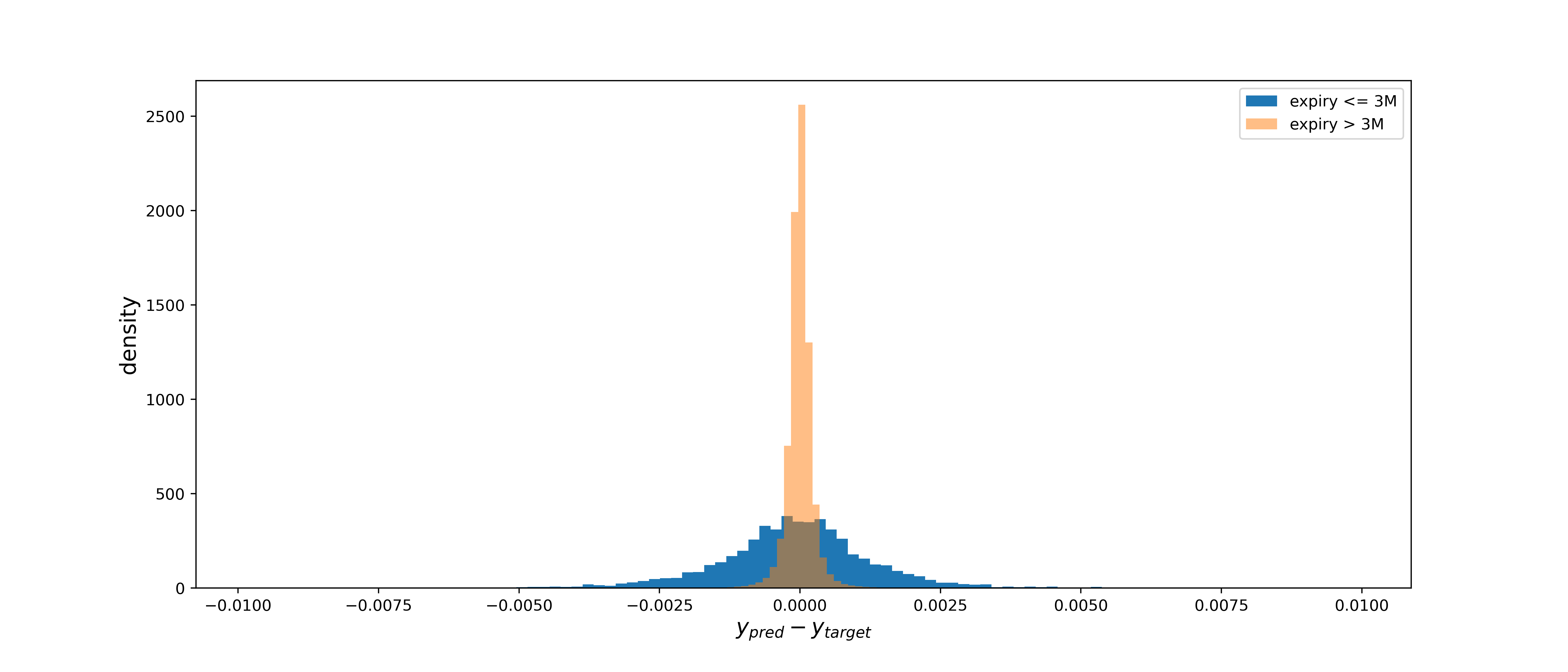}
\caption{Differences between target and predicted values for daily recalibrated PNN (600 days) with 600 different bonds per day.} \label{fig-bond-yield-errors}
\end{figure}
In this section, we analyze the PNN for a  potential application in finance, the calibration of spread curves for bond pricing. Here, we use artificially created data  and not real market data in order to allow for a precise measurement of the performance. The application to real-world data is straightforward and will be presented in future work.

For the pricing of bonds one typically uses curves that express the excess return over a risk-free rate depending on the maturity, the so-called spread curve.  Given such a curve for a bond, the price of the bond can simply be derived by discounting all cash flows of the bond with the values of the curve relating to the dates where the cashflows are paid. 

As the basis for the creation of the data, we use the  
Nelson-Siegel parametrization \cite{Nelson1987} that is given by 
\[
r(T;\beta_0, \beta_1,\beta_2, \tau):= 
        \beta_0 + (\beta_1+\beta_2)\tau (1.0-e^{-T/\tau})/T - \beta_2e^{-T/\tau}.
\]
We assume that there are four different categories influencing the spread: company rating, country, sector, ESG rating. More precisely, we consider 9 rating classes, 5 countries, 11 sectors, and 3 ESG ratings which gives a total of 1485 different classes with different spread curves. Note that other features like liquidity and securitization level that also affect bond prices in practice can also easily be incorporated into our approach. 
In real-world applications, there is usually not enough price data to calibrate all these curves for each category. Therefore, either the categories have to be defined on a coarser level, or some relationships between the curves need to be used. For example, a spread for a bond with a lower rating must be higher than the spread for a bond with the same features but a higher rating. Note that although we may hot have enough quotes on one day to successfully calibrate a network giving the spread for a given bond, 
we usually have a lot of data over time. This allows a PNN to learn a parameterization that reflects relationships between these categories and allows one to calibrate just the embedding parameter to data for a single day. 

The data is created as follows. For each task,  we first sample two sets of Nelson-Siegel parameters for each of the above four categories 
\begin{eqnarray*}
\beta_0^{i,j} &\sim& \mathcal{U}(0,0.15), \\
\beta_1^{i,j} &\sim& \mathcal{U}(0-\beta_0^{i,j},0.1-\beta_0^{i,j}),\\
\beta_2^{i,j}  &\sim& \mathcal{U}(0, 0.2),\\
\tau^{i,j} &\sim&  \mathcal{U}(0.2,2.0),
\end{eqnarray*}
where $i\in\{1,2\}$ and $j$ denotes the category. We then define two curves
\begin{eqnarray}
s_{1,j}(T) &=& r(T;\beta_0^{1,j}, \beta_1^{1,j},\beta_2^{1,j}, \tau^{2,j}),\\
s_{2,j}(T) &=& s_{1,j}(T) + r(T;\beta_0^{2,j}, \beta_1^{2,j},\beta_2^{2,j}, \tau^{2,j}).
\end{eqnarray}
Note, that due to the range of parameters and the construction of the $s_{2,j}$, we always have $s_{1,j}(T)\leq s_{2,j}(T)$. We denote by $n_j$ the number of elements in category $j$, e.g. $n_j=3$ for the category \emph{ESG rating}. We then define the overall curve $s(T;k_1,k_2,k_3,k_4)$, $1\leq k_j\leq n_j$ by
\[
s(T;k_1,k_2,k_3,k_4) =  \sum_{j=1}^4 w_j\left(\frac{n_j-k_j}{n_j-1}s_{1,j}(T) + \frac{k_j-1}{n_j-1}s_{2,j}(T)\right)
\]
where $w_1 = 0.1$, $w_2=0.2$, $w_3=0.5$, $w_4=0.2$. Figure \ref{fig-yields} shows one set of yield curves (left) sampled with different ratings (all other categories are the same) and a set of sampled curves all with the same categories. If we would handle all yield curves separately using the Nelson-Siegel parametrization, due to the high number of different categories we would end up with 5940 parameters. For most interest rate markets there is not enough bond data to calibrate that many parameters. However, due to our construction which imposes a strong structure between the yields of different categories, the problem has in fact just  64 parameters. Here, the PNN may learn this structure using much fewer parameters, making it possible to calibrate the embedded parameters when new data comes in. In our experiment, we use simple  feedforward neural network with three layers and 64 activation functions per layer. Since the credit and ESG ratings exhibit a meaningful ordering, we transform them to ordinal values between zero and one. The maturity is scaled linearly so that 20 years are transformed to 1.0. The country and sector features are one hot encoded, which leads to a total input dimension of 19. The training parameters are given by
\begin{itemize}
    \item 500 different tasks as training data where each task consists of 600 bonds and respective yields as training data.
    \item Simple feedforward neural network with three layers and 64 activation functions per layer.
    \item Adam optimizer with the initial learning rate 0.001 and exponential decay learning rate schedule (decay factor 0.99), 1000 data points per batch, and 20.000 epochs.
\end{itemize}

We test the resulting network by sampling new data (600 days with 600 bond yields per day) and recalibrating just the PNN parameters each day, leaving the base model unchanged. For the recalibration of the parameters (leaving the network weights fixed) we use the average over all parameters derived from the initial training as the start value and apply 15 epochs of Adam optimizer over the 600 data points with 10 points per batch. Note the very fast and robust convergence properties we observed in our experiments. Here, figure \ref{fig-yields-error-evolution} shows the resulting curves versus the targets for a data point (varying ratings) at the beginning of the parameter recalibration, after one, and after 15 optimizer steps. 
A histogram of the differences between target and predicted values on the overall test data is shown in figure \ref{fig-bond-yield-errors}, split into the errors for bonds with maturity less than 3 months and all others. We see that most of the errors are in a range of less then 
10 basis points for the bonds with maturities beyond 3 months while the error of short-dated bonds is in a wider range of around 25 basis points. The reason why short-dated bonds show a higher error can be seen in the right graph of figure \ref{fig-yields} where some sample curves for a fixed bond are depicted which shows that most of the curves are very steep at the beginning while on the other hand, compared to the overall number of bonds only, the training data contains not that many bonds with such a short maturity (approximately 1.2\%). Here, to further improve results for the short-dated instruments we could use oversampling or a different weighting. 

\section{Conclusion and Future Work}
In this paper, we discussed and analyzed a very simple form of MTL where all network weights except the bias in the first layer are shared between different tasks.
We showed by several simple examples that this approach is able to learn a family of models for given data that makes it relatively easy to recalibrate the respective parameters to new data avoiding overfitting. Another important aspect to apply methods in the financial domain is the safety and robustness of the algorithm when retrained on new data. Here, an interesting aspect of the proposed method is  that the validation of the model class can be done in advance by validating the trained model w.r.t. the embedding parameters. So if new data comes in and we need to recalibrate the model by recalibrating just the embedded parameters, we have a strong indication that our model behaves well as long as the embedded parameters stay in the range that was used in the validation.
Moreover, one example showed that the calibrated embedded parameters can be used to identify different regimes indicating that these parameters may also be used to analyze the data that is used for training which might give some further insights into the problem structure.

These results indicate the potential power of this approach within the financial domain where many problems may show a certain macrostructure between different tasks overcoming the problem that we may  not have enough data for a successful calibration of a neural network to a single task. As an example, we investigated the performance of the method within the context of calibrating bond yields to market data on an artificially created toy dataset. Here, the different tasks consisted of bond yields on different days depending on typical static bond data such as credit ratings, ESG ratings, countries, and sectors.
This example showed that the proposed method is able to learn from a set of different tasks a parametrization of the bond yields that can be stable and robustly recalibrated to new data.

However, although the results are quite promising we have to test the method on more benchmark applications as well as on real data. Applications may vary from the estimation of credit default probabilities over the construction of new parameterized volatility surfaces up to portfolio optimization problems and the estimation of conditional probabilities.

\section{Appendix - Proof of Theorem \ref{Generalization-Property}}\label{Appendix-proof}
In this section, we give a sketch of the proof for Theorem \ref{Generalization-Property} which is quite similar to the theory presented in  \cite{Baxter2019a} with just minor modifications to handle our special case. We therefore just present the most relevant theorems and lemmas for the proof of Theorem \ref{Generalization-Property} and refer to  \cite{Baxter2019} for the proof of these statements and for further definitions. Recall from section \ref{sec-task-embedding} that $n$ denotes the number of tasks and $m$ the number of samples per task, let $Z\subset\mathbb{R}^{k_1}\times \mathbb{R}^{k_2}$. 

To deal with the additive structure of (\ref{empirical-loss_mt}) and (\ref{true-loss_mt}) we first define the following.
\begin{definition}
Let $\mathcal{H}_1,$...,$\mathcal{H}_n$ be n sets of functions mapping $Z$ into $[0,M]$, For all $h_i\in\mathcal{H}_i$, define 
\[
    \bigoplus_{i=1}^n h_i (\vec{z}):=\frac{1}{n}\sum_{i=1}^nh_i(z_i)
\]
and define the set of all these functions as $\bigoplus_{i=1}^n \mathcal{H}_i$.
\end{definition}
For such an additive structure we have the following theorem from \cite{Baxter2019a}.
\begin{theorem}\label{theorem-1}
Let $\mathcal{H}\subset \bigoplus_{i=1}^n \mathcal{H}_i$ be a permissible set of functions  $Z^n\mapsto [0,M]$. Let $\mathbf{z} \in Z^{(m,n)}$ be generated by $m>\frac{2M}{\alpha^2\nu}$ independent trials from $Z^n$ according to some product probability measure $\vec{P}=P_1 \times \cdots\times P_n$. For all $\nu>0$, $0<\alpha<1$,
\[
Pr\left\{\mathbf{z}\in Z^{(m,n)}:\exists \vec{h}\in \mathcal{H}: d_\nu\left(\langle \vec{h}_\mathbf{z}\rangle,  \langle \vec{h}_{\vec{p}} \rangle\right)>\alpha\right\}
\leq 4\mathcal{C}(\alpha\nu/8,\mathcal{H})e^{-\frac{\alpha^2\nu n m}{8M}},
\]
where $\langle \vec{h}_\mathbf{z}\rangle:=\frac{1}{m}\sum_{i=1}^mh(\vec{z}_i)$ and $\langle h\rangle_P = \int_{Z^n}h(\vec{z})dP(\vec{z})$
\end{theorem}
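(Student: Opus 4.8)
We do not reproduce the proof of Theorem~\ref{theorem-1}, for which we refer to \cite{Baxter2019a}; we only indicate how we would obtain it, since the argument is the template behind Theorem~\ref{Generalization-Property}. The starting point is to regard an $m$-sample $\mathbf z\in Z^{(m,n)}$ as $m$ independent draws of the ``super-example'' $\vec z_j=(z_{1j},\dots,z_{nj})\in Z^n$ from $\vec P=P_1\times\cdots\times P_n$, and to view $\mathcal H$ as a class of $[0,M]$-valued functions on $Z^n$. With this reading the statement is formally the single-task uniform-convergence bound of Section~\ref{Theory} applied on $Z^n$, except that its exponent would only carry $m$ rather than $nm$. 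Closing this gap is the one place the additive structure is genuinely used: because $\vec h=\bigoplus_{i=1}^n h_i$ and $\vec P$ is a product measure, the atoms $\{z_{ij}\}_{i\le n,\,j\le m}$ are \emph{mutually} independent (with $z_{ij}\sim P_i$) and $\langle\vec h\rangle_{\mathbf z}=\frac1{nm}\sum_{i=1}^n\sum_{j=1}^m h_i(z_{ij})$ is an empirical average over $nm$ independent terms, so concentration occurs at rate $nm$.

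Concretely, I would follow the classical three steps. \emph{Symmetrization:} introduce an independent ghost sample $\mathbf z'$ from $\vec P$ and establish
\[
{\rm Pr}\{\exists\vec h:\ d_\nu(\langle\vec h\rangle_{\mathbf z},\langle\vec h\rangle_{\vec P})>\alpha\}\ \le\ 2\,{\rm Pr}\{\exists\vec h:\ d_\nu(\langle\vec h\rangle_{\mathbf z},\langle\vec h\rangle_{\mathbf z'})>\alpha/2\}.
\]
The hypothesis $m>2M/(\alpha^2\nu)$ is exactly what a Chebyshev estimate on the ghost sample needs so that, for each fixed $\vec h$, $\langle\vec h\rangle_{\mathbf z'}$ lies within the $d_\nu$-tolerance of $\langle\vec h\rangle_{\vec P}$ with probability at least $\tfrac12$; the scale-sensitive denominator $\nu+x+y$ of $d_\nu$ is engineered precisely so that this variance-versus-mean trade-off closes uniformly, including in the small-loss regime. \emph{Randomization:} since the $nm$ atoms are independent, swapping any subset of the pairs $(z_{ij},z'_{ij})$ leaves the joint law unchanged, so one may attach an independent Rademacher sign $\sigma_{ij}$ to the $(i,j)$-th difference.

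\emph{Discretization and tail bound:} fix a minimal $\alpha\nu/8$-cover $\widehat{\mathcal H}\subset\mathcal H$ (in the empirical $\ell_1$ pseudo-metric on the sampled atoms, which dominates $d_\nu$ up to the stated scaling), of cardinality at most $\mathcal C(\alpha\nu/8,\mathcal H)$; fineness of the cover transfers the bad event from any $\vec h\in\mathcal H$ to its representative, with the loss of constants absorbed into the $8$'s. For a fixed $\vec h\in\widehat{\mathcal H}$, conditionally on $(\mathbf z,\mathbf z')$ the quantity $\langle\vec h\rangle_{\mathbf z}-\langle\vec h\rangle_{\mathbf z'}=\frac1{nm}\sum_{i,j}\sigma_{ij}\bigl(h_i(z_{ij})-h_i(z'_{ij})\bigr)$ is an average of $nm$ independent symmetric $[-M,M]$-valued terms, so Hoeffding's inequality bounds its tail by $2\exp(-\alpha^2\nu nm/(8M))$ once the $d_\nu$ event is rewritten as a deviation relative to $\nu$. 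A union bound over $\widehat{\mathcal H}$, the symmetrization factor $2$, and a further factor $2$ for the ghost/two-sided step then yield the claimed bound $4\,\mathcal C(\alpha\nu/8,\mathcal H)\,e^{-\alpha^2\nu nm/(8M)}$.

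The main obstacle is not conceptual but the bookkeeping around $d_\nu$: checking that $d_\nu$-closeness on the cover propagates to all of $\mathcal H$ with only a constant loss, that the Chebyshev step delivers the factor $\tfrac12$ under $m>2M/(\alpha^2\nu)$, that permissibility makes the supremum over the uncountable class $\mathcal H$ measurable, and that the numerical constants ($8$, $8M$, the $\alpha\nu/8$ inside $\mathcal C$) line up. All of this is standard and carried out in \cite{Baxter2019a,Haussler1992}; the only point one must not miss is that the product form of $\vec P$ together with the additivity of $\bigoplus_{i=1}^n h_i$ upgrades the effective sample size from $m$ to $nm$, which is the entire content of the theorem beyond the single-task case.
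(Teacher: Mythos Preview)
Your proposal is appropriate: the paper itself does not prove Theorem~\ref{theorem-1} either, stating it verbatim as a result ``from \cite{Baxter2019a}'' and referring there ``for the proof of these statements and for further definitions.'' So deferring to \cite{Baxter2019a} is exactly what the paper does.

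Your added sketch goes beyond the paper and is a correct outline of the standard Pollard--Haussler symmetrization/randomization/cover argument; in particular you correctly isolate the one place where the additive structure $\vec h=\bigoplus_i h_i$ and the product form of $\vec P$ matter, namely that $\langle\vec h\rangle_{\mathbf z}$ is an average of $nm$ mutually independent bounded terms, so Hoeffding (or the corresponding $d_\nu$-adapted tail bound) yields the exponent $\alpha^2\nu nm/(8M)$ rather than $\alpha^2\nu m/(8M)$. The remaining bookkeeping you flag (the Chebyshev step under $m>2M/(\alpha^2\nu)$, permissibility for measurability, the $\alpha\nu/8$ scale in the cover) is indeed the content of \cite{Haussler1992,Baxter2019a}. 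There is no gap.
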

From \cite{Baxter2019a} we have the following Lemma.
\begin{lemma}\label{lemma-1}
Let $\mathcal{H}:X\mapsto A$ be of the form $\mathcal{H}=\mathcal{G}\circ \mathcal{F}$ where $X\xmapsto{\mathcal{F}} V \xmapsto{\mathcal{G}} A$. For all $\varepsilon_1,\varepsilon_2>0$, $\varepsilon=\varepsilon_1+\varepsilon_2$,
\begin{equation}\label{Eq-Capacity-Composition}
\mathcal{C}(\varepsilon) \leq \mathcal{C}_{l_{\mathcal{G}}}(\varepsilon_1,\mathcal{F})\mathcal{C}({\varepsilon_2,l_{\mathcal{G}}}).
\end{equation}
\end{lemma}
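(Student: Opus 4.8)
The plan is to prove the bound by explicitly constructing an $\varepsilon$-cover of the induced loss class $l_{\mathcal{H}} = l_{\mathcal{G}\circ\mathcal{F}}$ whose cardinality is at most the product $\mathcal{C}_{l_{\mathcal{G}}}(\varepsilon_1,\mathcal{F})\,\mathcal{C}(\varepsilon_2,l_{\mathcal{G}})$, and then invoking the definition of capacity as the size of a smallest such cover. Throughout I would work with the capacity defined as a covering number with respect to the $d_P$-type pseudometric of \cite{Haussler1992, Baxter2019a}, taken as a supremum over probability measures, so that a single cover remains valid for every distribution on the relevant space; this uniformity is exactly what makes the composition argument close.

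First I would fix minimal covers of the two factor classes. Let $\{f_1,\dots,f_{N_1}\}$ be a minimal $\varepsilon_1$-cover of $\mathcal{F}$ in the pseudometric adapted to $l_{\mathcal{G}}$, so that $N_1 = \mathcal{C}_{l_{\mathcal{G}}}(\varepsilon_1,\mathcal{F})$, and let $\{l_{g_1},\dots,l_{g_{N_2}}\}$ be a minimal $\varepsilon_2$-cover of the loss class $l_{\mathcal{G}}$, so that $N_2 = \mathcal{C}(\varepsilon_2,l_{\mathcal{G}})$. The candidate cover of $l_{\mathcal{H}}$ is the product family $\{\, l_{g_k}\circ f_j : 1\le j\le N_1,\ 1\le k\le N_2\,\}$, which has at most $N_1 N_2$ elements; the cardinality bound is then immediate once I verify that this family is an $\varepsilon$-cover.

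The heart of the argument is a triangle-inequality split. Given an arbitrary $h = g\circ f \in \mathcal{H}$, I would choose $f_j$ from the first cover within $\varepsilon_1$ of $f$ in the $l_{\mathcal{G}}$-adapted metric, and $l_{g_k}$ from the second cover within $\varepsilon_2$ of $l_g$. Writing $d$ for the governing pseudometric, I would bound
\[
d\bigl(l_{g}\circ f,\; l_{g_k}\circ f_j\bigr) \le d\bigl(l_{g}\circ f,\; l_{g}\circ f_j\bigr) + d\bigl(l_{g}\circ f_j,\; l_{g_k}\circ f_j\bigr).
\]
The first term is at most $\varepsilon_1$ precisely because the cover of $\mathcal{F}$ was taken in the metric adapted to $l_{\mathcal{G}}$, which controls $d(l_g\circ f, l_g\circ f_j)$ simultaneously for every $g\in\mathcal{G}$. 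The second term is at most $\varepsilon_2$ because $l_{g_k}$ approximates $l_g$, with the approximation measured after pushing the input distribution forward through $f_j$; here I would use that $\mathcal{C}(\varepsilon_2,l_{\mathcal{G}})$ is a supremum over measures on $V$, so the single cover $\{l_{g_k}\}$ is $\varepsilon_2$-good under the pushforward $f_j\#P$ as well. Summing yields $\varepsilon_1+\varepsilon_2 = \varepsilon$, which establishes the covering property and hence (\ref{Eq-Capacity-Composition}).

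I expect the main obstacle to be aligning the three metrics so that both error terms are genuinely controlled, the counting itself being trivial. Concretely, the subscript $l_{\mathcal{G}}$ in $\mathcal{C}_{l_{\mathcal{G}}}(\varepsilon_1,\mathcal{F})$ must be unpacked so that closeness of $f$ and $f_j$ yields closeness of $l_g\circ f$ and $l_g\circ f_j$ \emph{uniformly} in $g$; this is the role of the adapted pseudometric and the reason the naive capacity $\mathcal{C}(\varepsilon_1,\mathcal{F})$ would be insufficient. The second compatibility issue, matching the measure for the $g$-term to the pushforward $f_j\#P$, is resolved by the worst-case, supremum-over-distributions definition of capacity. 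I would check these two points against the precise definitions in \cite{Baxter2019a}; with them in hand the lemma follows directly.
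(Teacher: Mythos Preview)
The paper does not actually prove this lemma; it simply quotes it from \cite{Baxter2019a}, so there is no in-paper argument to compare against. Your outline is the standard covering-number argument and is essentially correct.

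One point needs tightening. The supremum-over-measures definition of $\mathcal{C}(\varepsilon_2,l_{\mathcal{G}})$ does \emph{not} guarantee a single cover $\{l_{g_k}\}$ that is simultaneously $\varepsilon_2$-good under every pushforward $f_j\#P$; it only guarantees that for each fixed measure there exists some cover of that cardinality. The fix is straightforward and leaves the cardinality bound intact: after fixing $P$ and choosing the $\varepsilon_1$-cover $\{f_1,\dots,f_{N_1}\}$ of $\mathcal{F}$ in $d_{[P,l_{\mathcal{G}}]}$, select for each $j$ separately an $\varepsilon_2$-cover $\{l_{g_{j,1}},\dots,l_{g_{j,N_2^{(j)}}}\}$ of $l_{\mathcal{G}}$ with respect to $d_{f_j\#P}$, where $N_2^{(j)}\le \mathcal{C}(\varepsilon_2,l_{\mathcal{G}})$. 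The family $\{l_{g_{j,k}}\circ f_j : 1\le j\le N_1,\ 1\le k\le N_2^{(j)}\}$ still has at most $N_1\,\mathcal{C}(\varepsilon_2,l_{\mathcal{G}})$ elements, and your triangle-inequality split goes through verbatim. With that adjustment the argument is complete.
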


\begin{lemma}\label{lemma-2}
For the function space $\mathcal{H}=\mathcal{G}\circ\mathcal{F}$ with $X\xmapsto{\mathcal{F}} V \xmapsto{\mathcal{G}} A$, $\mathcal{F}\subset \mathcal{F}_1\times\cdots \times \mathcal{F}_n$ and $\mathcal{G}\subset \mathcal{G}_1\times \cdots \mathcal{G}_n$,
\begin{eqnarray}
\mathcal{C}_{l_{\mathcal{G}}}(\varepsilon,\mathcal{F}) &\leq& \prod_{i=1}^n \mathcal{C}_{l_{\mathcal{G}_i}}(\varepsilon,\mathcal{F}_i), \label{Eq-Capacity-Product}\\
\mathcal{C}(\varepsilon, l_{\mathcal{G}})&\leq& \prod_{i=1}^n\mathcal{C}(\varepsilon,l_{\mathcal{G}_i}).
\end{eqnarray}
\end{lemma}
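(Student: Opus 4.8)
The plan is to prove both capacity inequalities by the same mechanism: unfold the definition of $\varepsilon$-capacity as a supremum of covering numbers with respect to a sample-induced pseudometric, then construct a cover of the product space as the Cartesian product of component covers and verify that it is a valid $\varepsilon$-cover. Recall that for a class $\mathcal{H}$ the capacity $\mathcal{C}(\varepsilon,\mathcal{H})$, and likewise the loss-weighted version $\mathcal{C}_{l_{\mathcal{G}}}(\varepsilon,\mathcal{F})$, is the supremum over finite sample sequences of the $\varepsilon$-covering number of $\mathcal{H}$ in the $L^1$-type pseudometric determined by that sample. The crucial structural input is that every object here carries an additive decomposition over the $n$ coordinates: a predictor $g=(g_1,\dots,g_n)\in\mathcal{G}\subset\mathcal{G}_1\times\cdots\times\mathcal{G}_n$ induces the loss $l_g(\vec v,\vec y)=\frac1n\sum_{i=1}^n l_{g_i}(v_i,y_i)$, which is exactly the $\bigoplus$ structure of the preceding definition, and a representation $f=(f_1,\dots,f_n)\in\mathcal{F}\subset\mathcal{F}_1\times\cdots\times\mathcal{F}_n$ acts coordinatewise.

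For the second inequality I would first fix an arbitrary joint sample and, for each coordinate $i$, choose a minimal $\varepsilon$-cover of $l_{\mathcal{G}_i}$ in the induced $i$-th component pseudometric, of size at most $\mathcal{C}(\varepsilon,l_{\mathcal{G}_i})$. I then form the product cover consisting of all tuples of chosen cover elements. Given any $l_g$ with $g=(g_1,\dots,g_n)$, I pick in each coordinate a cover element $l_{g_i'}$ within $\varepsilon$ of $l_{g_i}$; by the triangle inequality and the averaging structure, $|l_g(\vec v,\vec y)-l_{g'}(\vec v,\vec y)|\le\frac1n\sum_i|l_{g_i}(v_i,y_i)-l_{g_i'}(v_i,y_i)|$ pointwise, so the sample-averaged distance between $l_g$ and the product element $l_{g'}$ is at most $\frac1n\sum_i\varepsilon=\varepsilon$. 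Hence the product is a genuine $\varepsilon$-cover of $l_{\mathcal{G}}$ of size $\prod_i\mathcal{C}(\varepsilon,l_{\mathcal{G}_i})$, and taking the supremum over joint samples yields the claim.

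The first inequality follows the same template but with the loss-weighted pseudometric on the representation space, in which the distance between $f$ and $f'$ is a supremum over $g\in\mathcal{G}$ of the sample-averaged discrepancy of $l_{g\circ f}$ and $l_{g\circ f'}$. Here I would use that, because $g$ ranges over a subset of $\mathcal{G}_1\times\cdots\times\mathcal{G}_n$, the supremum of the averaged coordinatewise sum is bounded by the average of the coordinatewise suprema: if each $f_i$ is $\varepsilon$-close to $f_i'$ in the $i$-th loss-weighted pseudometric (supremum over $g_i\in\mathcal{G}_i$), then $f=(f_1,\dots,f_n)$ is $\varepsilon$-close to $f'$ jointly. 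The product of the coordinatewise covers is therefore again an $\varepsilon$-cover, of size $\prod_i\mathcal{C}_{l_{\mathcal{G}_i}}(\varepsilon,\mathcal{F}_i)$.

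In both cases the final bookkeeping step is to decouple the supremum over joint samples: for a fixed joint sample the covering number is at most the product of the coordinatewise covering numbers on the projected samples, each of which is in turn at most the corresponding coordinate capacity; since this bound is uniform in the joint sample, the supremum over joint samples is preserved. I expect the main obstacle to be precisely this interchange of $\sup$ and $\prod$ together with the inner $\sup$ over $g$ appearing in the loss-weighted pseudometric of the first inequality — it is exactly the product structure $\mathcal{G}\subset\mathcal{G}_1\times\cdots\times\mathcal{G}_n$ that licenses bounding a supremum of a coordinatewise sum by the sum of coordinatewise suprema, and care is needed to confirm that passing to a subset rather than the full product of the $\mathcal{G}_i$ only strengthens the inequality in the required direction.
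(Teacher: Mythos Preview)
Your argument is correct and follows the standard product-cover construction that underlies this result in Baxter's work. Note, however, that the paper does not actually supply its own proof of this lemma: it explicitly states that the lemmas in the appendix are taken from \cite{Baxter2019a} and refers the reader there for proofs and definitions. So there is no paper proof to compare against beyond the citation; your sketch is exactly the mechanism Baxter uses, namely building an $\varepsilon$-cover of the joint class as a Cartesian product of coordinatewise covers and exploiting the additive $\bigoplus$ structure of the loss to control the averaged pseudometric by the coordinate pseudometrics.
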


\begin{lemma}\label{lemma-3}
For $\bar{\mathcal{G}}:=\{(g(x_1), \cdots, g(x_n))\mid g\in \mathcal{G}\}\subset \mathcal{G}^n$ we have
\begin{equation}
\mathcal{C}(\varepsilon,l_{\bar{\mathcal{G}}})\leq \mathcal{C}(\varepsilon, l_{\mathcal{G}})
\end{equation}
\begin{proof}
By definition we have
\[
\mathcal{C}(\varepsilon, l_{\mathcal{G}}):=\sup_{P\in \mathcal{P}_{\mathcal{G}}}\mathcal{N}(\varepsilon,\mathcal{G}, d_P )
\]
and analogously 
\[
\mathcal{C}(\varepsilon, l_{\bar{\mathcal{G}}}):=\sup_{\vec{P}\in \mathcal{P}_{\bar{\mathcal{G}}}}\mathcal{N}(\varepsilon,\bar{\mathcal{G}}, d_{\vec{P}} ).
\]
For $\vec{P}=P_1\times\cdots \times P_n\in \mathcal{P}_{\bar{\mathcal{G}}}$ define $\bar{P}=\frac{1}{n}\sum_{i=1}^n P_i$. 
We show that 
$
\{\bar{g}\mid g \in \mathcal{N}(\varepsilon,\mathcal{G}, d_{\bar{P}} )\}
$ is an $\varepsilon$-cover for $\bar{\mathcal{G}}$.

For $\bar{g}=(g(x_1),\cdots ,g(x_n)) \in \bar{\mathcal{G}}$ fixed we select $\tilde{g} \in \mathcal{N}(\varepsilon,\mathcal{G}, d_{\bar{P}} )$ such that 
$
d_{[\bar{P}, l_{\mathcal{G}}]}(\tilde{g}, g) \leq \varepsilon$ and get
\begin{eqnarray*}
d_{[\vec{P},l_{\bar{\mathcal{G}}}]}(\bar{g}, \bar{\tilde{g}}) &=& \frac{1}{n} \sum_{i=1}^n d_{[P_i, l_{\mathcal{G}}]}(g,\tilde{g})\\
&=& d_{[\bar{P}, l_{\mathcal{G}}]}(g,\tilde{g})\\
&\leq& \varepsilon.
\end{eqnarray*}
\end{proof}

\end{lemma}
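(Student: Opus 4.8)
The plan is to exploit the fact that $\bar{\mathcal G}$ is just $\mathcal G$ embedded diagonally into $\mathcal G^n$: every element of $\bar{\mathcal G}$ arises from a \emph{single} network $g\in\mathcal G$ (the PNN weights $\theta$ are shared across tasks, so the per-task variation sits in $\mathcal F^n$ rather than in $\bar{\mathcal G}$), evaluated on all $n$ coordinates. Hence covering $\bar{\mathcal G}$ should be no harder than covering $\mathcal G$. Concretely, I would show that the pseudometric governing the $\varepsilon$-capacity of $l_{\bar{\mathcal G}}$ on $Z^n$ collapses, under the diagonal identification $g\mapsto\bar g:=(g(x_1),\dots,g(x_n))$, to one of the same type for $l_{\mathcal G}$ on $Z$, and then transport $\varepsilon$-covers across this identification.

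First I would unwind the definitions, $\mathcal C(\varepsilon,l_{\mathcal G})=\sup_P\mathcal N(\varepsilon,\mathcal G,d_{[P,l_{\mathcal G}]})$ and $\mathcal C(\varepsilon,l_{\bar{\mathcal G}})=\sup_{\vec P}\mathcal N(\varepsilon,\bar{\mathcal G},d_{[\vec P,l_{\bar{\mathcal G}}]})$, with $d_{[P,l_{\mathcal G}]}(g,h)=\int_Z|l_g-l_h|\,dP$ and $\vec P=P_1\times\cdots\times P_n$. Fixing such a $\vec P$, I would set $\bar P:=\frac1n\sum_{i=1}^n P_i$, a probability measure on $Z$. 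Since the loss of a diagonal tuple is the per-task average $l_{\bar g}(\vec z)=\frac1n\sum_i l_g(z_i)$ — the $\bigoplus$-structure underlying (\ref{empirical-loss_mt}) and (\ref{true-loss_mt}) — the triangle inequality together with Fubini for the product measure gives, for all $g,h\in\mathcal G$,
\[
d_{[\vec P,l_{\bar{\mathcal G}}]}(\bar g,\bar h)\;\le\;\frac1n\sum_{i=1}^n d_{[P_i,l_{\mathcal G}]}(g,h)\;=\;d_{[\bar P,l_{\mathcal G}]}(g,h),
\]
the final equality by linearity of the integral in the measure. So the surjective map $g\mapsto\bar g$ is $1$-Lipschitz from $(\mathcal G,d_{[\bar P,l_{\mathcal G}]})$ onto $(\bar{\mathcal G},d_{[\vec P,l_{\bar{\mathcal G}}]})$.

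From here the conclusion is immediate: the image of any $\varepsilon$-cover of $\mathcal G$ in $d_{[\bar P,l_{\mathcal G}]}$ is an $\varepsilon$-cover of $\bar{\mathcal G}$ in $d_{[\vec P,l_{\bar{\mathcal G}}]}$ of no larger cardinality, whence $\mathcal N(\varepsilon,\bar{\mathcal G},d_{[\vec P,l_{\bar{\mathcal G}}]})\le\mathcal N(\varepsilon,\mathcal G,d_{[\bar P,l_{\mathcal G}]})\le\mathcal C(\varepsilon,l_{\mathcal G})$, and taking the supremum over all product measures $\vec P$ gives $\mathcal C(\varepsilon,l_{\bar{\mathcal G}})\le\mathcal C(\varepsilon,l_{\mathcal G})$. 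This is not a deep argument; the step requiring the most care is the metric comparison itself — verifying that the diagonal identification really does turn $d_{[\vec P,l_{\bar{\mathcal G}}]}$ into a bound by $d_{[\bar P,l_{\mathcal G}]}$, noting that the triangle-inequality step yields an inequality rather than an equality, and checking the bookkeeping that $\bar P$ is admissible in the supremum defining $\mathcal C(\varepsilon,l_{\mathcal G})$ and that the permissibility and pseudometric conventions of \cite{Baxter2019a} are respected so that all covering numbers are well posed.
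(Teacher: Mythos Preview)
Your proposal is correct and follows essentially the same route as the paper: fix a product measure $\vec P$, form the averaged measure $\bar P=\tfrac1n\sum_i P_i$, show that the diagonal map $g\mapsto\bar g$ carries an $\varepsilon$-cover of $(\mathcal G,d_{[\bar P,l_{\mathcal G}]})$ to an $\varepsilon$-cover of $(\bar{\mathcal G},d_{[\vec P,l_{\bar{\mathcal G}}]})$, and take the supremum. The only cosmetic difference is that the paper writes the first step of the metric comparison as an equality while you (more cautiously) obtain it via the triangle inequality as an inequality; either suffices for the covering-number bound.
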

\begin{theorem}\label{theo-1}
(C.10) For the structure 
\[ 
X^n\xmapsto{\mathcal{F}^n}V^n\xmapsto{\bar{\mathcal{G}}}A^n
\]
a loss function $l:Y\mapsto [0,M]$, and all $\varepsilon, \varepsilon_1,\varepsilon_2>0$ such that $\varepsilon=\varepsilon _1+\varepsilon_2$,
\[
    \mathcal{C}(\varepsilon, l_{\bar{\mathcal{G}}\circ\mathcal{F}^n}) \leq C(\varepsilon_1,l_{\mathcal{G}}) \mathcal{C}(\varepsilon_2,\mathcal{F})^n
\]
\end{theorem}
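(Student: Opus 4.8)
The statement is obtained purely by chaining Lemmas \ref{lemma-1}, \ref{lemma-2} and \ref{lemma-3}, so the plan is essentially bookkeeping. I would view $l_{\bar{\mathcal{G}}\circ\mathcal{F}^n}$ as the loss class of the two-stage composition $X^n\xmapsto{\mathcal{F}^n}V^n\xmapsto{\bar{\mathcal{G}}}A^n$, with ``inner'' map $\mathcal{F}^n$ and ``outer'' map $\bar{\mathcal{G}}$, and split the tolerance as $\varepsilon=\varepsilon_2+\varepsilon_1$, assigning $\varepsilon_2$ to the inner stage and $\varepsilon_1$ to the outer one.

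First I would invoke Lemma \ref{lemma-1} for this composition to get
\[
\mathcal{C}(\varepsilon, l_{\bar{\mathcal{G}}\circ\mathcal{F}^n}) \leq \mathcal{C}_{l_{\bar{\mathcal{G}}}}(\varepsilon_2, \mathcal{F}^n)\,\mathcal{C}(\varepsilon_1, l_{\bar{\mathcal{G}}}).
\]
For the second factor, note that $\bar{\mathcal{G}}\subset\mathcal{G}^n$ is exactly the diagonal product set treated in Lemma \ref{lemma-3}, which therefore yields $\mathcal{C}(\varepsilon_1, l_{\bar{\mathcal{G}}})\leq \mathcal{C}(\varepsilon_1, l_{\mathcal{G}})$. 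For the first factor, observe that $\mathcal{F}^n\subset\mathcal{F}\times\cdots\times\mathcal{F}$ ($n$ copies) and $\bar{\mathcal{G}}\subset\mathcal{G}\times\cdots\times\mathcal{G}$, so the product-capacity bound (\ref{Eq-Capacity-Product}) of Lemma \ref{lemma-2} applies with all coordinate classes $\mathcal{F}_i$ equal to $\mathcal{F}$ and all $\mathcal{G}_i$ equal to $\mathcal{G}$, giving
\[
\mathcal{C}_{l_{\bar{\mathcal{G}}}}(\varepsilon_2, \mathcal{F}^n)\leq\prod_{i=1}^n\mathcal{C}_{l_{\mathcal{G}}}(\varepsilon_2,\mathcal{F})=\mathcal{C}(\varepsilon_2,\mathcal{F})^n,
\]
where the last equality just records the abbreviation $\mathcal{C}(\varepsilon_2,\mathcal{F})$ used in the statement for $\mathcal{C}_{l_{\mathcal{G}}}(\varepsilon_2,\mathcal{F})$. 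Multiplying the two bounds proves the theorem.

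There is no genuine analytic difficulty here beyond what is already contained in Lemmas \ref{lemma-1}--\ref{lemma-3}; the only steps needing care are verifying the structural hypotheses and keeping the subscripts straight. Concretely, I would check that $\mathcal{F}^n$ and $\bar{\mathcal{G}}$ genuinely meet the ``$\mathcal{F}\subset\mathcal{F}_1\times\cdots\times\mathcal{F}_n$'' and ``$\mathcal{G}\subset\mathcal{G}_1\times\cdots\times\mathcal{G}_n$'' hypotheses of Lemma \ref{lemma-2} with identical factors, and that the $l_{\bar{\mathcal{G}}}$-induced pseudo-metric on $V^n$ (hence on $\mathcal{F}^n$) decomposes as the average of the coordinatewise $l_{\mathcal{G}}$-induced pseudo-metrics, which is precisely what legitimizes passing from $\mathcal{C}_{l_{\bar{\mathcal{G}}}}$ to the product of the $\mathcal{C}_{l_{\mathcal{G}}}$ factors and is exactly the point handled in the proof of Lemma \ref{lemma-2} in \cite{Baxter2019a}. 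Matching the inner/outer $\varepsilon$-split of Lemma \ref{lemma-1} to the $\varepsilon_1,\varepsilon_2$ roles in the target inequality is the one bit of bookkeeping that is easy to get backwards, so I would pin that orientation down first.
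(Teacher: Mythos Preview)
Your proposal is correct and follows essentially the same route as the paper: apply Lemma~\ref{lemma-1} to split the capacity of the composition, then use Lemma~\ref{lemma-2} (equation~(\ref{Eq-Capacity-Product})) to factor $\mathcal{C}_{l_{\bar{\mathcal{G}}}}(\varepsilon_2,\mathcal{F}^n)$ into $\mathcal{C}_{l_{\mathcal{G}}}(\varepsilon_2,\mathcal{F})^n$. Your write-up is in fact slightly more careful than the paper's own proof, which only cites (\ref{Eq-Capacity-Composition}) and (\ref{Eq-Capacity-Product}) and leaves the passage from $\mathcal{C}(\varepsilon_1,l_{\bar{\mathcal{G}}})$ to $\mathcal{C}(\varepsilon_1,l_{\mathcal{G}})$ via Lemma~\ref{lemma-3} implicit.
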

\begin{proof}
\begin{eqnarray*}
\mathcal{C}(\varepsilon, l_{\bar{\bar{\mathcal{G}}}\circ \mathcal{F}^n}) &\leq&
    \mathcal{C}(\varepsilon_1,l_{\bar{\mathcal{G}}})\mathcal{C}_{l_{\bar{\mathcal{G}}}} (\varepsilon_2, \mathcal{F}^n) \mbox{ (using (\ref{Eq-Capacity-Composition}}))\\
    &\leq& 
    \mathcal{C}(\varepsilon_1,l_{\bar{\mathcal{G}}})
    \mathcal{C}_{l_{\mathcal{G}}} (\varepsilon_2, \mathcal{F})^n
    \mbox{ (using (\ref{Eq-Capacity-Product}))}
\end{eqnarray*}
\end{proof}
With the results above the proof of theorem \ref{Generalization-Property} that we restate in the following for the sake of completeness is now straightforward.

\begin{theorem} Let $\nu>0$, $0<\alpha<1$, be fixed and $\varepsilon_1, \varepsilon_2 > 0$ such that $\varepsilon_1+\varepsilon_2=\frac{\alpha\nu}{8}$. 
For $0<\delta<1$ and the structure 
\[ 
X^n\xmapsto{\mathcal{F}^n}V^n\xmapsto{\bar{\mathcal{G}}}A^n
\] and $\mathbf{z} \in Z^{(m,n)}$ be generated by 
%$m>\frac{2M}{\alpha^2\nu}$ 
$m>\frac{8M}{\alpha^2\nu}\left[ \ln(\mathcal{C}(\varepsilon_1,\mathcal{F})) + \frac{1}{n}\ln \frac{4\mathcal{C}(\varepsilon_2,l_{\mathcal{G}})}{\delta} \right]$ independent samples
we have
\begin{equation}
    PR\left\{ z \in Z^{(m,n)}:\exists \bar{g}\circ\vec{f}\in \bar{\mathcal{G}}\circ\mathcal{F}^n: d_{\nu}(\langle l_{\bar{g}\circ \vec{f}}\rangle_{\mathbf{z}}, \langle l_{\bar{g}\circ \vec{f}}\rangle_{\vec{P}}) > \alpha\right\}\leq \delta
\end{equation}
\end{theorem}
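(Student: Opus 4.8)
The plan is to derive the bound by feeding the multi-task loss class into the uniform-convergence estimate of Theorem \ref{theorem-1} and then controlling its capacity with the decomposition of Theorem \ref{theo-1}, exactly in the spirit of \cite{Baxter2019a}. The first step is to notice that, by the very definitions (\ref{empirical-loss_mt}) and (\ref{true-loss_mt}), both the empirical and the true multi-task losses of a PNN configuration are plain averages over the $n$ tasks. Hence the loss class $l_{\bar{\mathcal G}\circ\mathcal F^n}$ is a subset of an additive class $\bigoplus_{i=1}^n\mathcal H_i$ of $[0,M]$-valued functions on $Z^n$, where each $\mathcal H_i$ is a copy of the single-task loss class $l_{\mathcal G\circ\mathcal F}$. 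Crucially it is a \emph{proper} subset: a configuration of $\bar{\mathcal G}\circ\mathcal F^n$ is the datum $(\theta,p_1,\dots,p_n)$ with the network weights $\theta$ shared across all coordinates, so only the ``diagonal'' in the $\theta$-direction occurs. Assuming, as in \cite{Baxter2019a}, that this class is permissible, Theorem \ref{theorem-1} applied with $\mathcal H=l_{\bar{\mathcal G}\circ\mathcal F^n}$ gives, for $m>\tfrac{2M}{\alpha^2\nu}$,
\[
{\rm Pr}\bigl\{\exists\,\bar g\circ\vec f:\;d_\nu(\langle l_{\bar g\circ\vec f}\rangle_{\vec z},\langle l_{\bar g\circ\vec f}\rangle_{\vec P})>\alpha\bigr\}\;\le\;4\,\mathcal C\!\left(\tfrac{\alpha\nu}{8},\,l_{\bar{\mathcal G}\circ\mathcal F^n}\right)e^{-\frac{\alpha^2\nu n m}{8M}}.
\]

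The second step is to bound the capacity $\mathcal C(\tfrac{\alpha\nu}{8},l_{\bar{\mathcal G}\circ\mathcal F^n})$ using Theorem \ref{theo-1}. Splitting the accuracy as $\tfrac{\alpha\nu}{8}=\varepsilon_1+\varepsilon_2$ (writing the decomposition so that $\varepsilon_1$ controls $\mathcal F$ and $\varepsilon_2$ controls $l_{\mathcal G}$), Theorem \ref{theo-1} yields
\[
\mathcal C\!\left(\tfrac{\alpha\nu}{8},\,l_{\bar{\mathcal G}\circ\mathcal F^n}\right)\;\le\;\mathcal C(\varepsilon_2,l_{\mathcal G})\,\mathcal C(\varepsilon_1,\mathcal F)^n .
\]
The point here, and the reason multi-task learning helps, is that the shared network $\mathcal G$ contributes only the single factor $\mathcal C(\varepsilon_2,l_{\mathcal G})$ — this is where Lemma \ref{lemma-3} collapses what would naively be an $n$-th power into one copy, because of the diagonal structure noted above — whereas only the low-dimensional task-specific embedding family $\mathcal F$ is raised to the $n$-th power (via Lemma \ref{lemma-2}); Lemma \ref{lemma-1} is what chains the embedding layer $\mathcal F^n$ with the shared-network layer $\bar{\mathcal G}$ in the composition $\bar{\mathcal G}\circ\mathcal F^n$. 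Substituting into the probability estimate,
\[
{\rm Pr}\{\cdots\}\;\le\;4\,\mathcal C(\varepsilon_2,l_{\mathcal G})\,\mathcal C(\varepsilon_1,\mathcal F)^n\,e^{-\frac{\alpha^2\nu n m}{8M}}.
\]

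It then remains to choose $m$ so that the right-hand side is at most $\delta$. Taking logarithms, $4\,\mathcal C(\varepsilon_2,l_{\mathcal G})\mathcal C(\varepsilon_1,\mathcal F)^n e^{-\alpha^2\nu n m/(8M)}\le\delta$ is equivalent to $\frac{\alpha^2\nu n m}{8M}\ge n\ln\mathcal C(\varepsilon_1,\mathcal F)+\ln\frac{4\mathcal C(\varepsilon_2,l_{\mathcal G})}{\delta}$, i.e. to
\[
m\;\ge\;\frac{8M}{\alpha^2\nu}\left[\ln\mathcal C(\varepsilon_1,\mathcal F)+\frac1n\ln\frac{4\mathcal C(\varepsilon_2,l_{\mathcal G})}{\delta}\right],
\]
which is exactly the hypothesis on $m$; dividing the network-complexity term by $n$ is precisely the gain from pooling $n$ tasks. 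One should also record the mild side condition $m>\tfrac{2M}{\alpha^2\nu}$ required by Theorem \ref{theorem-1}: it is implied by the displayed lower bound whenever the bracket exceeds $1/4$, which holds for any nondegenerate embedding and network family (the first term being $\ge 0$ and the $\tfrac1n\ln(4/\delta)$ part strictly positive), so this is only a bookkeeping remark.

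I do not expect a genuine obstacle here, since all the analytic work — the uniform law of large numbers in the metrics $d_\nu$ (Theorem \ref{theorem-1}) and the capacity decomposition (Theorem \ref{theo-1}, resting on Lemmas \ref{lemma-1}–\ref{lemma-3}) — has already been recorded. The one step I would be most careful about is the very first reduction: verifying that $l_{\bar{\mathcal G}\circ\mathcal F^n}$ genuinely meets the hypotheses of Theorem \ref{theorem-1} — that it is a \emph{permissible} subset of an additive class $\bigoplus_{i=1}^n\mathcal H_i$ of $[0,M]$-valued functions (needing $l$ bounded by $M$ and the usual measurability conditions inherited from $\mathcal G$ and $\mathcal F$), that the $n$ task datasets are drawn from a product measure $\vec P=\prod_i P_i$ as required there, and that the identification of the per-task loss classes $\mathcal H_i$ with $l_{\mathcal G_i\circ\mathcal F_i}$ is the one under which Theorem \ref{theo-1} was proved, so that the capacity estimate can be plugged in without a gap.
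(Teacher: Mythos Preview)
Your proof is correct and follows exactly the paper's own approach: apply Theorem~\ref{theorem-1} to the additive loss class $l_{\bar{\mathcal G}\circ\mathcal F^n}$, bound the resulting capacity via Theorem~\ref{theo-1}, and then solve the inequality $4\,\mathcal C(\varepsilon_2,l_{\mathcal G})\,\mathcal C(\varepsilon_1,\mathcal F)^n e^{-\alpha^2\nu nm/(8M)}\le\delta$ for $m$. Your write-up is in fact more careful than the paper's, explicitly flagging the permissibility hypothesis, the product-measure requirement, and the side condition $m>2M/(\alpha^2\nu)$.
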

\begin{proof}
Using theorem \ref{theorem-1} we have
\[
PR\left\{ \mathbf{z}\in Z^{(m,n)}:\exists \bar{g}\circ\vec{f}\in \bar{\mathcal{G}}\circ\mathcal{F}^n: d_{\nu}(\langle l_{\bar{g}\circ \vec{f}}\rangle_{\mathbf{z}}, \langle l_{\bar{g}\circ \vec{f}}\rangle_{\vec{P}}) \right\} \leq  4\mathcal{C}(\alpha\nu/8,l_{\bar{\bar{\mathcal{G}}}\circ \mathcal{F}^n})e^{-\frac{\alpha^2\nu n m}{8M}}
\]
and using theorem \ref{theo-1} we obtain
\begin{eqnarray}
    4\mathcal{C}(\alpha\nu/8,l_{\bar{\bar{\mathcal{G}}}\circ \mathcal{F}^n}) & \leq & 4 \mathcal{C}_{l_\mathcal{G}}(\varepsilon_1, \mathcal{F})^n  \mathcal{C}(\varepsilon_2, l_{\mathcal{G}}) 
\end{eqnarray}
and simple calculation proves the statement.
%\begin{equation}
%\begin{aligned}
%    \Leftrightarrow && -\frac{\alpha^2\nu n m}{8M}& \leq &\ln \frac{\delta}{4\mathcal{C}(\alpha\nu/8, %l_{\bar{\bar{\mathcal{G}}}\circ \mathcal{F}^n})}\\
%    \Leftrightarrow && m &\geq& \frac{8M}{\alpha^2\nu n}\ln\frac{4\mathcal{C}(\alpha\nu/8, %l_{\bar{\bar{\mathcal{G}}}\circ \mathcal{F}^n})}{\delta}.
%\end{aligned}
%\end{equation}
\end{proof}

%\emph{Definition} For $X\xmapsto{\mathcal{F}}V\xmapsto{\mathcal{G}}A$ %and for any probability measure $P\in\mathcal{P}_{l_{\mathcal{G}\circ %\mathcal{F}}}$ define 
%\[
%    d^\star_{[P,l_\mathcal{F}]} = \int_Z %\sup_{f\in\mathcal{F}}|l_{g\circ f}(z)-l_{g'\circ f}| dP(z)
%\]
%\emph{Lemma C.8.neu} Given the structure %$X^n\xmapsto{\mathcal{F}^n}V^n\xmapsto{\bar{\mathcal{G}}}A^n$ and any %product probability measure 
%\[
%    \vec{P}=P_1\times ... \times P_n \in %\mathcal{P}_{l_{\bar{\mathcal{G} }\circ \mathcal{F}^n}}
%\]
%\[
%    d_{[\vec{P}, l_{\mathcal{F}^n}]}(\bar g,\bar{g'}) \leq %d^\star_{[P,l_\mathcal{F}]}(g,g')
%\]
%\emph{Proof}
%\begin{eqnarray*}
%    d_{[\vec{P}, l_{\mathcal{F}^n}]}(\bar g,\bar{g'}) &\leq&
%\end{eqnarray*}

\bibliography{literature}
\bibliographystyle{ieeetr}
\end{document}